\def\RR{{\mathbb R}}
\def\NN{{\mathbb N}}
\def\RRR{{\mathcal R}}
\def\GG{{\mathbb G}}
\def\e{{\mathrm e}}
\def\BBB{{\mathcal B}}
\DeclareMathOperator{\dtw}{d_{DTW}}
\DeclareMathOperator{\cost}{cost}
\DeclareMathOperator*{\argmin}{arg\,min}
\DeclareMathOperator{\pexpected}{E}
\def\mean{mean\xspace}
\def\Mean{Mean\xspace}
\let\epsilon\relax
\newcommand{\epsilon}{\varepsilon}
\title{Approximating Length-Restricted Means under Dynamic Time Warping
} 
\author{Maike Buchin}{Faculty of Informatics, Ruhr University Bochum,   Germany}{maike.buchin@rub.de}{https://orcid.org/0000-0002-3446-4343}{}
\author{Anne Driemel}{Hausdorff Center for Mathematics, Germany \and  University of Bonn, Germany}{driemel@cs.uni-bonn.de}{}{}
\author{Koen van Greevenbroek}{Department of Computer Science, UiT The Arctic University of Norway, Tromsø, Norway}{koen.v.greevenbroek@uit.no}{https://orcid.org/0000-0002-6105-2846}{}
\author{Ioannis Psarros}{Athena Research Center,  Greece.  
}{ipsarros@di.uoa.gr}{https://orcid.org/0000-0002-5079-5003}{The author was partially supported by the European Union's Horizon 2020 Research and Innovation programme, under the grant agreement No. 957345: “MORE”. Part of this work was done while the author was a postdoctoral researcher at the Hausdorff Center for Mathematics and the  University of Bonn, Germany.}
\author{Dennis Rohde}{Faculty of Informatics, Ruhr University Bochum,  Germany}{dennis.rohde-t1b@rub.de}{https://orcid.org/0000-0001-8984-1962}{}
\authorrunning{M. Buchin and A. Driemel and K. van Greevenbroek and I. Psarros and D. Rohde} 
\keywords{Dynamic Time Warping, Clustering, Time Series Averaging, Sample \Mean} 
\begin{document}
\setlength{\parindent}{0pt}

\maketitle

\begin{abstract}
    We study variants of the \mean problem under the $p$-Dynamic Time Warping ($p$-DTW) distance, a popular and robust distance measure for sequential data. In our setting we are given a set of finite point sequences over an arbitrary metric space and we want to compute a \mean point sequence of given length that minimizes the sum of $p$-DTW distances, each raised to the $q$\textsuperscript{th} power, between the input sequences and the \mean sequence. In general, the problem is $\mathrm{NP}$-hard and known not to be fixed-parameter tractable in the number of sequences. 
    On the positive side, we show that restricting the length of the \mean sequence significantly reduces the hardness of the problem. We give an exact algorithm running in polynomial time for constant-length \mean{s}. We explore various approximation algorithms that provide a trade-off between the approximation factor and the running time. Our approximation algorithms have a running time with only linear dependency on the number of input sequences. In addition, we use our \mean algorithms to obtain clustering algorithms with theoretical guarantees.
\end{abstract}
\clearpage

\thispagestyle{plain}
\setcounter{page}{1}

\section{Introduction}

The $p$-Dynamic Time Warping distance -- in short $p$-DTW -- is a popular distance measure for temporal data sequences, which has been studied and applied extensively over the past decades. While it was first applied to speech recognition \cite{1163055}, it also showed effective for other kinds of sequential data and now there is a broad variety of applications in numerous domains, cf.~\cite{DBLP:conf/kdd/BerndtC94,731723,DBLP:conf/iccv/MunichP99,DBLP:journals/bioinformatics/AachC01,DBLP:conf/sigmod/ZhuS03,muda2010voice,DBLP:conf/chi/LucaHBLH12}. Its particular strength is the ability to handle differences in the length and in the temporal properties (e.g.~phase or sampling rates) of the data. Furthermore, it is not sensitive to outliers in the sequences, e.g.~from measurement errors or noise. It is based on monotonic alignments of the sequences, i.e., every element of the first sequence is paired to an element of the second sequence in a monotonic fashion (along the temporal axis). To compensate differences in length, samplings rates or in phase, several elements of the first sequence can be paired to a single element of the second sequence and vice versa. Such an alignment is called a warping and the $p$-DTW is the $p$\textsuperscript{th} root of the sum of distances, each raised to the $p$\textsuperscript{th} power, between all pairs of elements determined by an optimal warping, i.e., a warping that minimizes said quantity. The distances between elements are determined by an underlying space, which itself is determined by the application at hand. The $p$-DTW can be computed by a dynamic program with running time quadratic in the lengths of the given sequences, and it cannot be computed in strongly subquadratic time unless the Exponential Time Hypothesis is false \cite{DBLP:conf/focs/BringmannK15,DBLP:conf/focs/AbboudBW15}. Apart from its apparent benefits, $p$-DTW has the drawback that it is not a metric since it does not fulfill the identity of indiscernibles nor the triangle inequality. This rules out a wealth of techniques that were developed for proper metrics. 

In this work, we consider the problem of computing a \mean under $p$-DTW. Here, we are given a finite set of $n$ point sequences over an arbitrary metric space, each of complexity, i.e., the number of elements of the sequence, bounded by a number $m$ and we want to compute a point sequence (the \mean) that minimizes the sum of $p$-DTW distances, each raised to the $q$\textsuperscript{th} power, between the given sequences and the \mean.\footnote{For $q = 1$ this is an adaption of the Euclidean median and for $q = 2$ an adaption of the Euclidean mean.}  We call this problem the \emph{unrestricted} $(p,q)$-\mean problem. It is known to be $\mathrm{NP}$-hard \cite{BFN20,BDS20} and all known algorithms that solve it either suffer from exponential running time, only work for binary alphabets, or are of heuristic nature \cite{PKG11,BFFJNS18,SFN20}. 

We show that when we restrict the complexity of the \mean to be bounded by some constant~$\ell$ -- we call this the \emph{restricted} $(p,q)$-\mean problem -- the problem becomes tractable, i.e., there exist polynomial time approximation algorithms. 
This restriction also comes with a practical motivation, i.e., to suppress overfitting, see also the discussion in~\cite{DBLP:conf/gis/BrankovicBKNPW20}. 



\subsection{Related Work}

Among many practical approaches for the problem of computing a \mean, one very influential heuristic is the DTW Barycentric Average (DBA) method, as formalized by  Petitjean, Ketterlin and Gan{\c{c}}arski~\cite{PKG11}. The core idea behind DBA is a Lloyd's style ($k$-means) iterative strategy, which has been rediscovered many times for this problem in the past (see e.g.~\cite{RW79,ACS03,HNF08}). DBA iteratively improves the solution as follows: given a candidate average sequence $c=(c_1,\ldots,c_{\ell})$, it first computes the warpings between $c$ and all input sequences, and then given each set of input vertices $S_i$  matched with the same vertex $c_i$, it substitutes $c_i$ with the mean  of $S_i$. DBA has inspired many recent solutions that are successful in practice~\cite{SJ18,MAKD18,LZZ19,DKP20,O21}. However, it does not give any guarantees. Just like the $k$-means algorithm, it may even converge to a local optimum that is arbitrarily far from the global optimum in terms of the $(p,q)$-mean target function.

There are few results in the literature with formal guarantees on the running time or the quality of the solution. Brill et al.~\cite{BFFJNS18, DBLP:journals/datamine/BrillFFJNS19} presented an algorithm for solving the unrestricted $(2,2)$-\mean problem defined over $\mathbb{Q}$ with the Euclidean distance, with an asymptotic bound on the time complexity. Their algorithm is based on dynamic programming, and computes the (unrestricted) $(2,2)$-\mean, in time $O(m^{2n+1}2^n n)$. The algorithm can be slightly modified, to compute a restricted $(2,2)$-\mean. 
Brill et al.~\cite{BFFJNS18, DBLP:journals/datamine/BrillFFJNS19} also show that the unrestricted $(2,2)$-\mean problem defined over $\{0,1\}$ with the Euclidean distance can be solved in $O(nm^3)$ time. This was later improved by Schaar, Froese and Niedermeier~\cite{SFN20} to $O(nm^{1.87})$ time. 

All previous hardness results concern the exact computation of the $(p,q)$-\mean. Bulteau, Froese and Niedermeier~\cite{BFN20} proved that the $(2,2)$-\mean problem defined over $\mathbb{Q}$ with the Euclidean distance is $\mathrm{NP}$-hard and $\mathrm{W}[1]$-hard with the number of input sequences $n$ as the parameter. Moreover, they show that the problem cannot be solved in time $O(f(n))\cdot m^{o(n)}$ for any computable function $f$ unless the Exponential Time Hypothesis (ETH) fails. Buchin, Driemel and Struijs~\cite{BDS20} presented an alternative proof of the above statements, which more generally applies to the unrestricted $(p,q)$-\mean problem for any $p,q\in \NN$. 

\subsection{Overview of Results}

In this section we give an overview of our results.\footnote{An earlier version of this manuscript claimed hardness of approximation for the problem of computing the mean under dynamic time warping. However, the proof turned out to be flawed. We leave it as an open problem to show hardness of approximation for this problem. }
In Section~\ref{sec:exact_computation}, we present an exact algorithm with polynomial running time for the problem of computing a restricted $2$-\mean in Euclidean space. Our approach is based on a decomposition of the solution space by an arrangement of polynomial surfaces such that each cell corresponds to a set of \mean{}s with uniquely defined optimal warpings to all input sequences. The algorithm has running time  $(nm)^{2^{O(d\ell)}}$, where the input are $n$ sequences of $m$ points in $\RR^d$. Note that the running time is doubly-exponential in $\ell$ and $d$. In the remainder of the paper, our goal is to improve upon this with the help of approximation and randomization techniques. We will show that linear dependency $n$ and singly-exponential dependency on $\ell$ and $d$ are possible.


In \cref{section:constantfactor}, we present a randomized constant-factor approximation algorithm for the restricted $p$-\mean problem that works for sequences from any fixed metric space. As such, this result is applicable to the classical median problem under the $1$-DTW distance and the classical mean problem under the $2$-DTW distance. The main idea is to uniformly sample from the union of points of all given sequences, then enumerate all sequences of complexity $\ell$ from the sampled vertices and return the sequence with the lowest cost. We also show how to derandomize the algorithm.


In \cref{section:onepluseps} we present a $(1+\epsilon)$-approximation algorithm for the restricted $(p,1)$-\mean problem. This algorithm is based on an exhaustive search over a carefully constructed set of candidate \mean{s}. The crucial ingredients for this are presented in \cref{section:simpltrineq}: an efficient approximation algorithm for simplification under $p$-DTW and a weak triangle inequality for $p$-DTW.
The result holds for the important case of sequences that stem from a Euclidean space. 
A nice property of this result is that it provides a complete trade-off between the approximation factor and the running time. 

Finally, in Section~\ref{sec:clustering}, we briefly discuss an application of the newly developed techniques to the problem of clustering for $p$-DTW distances. In particular, we can use the random sampling techniques developed in  \cref{section:constantfactor,section:simpltrineq}, in combination with a known algorithmic scheme that reduces the computation of $k$-medians to a problem of computing $1$-median candidates for an arbitrary subset of the input set. 

\subsection{Preliminaries}

In the following $d \in \mathbb{N}$ is an arbitrary constant. For $n \in \mathbb{N}$ we define $[n] = \{1, \dots, n\}$. Let $\mathcal{X} = (X,\rho)$ be a metric space. For $x \in X$ and $r \in \mathbb{R}_{\geq 0}$ we denote by $B(x,r) = \{ y \in X \mid \rho(x,y) \leq r \}$ the ball of radius $r$ centered at $x$. We define sequences of points over $\mathcal{X}$.

\begin{definition}
    A point sequence over $\mathcal{X}$ is a tuple $(\sigma_1, \dots, \sigma_m) \in X^m$, where $m \in \mathbb{N}_{> 1}$ is called its complexity, denoted by $\lvert \sigma \rvert$, and $\sigma_1, \dots, \sigma_m$ are called its vertices.  
\end{definition}
By $X^\ast = \bigcup_{i=1}^\infty X^i$ we define the set of all point sequences over $\mathcal{X}$ and by $X^{\leq m} = \bigcup_{i=1}^{m} X^i$ we define the subset of point sequences of complexity at most $m$. The \emph{concatenation} of a point sequence $\pi=(\pi_1,\ldots,\pi_m)$ with a sequence $\tau=(\tau_1,\ldots,\tau_m)$ is denoted by $\pi \oplus \tau$ and is defined as the point sequence $(\pi_1,\ldots,\pi_m , \tau_1,\ldots,\tau_m)$. 
We define the $p$-Dynamic Time Warping distance.
\begin{definition}
    \label{def:pq_dtw}
    For $m_1, m_2 \in \mathbb{N}_{> 1}$, let $\mathcal{W}_{m_1,m_2}$ denote the set of all $(m_1, m_2)$-\emph{warpings}, that is, the set of all sequences $(i_1, j_1), \dots, (i_n, j_n)$ with
    \begin{itemize}
        \item $i_1 = j_1 = 1$, $i_n = m_1$, $j_n = m_2$ and
        \item $(i_{k} - i_{k-1}, j_{k} - j_{k-1}) \in \{(0,1), (1,0), (1,1)\}$ for each $k \in \{2, \dots, n\}$.
    \end{itemize}
    For $p \in [1, \infty)$ and two point sequences $\sigma = (\sigma_1, \dots, \sigma_{m_1}) \in X^{m_1}, \tau = (\tau_1, \dots, \tau_{m_2}) \in X^{m_2}$ the $p$-Dynamic Time Warping distance is defined as $$ \dtw_p(\sigma, \tau) = \min_{W \in \mathcal{W}_{m_1,m_2}} \left( \sum_{(i,j) \in W} \rho(\sigma_i, \tau_j)^p \right)^{1/p} . $$
\end{definition}

Here we assume that $\rho(\cdot,\cdot)$ can be evaluated in constant time.\footnote{This restriction is only for the sake of simplicity of presentation. Our results can be easily extended to metric spaces that do not have a constant-time distance oracle.} Let $\sigma = (\sigma_1, \dots, \sigma_{m_1}), \tau = (\tau_1, \dots, \tau_{m_2})$ be point sequences over $\mathcal{X}$ of complexity $m_1$ and $m_2$. We call a warping $W \in \argmin\limits_{W \in \mathcal{W}_{m_1, m_2}} \left( \sum_{(i,j) \in W} \rho(\sigma_i, \tau_j)^p \right)^{1/p}$ an optimal $p$-warping between $\sigma$ and $\tau$. 

\begin{definition}
    The restricted $(p, q)$-\mean problem is defined as follows, where $\ell \in \mathbb{N}_{>1}$ and $p,q \in [1, \infty)$ are fixed (constant) parameters of the problem: given a set $T = \{ \tau_1, \dots, \tau_n \} \subseteq X^{\leq m}$ of point sequences, compute a point sequence $c \in X^{\leq \ell}$, such that $ \cost_p^q(T,c) = \sum_{i=1}^n \dtw_p(c, \tau_i)^q$ is minimal.
\end{definition}

If $p$ is clear from the context, we drop it from our notation. By the \textit{unrestricted} $(p,q)$-\mean problem we define the problem that is similar to the restricted $(p,q)$-\mean problem with the only difference that we compute a \mean sequence $c \in X^\ast$. We mainly study the case that $p=q$. Therefore, we shorthand call the restricted, respectively unrestricted, $(p,p)$-\mean problem the restricted, respectively unrestricted, $p$-\mean problem. We emphasize that these problems are prevalent in the literature.

\section{Tractability of the Restricted \Mean Problem}

We study exact and approximation algorithms for the restricted $p$-\mean and $(p,1)$-\mean problems.

\subsection{Exact Computation of a Restricted $2$-\Mean in Euclidean Space}
\label{sec:exact_computation}
In the following, we make use of a simplified structure of the solution space, which holds in case $p=q$. This is captured in the notion of sections, which we define as follows.

\begin{definition}[sections]
    \label{def:section}
    Let $T = \{ \tau_1, \dots, \tau_n \} \subseteq X^{\leq m}$ be a set of point sequences and $c = (c_1, \dots, c_{\ell}) \in X^{\ell}$ be a point sequence. For $i \in [n]$ and $p \in [1, \infty)$, let $W_i$ be an optimal $p$-warping between $c$ and $\tau_i$. For $j \in [\ell]$ we define the $j$\textsuperscript{th} section of $c$ with respect to $T$ (and $W_1, \dots, W_n$) as follows: $ S_j(c, T, W_1, \dots, W_n) = \{ \tau_{i,k} \mid i \in [n], (j,k) \in W_i \}$,
    where $\tau_{i,k}$ is the $k$\textsuperscript{th} vertex of $\tau_i$.
\end{definition}

If $T$ is clear from the context, we omit it from the notation. Also, we will always omit $W_1, \dots, W_n$ from the notation, because the specific choice of optimal $p$-warpings is not of interest. We will then write $S_j^p(T,c)$ to clarify that the sections are defined with respect to optimal $p$-warpings. An immediate consequence of this definition is the following identity:
\[ \cost^p_p(T, c) = \sum_{j=1}^\ell \sum_{v \in S^p_j(c,T)} \rho(c_j, v)^p, \] 
where $\ell$ denotes the complexity of $c$.

A central observation is that the vertices of an optimal restricted $p$-\mean $c = (c_1, \dots, c_{\ell'})$ must minimize the sum of distances, each raised to the $p$\textsuperscript{th} power, to the vertices in their section, i.e., for all $j \in [\ell']$: $c_j \in \argmin\limits_{w \in X} \sum\limits_{v \in S^p_j(c,T)} \rho(w, v)^p$. Using this, we obtain the following result.

\begin{restatable}{theorem}{exactmeanalgorithm}
\label{theorem:exactcomputation}
    There exists an algorithm that, given a set $T \subset \left(\mathbb{Q}^d\right)^{\leq m}$ of $n$ point sequences, computes an optimal restricted $2$-\mean (defined over the Euclidean distance) in time $(nm)^{2^{O(d\ell)}}$.
\end{restatable}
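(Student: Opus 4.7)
The plan is to decompose the solution space $(\RR^d)^\ell \cong \RR^{d\ell}$ into regions on which the combinatorial structure of the problem is fixed, and then solve a closed-form optimization inside each region. The key structural fact driving this, stated just before the theorem, is that once the optimal warpings $W_1,\ldots,W_n$ between a candidate mean $c$ and the inputs $\tau_1,\ldots,\tau_n$ are fixed, the cost
\[
    \cost_2^2(T,c) \;=\; \sum_{j=1}^{\ell}\sum_{v \in S_j^2(c,T)} \lVert c_j - v\rVert^2
\]
is a convex quadratic polynomial in the coordinates of $c$, whose unique unconstrained minimum is attained by setting each $c_j$ to the Euclidean mean of its section.

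First, I identify the surfaces defining the decomposition. For each input $\tau_i$ of length $m_i \le m$, the set $\mathcal{W}_{\ell, m_i}$ of warpings has cardinality $m^{O(\ell)}$. For every ordered pair $W, W' \in \mathcal{W}_{\ell, m_i}$, the equation
\[
    \sum_{(j,k)\in W}\lVert c_j - \tau_{i,k}\rVert^2 \;=\; \sum_{(j,k)\in W'}\lVert c_j - \tau_{i,k}\rVert^2
\]
cuts out an algebraic surface of degree at most $2$ in $\RR^{d\ell}$ (the quadratic terms do not cancel in general, because $W$ and $W'$ may assign different multiplicities to the vertices of $c$). Collecting these over all $i \in [n]$ and all warping pairs yields a family $\Sigma$ of size $N = n \cdot m^{O(\ell)}$.

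Inside every full-dimensional open cell $C$ of the arrangement $\mathcal{A}(\Sigma)$, the combinatorial type of an optimal warping between $c$ and each $\tau_i$ is constant, so the cost function on $C$ agrees with a single convex quadratic $f_C$ whose unconstrained minimizer $c_C^\ast$ is the coordinate-wise Euclidean mean of the cell-determined sections, computable in closed form. I claim that minimising $\cost_2^2(T, c_C^\ast)$ over all full-dimensional cells $C$ yields a global optimum: for any global minimizer $c^\ast$, pick a full-dimensional $C$ with $c^\ast \in \overline{C}$ (one always exists), observe that continuity of both $\cost_2^2(T,\cdot)$ and $f_C$ gives $\cost_2^2(T,c^\ast) = f_C(c^\ast)$, and then
\[
    \cost_2^2(T, c_C^\ast) \;\leq\; f_C(c_C^\ast) \;\leq\; f_C(c^\ast) \;=\; \cost_2^2(T, c^\ast),
\]
forcing $c_C^\ast$ to be optimal as well. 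The algorithm therefore enumerates a representative point in every top-dimensional cell of $\mathcal{A}(\Sigma)$, uses the standard DTW dynamic program at that point to determine the induced warping tuple, computes $c_C^\ast$ and its cost, and returns the best; the entire procedure is repeated for each smaller mean complexity $\ell' \in \{2,\ldots,\ell-1\}$.

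The main obstacle is enumerating the cells of $\mathcal{A}(\Sigma)$ within the promised time budget. For this I will invoke standard real algebraic geometry machinery (e.g., cylindrical algebraic decomposition, or the Basu--Pollack--Roy sample-point algorithm), which produces a point in every cell of an arrangement of $N$ polynomials of constant degree in $\RR^D$ in time $(ND)^{O(D)}$. Substituting $D = d\ell$ and $N = n \cdot m^{O(\ell)}$ and absorbing lower-order factors into the exponent, the total running time is bounded by $(nm)^{2^{O(d\ell)}}$ as claimed.
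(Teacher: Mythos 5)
Your proposal is correct and follows essentially the same route as the paper: decompose $(\RR^d)^{\ell'}$ by the arrangement of the degree-two cost-comparison polynomials $P_{\tau_i,W,W'}$, use cylindrical algebraic decomposition (or a sample-point algorithm) to get a representative in each cell, recover the warping tuple by running the DTW DP at the sample, and take coordinate-wise section means. The only small refinement on your side is restricting attention to the full-dimensional cells and supplying the continuity argument $\cost_2^2(T,c^\ast_C)\le f_C(c^\ast_C)\le f_C(c^\ast)=\cost_2^2(T,c^\ast)$, whereas the paper simply enumerates sample points from faces of every dimension and waits to hit the face that actually contains the optimum; both resolve the boundary issue, and both approaches give the same $(nm)^{2^{O(d\ell)}}$ bound.
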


The proof is deferred to \cref{appendix:exact}. For the sake of simplicity, suppose that we want to compute the best \mean of complexity exactly $\ell' \in[\ell]$. To compute the optimal restricted $2$-\mean, it suffices to find the best \mean for every  $\ell'\in [\ell]$.  For a fixed $\ell' \in [\ell]$, the main idea is to compute for any two warpings between a point sequence of complexity $\ell'$ and an input point sequence, a polynomial function whose sign indicates which of the warpings yields a smaller distance between the sequences. These functions are then used to define an arrangement that partitions the space $\left(\mathbb{R}^d\right)^{\ell'}$. The trick is that while there is an infinite number of point sequences in $\left(\RR^d\right)^{\ell'}$, to each input point sequence there are only $O(m^{2{\ell'}})$ warpings and in each face of the arrangement the point sequences have the same optimal warpings to the input point sequences. Therefore, for an arbitrary point sequence from each face of the arrangement, we can compute the optimal warpings to the input sequences and then use the resulting sections to compute an optimal point sequence for these warpings, obtaining the optimal \mean of complexity exactly $\ell'$ when we eventually hit the face containing it. We use the cylindrical algebraic decomposition algorithm to compute the arrangement and obtain an element of each face.

\subsection{Constant-Factor Approximation of the Restricted $p$-\Mean}
\label{section:constantfactor}

We start by describing a simple approximation algorithm that reveals the basic idea underlying the following algorithms. The algorithm relies on the following observation. If $p$-DTW is defined over a metric, then the triangle inequality holds for the point-to-point-distances in the sum that defines the $p$-DTW distance (albeit not for $p$-DTW distance itself). Assume for simplicity that $p=1$. In this case, there always exists a $2$-approximate \mean that is formed by points from the input sequences. Enumerating all possible such sequences, then, if the input consists of $n$ point sequences of length $m$,  leads to an algorithm with running time in $O((nm)^{\ell+1})$, where $\ell$ denotes the largest allowed complexity of the \mean. This approach also extends to other variants of the \mean problem for different choices of $p$ and $q$ (with varying approximation factors). 
One obvious disadvantage of this simple  algorithm is the high running time. In the following, we use similar observations as above and show that the dependency on the number of input sequences $n$ can be improved to linear while still achieving approximation factors close to $2$.

\subsubsection{Randomized Algorithm}

We present a randomized constant-factor approximation algorithm for the restricted $p$-\mean problem. The approximation factor of the algorithm depends on $p$, and the best it can achieve is $2+\epsilon$ for $p=1$ and $4+\epsilon$ for $p=2$, which resemble the famous Euclidean median and mean problems. The idea of the algorithm is to obtain for each $j \in [\ell']$ from the corresponding section $S^p_j(c, T)$ of an optimal restricted $p$-\mean $c = (c_1, \dots, c_{\ell'})$ one of the closest input vertices to $c_j$. The obtained vertices in the corresponding order form an approximate restricted $p$-\mean. We formalize the idea in the following lemma. The proof is deferred to \cref{appendix:constant}.

\begin{restatable}{lemma}{lemmadiscreteapprox}
    \label{lem:discrete_approx}
    Let $T = \{ \tau_1 = (\tau_{1,1}, \dots, \tau_{1,\lvert \tau_1 \rvert}), \dots, \tau_n = (\tau_{n,1}, \dots, \tau_{n,\lvert \tau_n \rvert}) \} \subseteq X^{\leq m}$ be a set of point sequences and let $P = \bigcup_{i=1}^n \bigcup_{j=1}^{\lvert \tau_i \rvert} \{ \tau_{i,j} \}$. For any $p \in [1, \infty)$, $\ell \in \mathbb{N}_{>1}$ and $\epsilon \in \left(0, \infty \right)$ there exists an $\ell'\leq \ell$ and balls $B_1, \dots, B_{\ell'} \subseteq P$, of cardinality at least $\frac{\epsilon n}{2^{p-1} + \epsilon}$ each, such that any point sequence $c^\prime = (c^\prime_1, \dots, c^\prime_{\ell'})$, with $c^\prime_i \in B_i$ for each $i \in [\ell']$, is a $(2^{p}+\epsilon)$-approximate restricted $p$-\mean for~$T$.
\end{restatable}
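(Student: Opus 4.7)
The plan is to let $c = (c_1, \dots, c_{\ell'})$ be an optimal restricted $p$-\mean for $T$ and to define each ball as $B_j := B(c_j, r_j) \cap P$, where the radius $r_j$ is tuned so that both the cardinality bound and the approximation bound fall out of a single Markov-type estimate on the section $S_j := S^p_j(c,T)$.

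Two structural facts drive the choice of $r_j$. First, the identity $\cost_p^p(T,c) = \sum_{j=1}^{\ell'} \cost_j$ with $\cost_j := \sum_{v \in S_j}\rho(c_j,v)^p$ decomposes the optimum along the sections. Second, every optimal $p$-warping between $c$ and an input sequence $\tau_i$ must pair $c_j$ with at least one vertex of $\tau_i$, so $\lvert S_j \rvert \geq n$. I would then set
\[ r_j^p \;:=\; \bigl(1 + \tfrac{\epsilon}{2^{p-1}}\bigr)\cdot \frac{\cost_j}{\lvert S_j \rvert}, \]
which is precisely the value making the two estimates below meet. By Markov's inequality applied to the nonnegative function $v \mapsto \rho(c_j,v)^p$ on $S_j$, the number of $v \in S_j$ with $\rho(c_j,v) > r_j$ is at most $\cost_j/r_j^p = \lvert S_j \rvert/(1+\epsilon/2^{p-1})$. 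Since $S_j \subseteq P$, this yields
\[ \lvert B_j \rvert \;\geq\; \lvert B(c_j,r_j)\cap S_j \rvert \;\geq\; \lvert S_j \rvert\cdot\frac{\epsilon}{2^{p-1}+\epsilon} \;\geq\; \frac{\epsilon n}{2^{p-1}+\epsilon}. \]

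For the approximation guarantee, pick any $c' = (c'_1, \dots, c'_{\ell'})$ with $c'_j \in B_j$. Bounding $\dtw_p(c',\tau_i)^p$ by the cost of the (not necessarily optimal) warping that is optimal for the pair $(c,\tau_i)$, then applying the triangle inequality in $\mathcal{X}$ together with $(a+b)^p \leq 2^{p-1}(a^p+b^p)$, gives
\[ \cost_p^p(T,c') \;\leq\; \sum_{j=1}^{\ell'}\sum_{v \in S_j} \rho(c'_j,v)^p \;\leq\; 2^{p-1}\sum_{j=1}^{\ell'}\bigl(\lvert S_j\rvert\cdot r_j^p + \cost_j\bigr) \;=\; (2^p+\epsilon)\,\cost_p^p(T,c), \]
where the final equality uses $\sum_j \lvert S_j\rvert\cdot r_j^p = (1+\epsilon/2^{p-1})\cost_p^p(T,c)$ by the choice of $r_j$.

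The main (and mild) difficulty is balancing the two competing requirements with one choice of $r_j$: shrinking $r_j$ strengthens the approximation bound but weakens the Markov-based cardinality bound, and vice versa. The value above is the unique scale at which both estimates come out exactly. Degenerate sections with $\cost_j = 0$ (forcing $S_j = \{c_j\}$) are handled trivially by taking $r_j = 0$ and $B_j = \{c_j\}$, and the case where the cost is zero overall is vacuous.
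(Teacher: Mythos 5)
Your proof is correct and follows essentially the same approach as the paper's: fix an optimal $c$, center a ball $B_j$ at each $c_j$ on a radius tuned against the section $S_j$, use $\lvert S_j\rvert \ge n$ for the cardinality bound and the triangle inequality plus $(a+b)^p \le 2^{p-1}(a^p+b^p)$ for the cost bound. The only difference is presentational: the paper picks the $\lceil \epsilon' n\rceil$-th closest section point to $c_j$ and bounds its distance by an averaging argument, whereas you package the same count as Markov's inequality with a single closed-form radius $r_j$, which is a slightly cleaner way to expose the trade-off; both rest on the identical counting fact.

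One small remark: when $\cost_j = 0$ the conclusion $S_j = \{c_j\}$ is only a set-theoretic collapse, and the claimed lower bound $\lvert B_j\rvert \ge \epsilon n/(2^{p-1}+\epsilon)$ then really requires treating $P$, $S_j$, and $B_j$ as multisets (so that the $\ge n$ copies of $c_j$ warped from the inputs are counted with multiplicity). The paper's own proof has the same implicit multiset convention, so this is not a gap unique to your argument, just worth stating explicitly.
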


Now we present the first algorithm. The idea is to uniformly sample from the set of all vertices of all point sequences, to obtain at least one vertex from each ball guaranteed by the previous lemma, with high probability. After the sampling, the algorithm enumerates all point sequences of at most $\ell$ elements from the sample and returns a point sequence with lowest cost.

\begin{algorithm}[H]
\caption{Restricted $p$-\Mean Constant-Factor Approximation \label{alg:1_median_2_3}}
    \begin{algorithmic}[1]
        \Procedure{\mean-C}{$T = \{\tau_1 = (\tau_{1,1}, \dots, \tau_{1,\lvert \tau_1 \rvert}), \dots, \tau_n = (\tau_{n,1}, \dots, \tau_{n,\lvert \tau_n \rvert}) \}, \delta, \epsilon, p$}
            \State $P \gets \bigcup_{i=1}^n \bigcup_{j=1}^{\lvert \tau_i \rvert} \{ \tau_{i,j} \}$
            \State $S \gets$ sample $\left\lceil \frac{m(\ln(\ell) + \ln(1/\delta))}{\epsilon/(2^{p-1} + \epsilon)} \right\rceil$ points from $P$ uniformly and independently at random
            
            \hspace{\algorithmicindent} with replacement
            \State $C \gets S^{\leq\ell}$
            \State \Return an arbitrary element from $\argmin\limits_{c \in C} \cost_p^p(T,c)$
        \EndProcedure
    \end{algorithmic}
\end{algorithm}
The correctness of \cref{alg:1_median_2_3} follows by an application of \cref{lem:discrete_approx}. The proof of the following theorem can be found in \cref{appendix:constant}. 
\begin{restatable}{theorem}{thmmedianconstantfactor}
\label{thm:1_median_2_3}
    Given a set $T = \{\tau_1, \dots, \tau_n\} \subseteq X^{\leq m}$ of point sequences, three parameters $\delta \in (0,1)$, $\epsilon \in (0, \infty)$ and $p \in [1, \infty)$, \cref{alg:1_median_2_3} returns with probability at least $1-\delta$ a $(2^{p}+\epsilon)$-approximate restricted $p$-\mean for $T$, in time $O\left(nm^{\ell+1} \ln(1/\delta)^\ell \left(1 + \frac{2^{p-1}}{\epsilon}\right)^\ell\right)$.
\end{restatable}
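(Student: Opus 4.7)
The plan is to combine \cref{lem:discrete_approx} with a standard sampling / union-bound argument. \cref{lem:discrete_approx} gives us an existential witness: there is some $\ell' \leq \ell$ and balls $B_1, \dots, B_{\ell'} \subseteq P$, each of size at least $\epsilon n/(2^{p-1} + \epsilon)$, such that \emph{any} point sequence formed by picking one vertex from each $B_i$ (in order) is a $(2^p + \epsilon)$-approximate restricted $p$-\mean. So the correctness task reduces to showing that, with probability at least $1 - \delta$, the sampled set $S$ contains at least one element from every $B_i$, for then $C = S^{\leq \ell}$ certainly contains such a witness sequence, and the algorithm returns something no worse than this witness.

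For the probability estimate, first I would note that $|P| \leq nm$, so a single uniform sample lands in a fixed $B_i$ with probability at least $\frac{\epsilon n/(2^{p-1}+\epsilon)}{nm} = \frac{\epsilon}{m(2^{p-1} + \epsilon)}$. Hence the probability that none of the $s := \left\lceil \frac{m(\ln \ell + \ln(1/\delta))}{\epsilon/(2^{p-1}+\epsilon)} \right\rceil$ independent samples lies in $B_i$ is at most
\[
\left(1 - \frac{\epsilon}{m(2^{p-1}+\epsilon)}\right)^{s} \;\leq\; \exp\!\left(-\frac{s \epsilon}{m(2^{p-1}+\epsilon)}\right) \;\leq\; \frac{\delta}{\ell},
\]
by the choice of $s$. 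A union bound over the $\ell' \leq \ell$ balls then yields that all balls are hit with probability at least $1 - \delta$. On this event, the witness sequence from \cref{lem:discrete_approx} is an element of $C$, and the returned minimizer has $\cost_p^p$-value at most that of the witness, hence is a $(2^p + \epsilon)$-approximation.

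For the running time, I would analyze the three phases separately. Building $P$ and drawing $S$ costs $O(nm + s)$. The candidate set satisfies $|C| = \sum_{i=1}^{\ell} |S|^i = O(s^{\ell})$, since $\ell$ is a constant. Evaluating $\cost_p^p(T,c)$ for a single candidate of length at most $\ell$ takes $O(nm\ell) = O(nm)$ time via the standard $p$-DTW dynamic program against each $\tau_i$. Multiplying these together gives
\[
O\!\left(nm \cdot s^{\ell}\right) \;=\; O\!\left(nm \cdot m^{\ell} (\ln \ell + \ln(1/\delta))^{\ell} \left(\tfrac{2^{p-1}+\epsilon}{\epsilon}\right)^{\ell}\right),
\]
and absorbing the constant $\ln \ell$ and using $\frac{2^{p-1}+\epsilon}{\epsilon} = 1 + \frac{2^{p-1}}{\epsilon}$ recovers the bound stated in the theorem.

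There is no real obstacle here: the whole argument rides on the strength of \cref{lem:discrete_approx}, and the sample-size calculation is tuned precisely so that the union bound closes with the specified failure probability $\delta$. The only thing to be careful about is bookkeeping: making sure the $|P| \leq nm$ bound is used when converting the ball cardinality lower bound into a per-sample hit probability, and that the ceiling in the definition of $s$ is tracked cleanly so the exponent in the running time is exactly $\ell$.
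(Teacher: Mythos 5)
Your proof is correct and follows essentially the same route as the paper's: invoke \cref{lem:discrete_approx} to get the witness balls, bound the per-sample hit probability using $|P| \leq nm$, apply a union bound over the $\ell' \leq \ell$ balls to conclude all are hit with probability $1-\delta$, and then bound the running time by $|S|^\ell$ candidate sequences times $O(nm)$ per cost evaluation. The bookkeeping (absorbing $\ln\ell$ into the constant, rewriting $(2^{p-1}+\epsilon)/\epsilon$ as $1 + 2^{p-1}/\epsilon$) matches the paper as well.
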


\subsubsection{Derandomization}

In this section, we consider finite metric spaces $(X,\rho)$ for which the set of all metric balls $\mathcal{B} = \{ B(x,r) \mid x \in X, r \in \RR_{\geq 0} \}$ forms a range space $(X,\mathcal{B})$ with bounded VC dimension~$\mathcal{D}$. We present a deterministic algorithm for the restricted $p$-\mean problem which is applicable under the additional assumption that there is a subsystem oracle for $(X, \mathcal{B})$. We show that this is the case for the Euclidean metric. 
Note that our algorithm depends on the existence of a subsystem oracle, which is not always obvious for a given metric.
\\

We formally define range spaces and the associated concepts. A range space is defined as a pair of sets $(X,\RRR)$, where $X$ is the \textit{ground set} and $\RRR\subseteq 2^{X}$ is the \textit{range set}. For $Y\subseteq X$, we denote $ \RRR_{|Y}= \{ R \cap Y \mid R \in \RRR \} $ and if $\RRR_{|Y} $ contains all subsets of $Y$, then $Y$ is \textit{shattered} by~$\RRR$. A measure of the combinatorial complexity of such a range space is the VC dimension. 
\begin{definition}[VC dimension]
The Vapnik-Chervonenkis dimension~\cite{Sau72,She72,VC71} of $(X,\RRR)$ is the maximum cardinality of a shattered subset of $X$.
\end{definition}
Range spaces need not to be finite and can be discretized by means of $\epsilon$-nets.
\begin{definition}[$\epsilon$-net]
\label{def:epsilon_net}
A set $N \subset X$ is an $\epsilon$-net for $(X,\RRR)$ if for any range $R\in \RRR$, $R \cap N\neq \emptyset$ if $|R\cap X|\geq \epsilon |X| $. 
\end{definition}
To compute $\epsilon$-nets deterministically, we need a subsystem oracle, which we now define. 
\begin{definition}[subsystem oracle]
Let $(X, \RRR)$ be a finite range space. A subsystem oracle is an algorithm which for any $Y \subseteq X$, lists all sets in $\RRR_{|Y}$ in
time $O(|Y|^{\mathcal{D}+1})$, where $\mathcal{D}$ is the VC dimension of $(X, \RRR)$.
\end{definition}
We use the following theorem to obtain $\epsilon$-nets when provided with a subsystem oracle.
\begin{theorem}[{\cite[Theorem 2.1]{BCM99}}]\label{thm:determnet}
Let $(X, \RRR)$ be a range space with finite ground set and VC dimension $\mathcal{D}$, and $ \epsilon > 0$ be a given parameter. Assume that there is a subsystem oracle for $(X, \RRR)$. Then an $\epsilon$-net of size $O\left(\frac{\mathcal{D}}{\epsilon} \log \frac{\mathcal{D}}{\epsilon}\right)$ 
can be computed deterministically
in time $O(\mathcal{D}^{3\mathcal{D}})\cdot \left(\frac{1}{\epsilon} \log  \frac{1}{\epsilon} \right)^\mathcal{D} \cdot |X|$.
\end{theorem}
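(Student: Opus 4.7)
The plan is to reduce the construction of an $\epsilon$-net on $(X,\mathcal{R})$ to the construction of an $\epsilon$-net on a much smaller but ``representative'' subset $A \subseteq X$. I would follow the two-phase framework pioneered by Br\"onnimann--Chazelle--Matou\v{s}ek based on $\epsilon$-approximations, since this scheme naturally exploits the subsystem oracle and yields the stated linear-in-$|X|$ running time.

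First, I would deterministically compute an $\epsilon_1$-approximation $A \subseteq X$ of size $O((\mathcal{D}/\epsilon_1^2) \log(\mathcal{D}/\epsilon_1))$ for $\epsilon_1 = \epsilon/2$, where an $\epsilon_1$-approximation is a subset satisfying $\bigl| |R \cap A|/|A| - |R \cap X|/|X| \bigr| \leq \epsilon_1$ for every $R \in \mathcal{R}$. The construction proceeds by iterative reweighting: maintain an integer weight $w \colon X \to \mathbb{Z}_{>0}$ initialised to $1$; in each outer round draw a small weighted test sample and use the subsystem oracle to find, if any, a range whose sampled relative frequency differs from its true relative frequency by more than $\epsilon_1$; if one is found, double the weights of its elements. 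A standard potential argument comparing the total weight of $X$ against the maximum weight of any range shows that $O(\log|X|)$ rounds suffice.

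Second, once $A$ is in hand I would compute an $\epsilon_2$-net of the small range space $(A, \mathcal{R}_{|A})$ with $\epsilon_2 = \epsilon/2$ by brute force: apply the subsystem oracle to $A$ to list all at most $|A|^{\mathcal{D}}$ distinct ranges, then enumerate candidate subsets of $A$ of the size $O((\mathcal{D}/\epsilon_2)\log(\mathcal{D}/\epsilon_2))$ guaranteed by the classical Haussler--Welzl bound, stopping at the first one that meets every heavy range of $A$. A sandwich argument then shows that the net is an $(\epsilon_1+\epsilon_2)$-net of $(X,\mathcal{R})$: any range $R$ with $|R\cap X|/|X| \geq \epsilon$ satisfies $|R\cap A|/|A| \geq \epsilon - \epsilon_1 \geq \epsilon_2$ by the approximation property, hence is hit.

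The main obstacle is pinning down the running time so that the dependence on $|X|$ stays strictly linear. The subsystem-oracle cost scales as $|Y|^{\mathcal{D}+1}$ in the set size $|Y|$, so one must arrange that the only places where $|Y| = |X|$ occur are the cheap outer scans of the weight-doubling procedure, while the expensive combinatorial enumerations, which contribute the $\mathcal{D}^{3\mathcal{D}}$ and $((1/\epsilon)\log(1/\epsilon))^{\mathcal{D}}$ factors, are confined to the small set $A$. Verifying this careful allocation of work across the two phases, and in particular that the number of outer rounds multiplied by the per-round scan cost remains within the claimed bound, is the heart of the technical argument.
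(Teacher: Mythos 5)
The paper itself gives no proof here --- Theorem~\ref{thm:determnet} is cited verbatim from Br\"onnimann, Chazelle and Matou\v{s}ek, so your reconstruction is being compared against the argument in that reference (which in turn rests on the Chazelle--Matou\v{s}ek linear-time deterministic $\epsilon$-net construction). Against that benchmark, your sketch has two genuine gaps, and the second one is fatal as written.

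First, your Phase~1 builds the small $\epsilon_1$-approximation $A$ by an iterative-reweighting (weight-doubling) scheme. The actual construction is a \emph{merge-and-reduce} scheme: partition $X$ into groups whose size depends only on $\mathcal{D}$ and $\epsilon$, compute trivial approximations for each group, and then repeatedly pairwise-merge two approximations and \emph{halve} the result via a low-discrepancy two-colouring obtained by the method of conditional probabilities, using the subsystem oracle only on these small sets. This is what makes the work at each level geometrically bounded and gives the strictly linear dependence on $|X|$ together with the $O(\mathcal{D}^{3\mathcal{D}})\cdot\left(\frac{1}{\epsilon}\log\frac{1}{\epsilon}\right)^{\mathcal{D}}$ factor. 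Your reweighting scheme, by contrast, needs $O(\log|X|)$ outer rounds, each of which must at least scan $X$ to certify that no range is badly sampled (and to evaluate $|R\cap X|$ for the witnessing range), which already pushes the total to $\Omega(|X|\log|X|)$. You flag this tension yourself, but it is not a detail to be ``verified''---it is the reason reweighting is not the mechanism used, and you would need a substantially different bookkeeping scheme to salvage linear time from it.

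Second, and more seriously, your Phase~2 is wrong as stated. You propose to enumerate all subsets of $A$ of the Haussler--Welzl size $s=O\bigl(\frac{\mathcal{D}}{\epsilon}\log\frac{\mathcal{D}}{\epsilon}\bigr)$ and test each against the ranges in $\mathcal{R}_{|A}$. With $|A|=O\bigl(\frac{\mathcal{D}}{\epsilon^2}\log\frac{\mathcal{D}}{\epsilon}\bigr)$, the number of such subsets is $\binom{|A|}{s}$, which is super-polynomial in $1/\epsilon$ --- roughly $(1/\epsilon)^{\Theta((\mathcal{D}/\epsilon)\log(\mathcal{D}/\epsilon))}$ --- whereas the claimed time bound has only the polynomial factor $\left(\frac{1}{\epsilon}\log\frac{1}{\epsilon}\right)^{\mathcal{D}}$. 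Brute-force enumeration cannot achieve the theorem. The correct Phase~2 again derandomizes the random-sampling existence proof directly on the small ground set $A$ by the method of conditional probabilities (equivalently, by further halving $A$ down to an $\epsilon$-net of the right size), at cost polynomial in $|A|$ and hence polynomial in $1/\epsilon$. Your sandwich argument relating $\epsilon_1$-approximations to $\epsilon$-nets is fine; it is the two constructive steps that diverge from what the cited proof actually does and from what the claimed running time permits.
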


The following algorithm is a modification of \cref{alg:1_median_2_3} where the sampling step is substituted for a computation of an $(\epsilon/m)$-net of the set of all vertices of all given point sequences. Since the balls guaranteed by \cref{lem:discrete_approx} are of appropriate size, the $(\epsilon/m)$-net stabs all of them and by enumeration of all point sequences of at most $\ell$ points from the $(\epsilon/m)$-net, we again find a good approximate restricted $p$-\mean.

\begin{algorithm}[H]
\caption{Restricted $p$-\Mean Constant-Factor Approximation \label{alg:median_det}}
    \begin{algorithmic}[1]
        \Procedure{\mean-C-D}{$T = \{ \tau_1 = (\tau_{1,1}, \dots, \tau_{1,\lvert \tau_1 \rvert}), \dots, \tau_n = (\tau_{n,1}, \dots, \tau_{n,\lvert \tau_n \rvert}) \}, \epsilon,  p$}
            \State $\epsilon^\prime \gets \frac{\epsilon}{2^{p-1} + \epsilon}$, $P \gets \bigcup_{i=1}^n \bigcup_{j=1}^{\lvert \tau_i \rvert} \{ \tau_{i,j} \}$
            \State $S \gets$ compute an $(\epsilon^\prime/m)$-net of $(P,\mathcal{B})$
            \State $C \gets S^{\leq\ell}$
            \State \Return an arbitrary element from $\argmin\limits_{c \in C} \cost_p^p(T,c)$
        \EndProcedure
    \end{algorithmic}
\end{algorithm}
The correctness of \cref{alg:median_det} follows from \cref{def:epsilon_net}. The proofs of the following statements can be found in \cref{appendix:constant}.
\begin{restatable}{theorem}{thmmedianconstantdetermcor}
    \label{thm:det_mean_correctness}
    Given a set $T \subseteq X^{\leq m}$ of $n$ point sequences, parameters $\epsilon \in (0, \infty)$, and $p \in [1, \infty)$, \cref{alg:median_det} returns a $(2^{p}+\epsilon)$-approximate restricted $p$-\mean for $T$.
\end{restatable}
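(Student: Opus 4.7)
The plan is to invoke \cref{lem:discrete_approx} to obtain a collection of balls $B_1, \ldots, B_{\ell'} \subseteq P$ (for some $\ell' \leq \ell$) such that any sequence picking one vertex from each ball is a $(2^p + \epsilon)$-approximate restricted $p$-\mean, and then argue that the computed $(\epsilon'/m)$-net $S$ necessarily stabs each of these balls. Once a sequence realizing the approximation guarantee is known to lie in $C = S^{\leq \ell}$, the correctness follows immediately from the algorithm returning an element of $C$ of minimum cost.

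The first step is to apply \cref{lem:discrete_approx} with the parameters $p$ and $\epsilon$ given as input to the algorithm. This yields $\ell' \leq \ell$ and balls $B_1, \ldots, B_{\ell'} \subseteq P$ of cardinality at least $\frac{\epsilon n}{2^{p-1} + \epsilon} = \epsilon' n$ each, with the property that any point sequence $c^\prime = (c^\prime_1, \ldots, c^\prime_{\ell'})$ satisfying $c^\prime_i \in B_i$ for every $i \in [\ell']$ is a $(2^p + \epsilon)$-approximate restricted $p$-\mean for $T$.

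The second step is the normalization that matches the lemma's lower bound on ball size to the $\epsilon$-net definition applied on the ground set $P$. Since each of the $n$ input sequences has complexity at most $m$, we have $|P| \leq nm$, and therefore
\[ |B_i| \geq \epsilon' n \geq \frac{\epsilon'}{m} \cdot |P| \]
for each $i \in [\ell']$. Since each $B_i$ is (the restriction to $P$ of) a metric ball, it is a range of the range space $(P, \mathcal{B})$, so by \cref{def:epsilon_net} the $(\epsilon'/m)$-net $S$ computed in the algorithm must intersect each $B_i$. Pick any $s_i \in S \cap B_i$ for $i \in [\ell']$.

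The final step closes the argument: the point sequence $(s_1, \ldots, s_{\ell'})$ has complexity $\ell' \leq \ell$ and all its vertices are in $S$, hence it belongs to $C = S^{\leq \ell}$ and is considered by the argmin in the last line of \cref{alg:median_det}. By the conclusion of \cref{lem:discrete_approx}, this sequence is a $(2^p + \epsilon)$-approximate restricted $p$-\mean, so the returned sequence, which has no larger cost, is also a $(2^p + \epsilon)$-approximate restricted $p$-\mean. The only subtlety in the whole proof is the factor-$m$ loss in the net parameter needed to pass from the absolute ball-size bound $\epsilon' n$ given by the lemma to the relative bound $(\epsilon'/m)|P|$ required by the $\epsilon$-net definition; this is precisely why the algorithm uses threshold $\epsilon'/m$ rather than $\epsilon'$.
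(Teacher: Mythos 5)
Your proof matches the paper's argument exactly: apply \cref{lem:discrete_approx} to get $\ell'$ balls of cardinality at least $\epsilon' n \geq (\epsilon'/m)|P|$, observe that the $(\epsilon'/m)$-net must stab each such ball (they are ranges of $(P,\mathcal{B})$), and conclude that $C = S^{\leq \ell}$ contains a $(2^p+\epsilon)$-approximate \mean, which bounds the cost of the returned minimizer. The only thing you add beyond the paper's wording is the explicit remark that each $B_i$ is the restriction of a metric ball to $P$ and hence a valid range, which is a welcome clarification but the same route.
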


We now turn to the Euclidean setting. First, we prove that there exists a subsystem oracle for $(X, \mathcal{B})$, when $X \subset \mathbb{R}^d$ is a finite subset of the $d$-dimensional Euclidean space.

\begin{restatable}{lemma}{lemmasubsystemoracle}
    \label{lem:subspaceoracle}
    There is a subsystem oracle for the range space $(X,\mathcal{B})$, where $X$ is a finite subset of $\mathbb{R}^d$.
\end{restatable}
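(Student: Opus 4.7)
The plan is to reduce enumeration of Euclidean-ball ranges on $Y \subseteq X$ to the classical problem of enumerating half-space ranges in one higher dimension, via the standard parabolic lifting $\phi\colon \mathbb{R}^d \to \mathbb{R}^{d+1}$ given by $\phi(x) = (x_1,\ldots,x_d,\|x\|_2^2)$. First I would observe that for any $c \in \mathbb{R}^d$ and $r \ge 0$,
\[
\|y-c\|^2 \le r^2 \iff \phi(y)\cdot(-2c,1) \le r^2 - \|c\|^2,
\]
so membership in $B(c,r)$ is a linear inequality in $\phi(y)$. Consequently each range in $\mathcal{B}_{|Y}$ is the $\phi$-preimage of $H_{c,r} \cap \phi(Y)$ for some half-space $H_{c,r} \subset \mathbb{R}^{d+1}$, and the full oracle task embeds into enumerating the subsets of $\phi(Y)$ cut off by half-spaces in $\mathbb{R}^{d+1}$.

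Next I would implement the enumeration combinatorially. It is standard that any half-space range on an $n$-point set in $\mathbb{R}^{d+1}$ is witnessed by a supporting hyperplane through at most $d+1$ of those points: starting from any witnessing half-space, translate and rotate its bounding hyperplane until it is pinned by $d+1$ affinely independent points of $\phi(Y)$ without changing the induced subset. Thus the algorithm enumerates every subset $S \subseteq \phi(Y)$ of size at most $d+1$, computes the affine hull of $\phi(S)$, and for each of the (at most two) closed half-spaces it bounds records the induced subset of $\phi(Y)$ (together with the degenerate case where the hyperplane itself contributes additional points on the boundary). There are $\binom{|Y|}{\le d+1} = O(|Y|^{d+1})$ candidate tuples, each producing $O(1)$ candidate subsets that can be written down in $O(|Y|)$ time, for a total of $O(|Y|^{d+2})$ time. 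The standard bound on the VC dimension of Euclidean balls in $\mathbb{R}^d$ gives $\mathcal{D} \le d+1$, so $O(|Y|^{d+2}) = O(|Y|^{\mathcal{D}+1})$ as demanded by the subsystem oracle definition.

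The main obstacle, as far as the statement is concerned, is the fact that $\mathcal{B}$ only allows ball centers $x \in X$, whereas the lifting enumerates all half-space cuts of $\phi(Y)$ and so may produce a few subsets not realizable by a ball with a center in $X$. I would handle this by noting that, because the oracle is used only as a black box inside the $\epsilon$-net construction of Theorem~\ref{thm:determnet}, over-reporting causes no error: any $(\epsilon/m)$-net of the enlarged range space (all Euclidean balls with centers in $\mathbb{R}^d$) is automatically an $(\epsilon/m)$-net of $(X,\mathcal{B})$, and the VC dimension and running-time bounds transfer unchanged. If a strictly exact enumeration is preferred, one can postprocess each candidate subset $Z$ by testing feasibility of the linear program that asks for $x \in X$ with $\max_{z\in Z}\|x-z\|^2 \le \min_{y \in Y\setminus Z}\|x-y\|^2$ — an $O(|Y|)$-constraint test per candidate — keeping the total running time within the required $O(|Y|^{\mathcal{D}+1})$ budget.
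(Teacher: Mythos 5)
Your proof begins the same way as the paper's --- both apply the parabolic lifting $\phi(x) = (x_1,\ldots,x_d,\|x\|^2)$ so that ball-membership on $Y$ becomes half-space membership for $\phi(Y)$ in $\mathbb{R}^{d+1}$. The arguments then genuinely diverge. The paper dualizes: lifted points become hyperplanes, candidate half-spaces become points, and it constructs the arrangement of the $|Y|$ dual hyperplanes using the Edelsbrunner--O'Rourke--Seidel algorithm, reading off one subset per cell ($O(|Y|^{d+1})$ cells, $O(|Y|)$ per cell). You instead enumerate $(d+1)$-tuples of lifted points, argue by "shrink and pin" that every half-space range is witnessed by a supporting hyperplane through at most $d+1$ of them, and output the cut subsets ($O(|Y|^{d+1})$ tuples, $O(|Y|)$ per tuple). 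Both give $O(|Y|^{d+2})$, so the conclusion is the same; your route avoids duality and arrangement machinery at the cost of having to argue the pinning lemma and handle boundary points more carefully than you state --- a hyperplane through $k \le d+1$ lifted points can yield up to $2^{k}$ distinct ranges depending on which side each boundary point falls under a small perturbation, not merely "at most two closed half-spaces plus a degenerate case," though this is still $O(1)$ per tuple and does not affect the running time. Your closing remark about centers restricted to $X$ is a reasonable observation about the statement as written: the paper's own proof quietly lets the center $c$ range over all of $\mathbb{R}^d$ and thereby over-reports relative to the definition of $\mathcal{B}$; your explicit argument that over-reporting is harmless for the deterministic $\epsilon$-net construction (or can be pruned by an LP feasibility test within budget) makes that implicit step precise.
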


Then we can analyze the running time of \cref{alg:median_det} in the Euclidean setting.

\begin{restatable}{theorem}{thmmedianconstantdeterfinal}
\label{thm:constantfactorderandomized}
    Given a set $T\subset  \left(\RR^d \right)^{\leq m}$ of $n$ point sequences,  parameters $\epsilon \in (0, \infty)$, and $p \in [1, \infty)$, \cref{alg:median_det} can be implemented to run in $O\left( n m \left(\left(\frac{m}{\epsilon^\prime} \log\frac{m}{\epsilon^\prime}\right)^{d+1} + \left(\frac{m}{\epsilon^\prime} \log\frac{m}{\epsilon^\prime}\right)^{\ell} \right) \right)$ deterministic time, where $\epsilon^\prime = \frac{\epsilon}{2^{p-1} + \epsilon}$.
\end{restatable}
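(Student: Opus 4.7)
The plan is to bound the running time of each line of \cref{alg:median_det} separately and then sum them. Line~2 builds the vertex set $P$ in $O(nm)$ time with $|P| \leq nm$. The dominant cost on line~3 is the deterministic construction of an $(\epsilon'/m)$-net of $(P,\mathcal{B})$. \cref{lem:subspaceoracle} provides the subsystem oracle required by \cref{thm:determnet}, and the VC dimension of the range space of Euclidean balls in $\RR^d$ is $\mathcal{D} = d+1$. Invoking \cref{thm:determnet} with parameter $\epsilon = \epsilon'/m$ and ground set $P$ therefore produces the net in
\[ O(\mathcal{D}^{3\mathcal{D}}) \cdot \left(\frac{m}{\epsilon'}\log\frac{m}{\epsilon'}\right)^{d+1} \cdot |P| \;=\; O\!\left(nm\left(\frac{m}{\epsilon'}\log\frac{m}{\epsilon'}\right)^{d+1}\right) \]
time once the $d$-dependent constant is absorbed, and the resulting net $S$ has size $s = O\!\left(\frac{m}{\epsilon'}\log\frac{m}{\epsilon'}\right)$.

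For line~4, enumerating all non-empty sequences of length at most $\ell$ over $S$ can be done in $O(\ell \cdot s^\ell) = O(s^\ell)$ time (since $\ell$ is a constant), yielding $|C| = \sum_{i=1}^{\ell} s^i = O(s^\ell)$ candidates. Line~5 computes $\cost_p^p(T,c) = \sum_{i=1}^n \dtw_p(c,\tau_i)^p$ for each candidate $c \in C$ using the standard $p$-DTW dynamic program. Each such evaluation costs $O(|c|\cdot|\tau_i|) = O(\ell m) = O(m)$, so computing the full cost over the $n$ input sequences costs $O(nm)$ per candidate, and maintaining a running argmin over $|C|$ candidates takes
\[ O\!\left(nm \cdot \left(\frac{m}{\epsilon'}\log\frac{m}{\epsilon'}\right)^{\ell}\right) \]
time in total for lines~4 and~5.

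Summing the contributions from line~3 and from lines~4--5 gives exactly the claimed bound. The argument has no real obstacle: it is essentially a line-by-line accounting that chains together the two earlier tools, namely the Euclidean subsystem oracle of \cref{lem:subspaceoracle} and the deterministic net construction of \cref{thm:determnet}. The only point that deserves care is that all constants depending on the fixed parameters $d$, $\ell$, and $p$ (in particular the factor $\mathcal{D}^{3\mathcal{D}}$ in \cref{thm:determnet} and the factor $\ell$ from each DTW evaluation) must be absorbed into the $O$-notation; once this is done, the two terms in the stated bound correspond exactly to the net construction and to the exhaustive search over $C$, respectively.
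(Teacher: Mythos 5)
Your proof is correct and follows essentially the same route as the paper: plug \cref{lem:subspaceoracle} into \cref{thm:determnet} with parameter $\epsilon'/m$ to get the net $S$ of size $O\bigl(\frac{m}{\epsilon'}\log\frac{m}{\epsilon'}\bigr)$ in the first-term time bound, then exhaustively evaluate all $O(|S|^{\ell})$ candidates at $O(nm)$ each for the second term. The only difference is that you spell out the line-by-line accounting (enumeration cost, per-DTW cost) a bit more explicitly than the paper does.
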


\subsection{Simplifications and the Triangle Inequality}
\label{section:simpltrineq}

In this section, we give an efficient approximation algorithm for simplification under $p$-DTW and show that a weak triangle inequality for $p$-DTW holds. These results are then used in \cref{section:onepluseps}, where we provide an algorithm for the restricted $(p,1)$-\mean problem, which achieves an approximation factor of $(1+\epsilon)$, for any $\epsilon \in (0,\infty)$. The result holds for the important case of sequences lying in the Euclidean space.

In this section, we focus on approximating a minimum-error simplification of a point sequence with respect to the DTW distance, and we show that a weak triangle inequality holds. This improves upon previous similar statements~\cite{L09} when one bounds the DTW distance of two short point sequences. We will later apply these two results in the analysis of one of our \mean approximation algorithms. In particular, we will bound the expected cost of a \mean obtained by randomly sampling an input point sequence and then computing its approximate minimum-error simplification.

\subsubsection{Minimum-Error Simplification}
We first define the notion of simplification of a point sequence under the $p$-DTW distance.
\begin{definition}
Let $\pi \in X^\ast$. An $(\alpha,\ell)$-simplification of $\pi$, under the $\dtw_p$ distance, is a point sequence $\tilde{\pi}\in X^{\leq \ell}$ such that 
\[\forall \pi' \in X^{\leq \ell}:\dtw_p(\pi,\tilde{\pi})\leq \alpha \cdot \dtw_p(\pi,{\pi'}).\] 
\end{definition}
We design a dynamic programming solution for the problem of computing a simplification. 
Each subproblem is parameterized by the length of a  prefix of the input point sequence and the maximum length of a simplification of that prefix. 
 A detailed description of our algorithm and the analysis can be found in Section~\ref{appendix:simpltrineq}. 
\begin{restatable}{theorem}{thmsimplification}
\label{theorem:simplification}
     There is an algorithm that given as input $\pi \in X^m$,   computes a $(2,\ell)$-simplification of $\pi$ under the $\dtw_p$, in time $O(m^4\ell)$. 
\end{restatable}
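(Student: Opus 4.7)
The plan is to reduce the problem to computing the best \emph{vertex-restricted} simplification, i.e., one whose vertices are chosen from among $\pi_1,\ldots,\pi_m$, via dynamic programming, and then show that such a restriction costs at most a factor $2$ in the $\dtw_p$ value.

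First I would set up the DP. Let $D[i][k]$ be the minimum value of $\sum_{j=1}^{k}\sum_{t=a_j}^{b_j}\rho(\pi_{c_j},\pi_t)^p$, taken over all partitions $1=a_1\le b_1<a_2=b_1+1\le b_2<\cdots\le b_k=i$ of $[1,i]$ into $k$ consecutive nonempty intervals $[a_j,b_j]$ and all choices of representatives $c_j\in[a_j,b_j]$. The recurrence is
\[ D[i][k] \;=\; \min_{\substack{j\in[k-1,i-1]\\ c\in[j+1,i]}}\Bigl(\,D[j][k-1] + \sum_{t=j+1}^{i}\rho(\pi_c,\pi_t)^p\Bigr),\qquad D[0][0]=0, \]
and the algorithm outputs the sequence realizing $\min_{1\le k\le \ell} D[m][k]$. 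I would next argue that $D[m][k]$ equals $\dtw_p(\pi,\tilde\pi)^p$ for the best vertex-restricted $\tilde\pi$ of length $k$: starting from any warping $W$ between $\pi$ and $\tilde\pi$, whenever some $\pi_i$ is aligned with both $\tilde\pi_j$ and $\tilde\pi_{j+1}$, deleting the pair $(i,j+1)$ (and relying on the valid $(1,1)$-step $(i,j)\!\to\!(i+1,j+1)$) produces a non-overlapping warping of no greater cost, so minimizing over partition-style warpings equals minimizing $\dtw_p$.

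The main approximation step is to show that vertex-restriction loses at most a factor of $2$ in $\dtw_p$. Fix any $\pi'=(\pi'_1,\ldots,\pi'_{\ell'})\in X^{\le \ell}$ with an optimal warping $W$ to $\pi$, and let $I_j=\{i:(i,j)\in W\}$ (a consecutive range) with $\OPT_j=\sum_{i\in I_j}\rho(\pi_i,\pi'_j)^p$, so $\sum_j \OPT_j = \dtw_p(\pi,\pi')^p$. Applying the triangle inequality and the power-mean inequality $(a+b)^p\le 2^{p-1}(a^p+b^p)$, and averaging over $i'\in I_j$,
\[ \frac{1}{|I_j|}\sum_{i'\in I_j}\sum_{i\in I_j}\rho(\pi_i,\pi_{i'})^p \;\le\; \frac{2^{p-1}}{|I_j|}\sum_{i'\in I_j}\sum_{i\in I_j}\bigl(\rho(\pi_i,\pi'_j)^p+\rho(\pi_{i'},\pi'_j)^p\bigr) \;=\; 2^p\,\OPT_j. \]
Hence some $c_j\in I_j$ satisfies $\sum_{i\in I_j}\rho(\pi_i,\pi_{c_j})^p\le 2^p\,\OPT_j$. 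Using the \emph{same} warping $W$ for $\tilde\pi=(\pi_{c_1},\ldots,\pi_{c_{\ell'}})$ gives $\dtw_p(\pi,\tilde\pi)^p\le 2^p\,\dtw_p(\pi,\pi')^p$. Since the DP finds an optimal vertex-restricted simplification (of some length $\le\ell$), its output is at least as good as this $\tilde\pi$ and therefore a $(2,\ell)$-simplification.

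For the running time, the DP table has $O(m\ell)$ cells, and each cell is computed by ranging over $O(m)$ choices of $j$ and $O(m)$ choices of $c$, with the cost of evaluating $\sum_{t=j+1}^{i}\rho(\pi_c,\pi_t)^p$ being $O(m)$ in the naive implementation, giving the stated $O(m^4\ell)$ bound (and in fact $O(m^3+m^2\ell)$ if one precomputes prefix sums $\sum_{t\le i}\rho(\pi_c,\pi_t)^p$ for each $c$). The main conceptual obstacle I anticipate is the warping-overlap issue used to justify that the partition-based DP really computes $\dtw_p(\pi,\tilde\pi)^p$; once that reduction to non-overlapping warpings is in hand, the averaging/triangle-inequality argument transfers the factor-$2^p$ loss on the $p$-th power to a factor-$2$ loss on the $\dtw_p$ distance.
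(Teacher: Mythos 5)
Your dynamic program is essentially the same one the paper uses (partition the index set $[m]$ into at most $\ell$ consecutive blocks, pick one representative vertex per block from the block itself or from $P$), and the running time analysis matches. However, the two key arguments are handled somewhat differently, and one of them has a gap.

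\textbf{The factor-$2$ argument is genuinely different and correct.} The paper takes the optimal (unrestricted) simplification $\pi^*$, projects each vertex $x_j^*$ of $\pi^*$ to its nearest neighbour $\tilde{x}_j^*$ in $P$, and observes that for any pair $(i,j)$ in the optimal warping, $\rho(x_j^*,\tilde{x}_j^*)\le\rho(x_j^*,x_i)$ because $x_i\in P$; this gives the pointwise bound $\rho(x_i,\tilde{x}_j^*)\le 2\rho(x_i,x_j^*)$, which immediately transfers to $\dtw_p$. You instead fix the warping, restrict the representative to come from the matched block $I_j$, and use an averaging/power-mean argument to show some $c_j\in I_j$ gives cost at most $2^p\OPT_j$. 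Both give the same factor $2$; the paper's nearest-vertex projection is a bit cleaner because it is pointwise and does not require averaging, while your version has the (slightly aesthetic) advantage that the chosen representative lies in the matched block itself, so the resulting sequence directly matches the DP's block structure without any further reduction.

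\textbf{There is a genuine gap in the warping-overlap reduction.} You claim that whenever $(i,j)$ and $(i,j+1)$ both lie in the warping, one can simply delete $(i,j+1)$ ``relying on the valid $(1,1)$-step $(i,j)\to(i+1,j+1)$.'' That step is only available when the successor of $(i,j+1)$ is $(i+1,j+1)$. If instead the successor is $(i,j+2)$ or $(i+1,j+2)$, deleting $(i,j+1)$ breaks the warping (you would jump two columns at once). In those cases $\tilde\pi_{j+1}$ is matched \emph{only} to $\pi_i$, and the correct move is to delete the vertex $\tilde\pi_{j+1}$ from the candidate sequence entirely, shortening it by one. The paper's Lemma~\ref{lemma:simplificationcorrectness1} does exactly this case analysis. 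A related consequence is that your stated invariant ``$D[m][k]$ equals $\dtw_p(\pi,\tilde\pi)^p$ for the best vertex-restricted $\tilde\pi$ of length $k$'' is false as written (e.g.\ for $k>m$, or whenever eliminating overlaps forces a shorter sequence); what is actually true, and what the final algorithm needs, is only that $\min_{k\le\ell}D[m][k]$ equals the best vertex-restricted $\dtw_p^p$-cost over all lengths $\le\ell$. Both issues are fixable along the paper's lines and do not endanger the overall plan, but as written your overlap argument is incomplete.
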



\subsubsection{Weak Triangle Inequality}
While DTW is not a metric and it is known that the triangle inequality fails for certain instances, 
 there is a weak version of the triangle inequality that is satisfied. 
 In particular, Lemire~\cite{L09} shows that 
 given  $x, y, z \in X^m$, and $ p \in[1, \infty)$, we
have ${\dtw}_p(x, z) \leq m^{1/p}\cdot \left( {\dtw}_p(x, y) + {\dtw}_p(y, z) \right)$. 
We slightly generalize the above inequality in a way that implies a better bound for the distance between two short point sequences using the distances to a potentially longer point sequence. All missing proofs can be found in \cref{appendix:simpltrineq}. 
\begin{restatable}{lemma}{lemtriangleineq}\label{lem:asymtriangineq2}

 For any $m_1,m_2 \in \NN$, let $x,z\in X^{\leq m_1}$, $y\in X^{m_2}$, and  $p \in[1, \infty) $. Then, 
 \[
 {\dtw}_p(x,z)\leq m_1^{1/p}\cdot \left(  {\dtw}_p(x,y)+{\dtw}_p(y,z) \right).
 \]
\end{restatable}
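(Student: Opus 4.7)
The plan is to prove this by explicitly constructing a warping from $x$ to $z$ by composing optimal warpings $W_1 \in \mathcal{W}_{|x|,|y|}$ (between $x$ and $y$) and $W_2 \in \mathcal{W}_{|y|,|z|}$ (between $y$ and $z$), and then applying the pointwise triangle inequality together with Minkowski's inequality.

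First, I would define the composed warping
\[
W = \{(i,k) \mid \exists\, j \text{ with } (i,j) \in W_1 \text{ and } (j,k) \in W_2\}.
\]
A brief structural check shows $W$ is a valid warping in $\mathcal{W}_{|x|,|z|}$: it contains $(1,1)$ and $(|x|,|z|)$ since both $W_1$ and $W_2$ do, and monotonicity plus the $\{(0,1),(1,0),(1,1)\}$ step condition for $W$ follow by composing the analogous conditions for $W_1$ and $W_2$ (going through the shared $y$-index $j$).

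Next, for each $(i,k) \in W$ I would fix an arbitrary witness $j(i,k)$ so that $(i,j(i,k)) \in W_1$ and $(j(i,k),k) \in W_2$. The pointwise triangle inequality in the metric $\rho$ gives
\[
\rho(x_i, z_k) \leq \rho(x_i, y_{j(i,k)}) + \rho(y_{j(i,k)}, z_k).
\]
Raising to the $p$th power, summing over $(i,k) \in W$, taking the $p$th root, and applying Minkowski's inequality in $\ell_p$ yields
\[
\Bigl(\sum_{(i,k) \in W} \rho(x_i,z_k)^p\Bigr)^{1/p} \leq A^{1/p} + B^{1/p},
\]
where $A = \sum_{(i,k) \in W} \rho(x_i, y_{j(i,k)})^p$ and $B = \sum_{(i,k) \in W} \rho(y_{j(i,k)}, z_k)^p$.

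The key combinatorial step is to bound $A$ and $B$ by $m_1 \dtw_p(x,y)^p$ and $m_1 \dtw_p(y,z)^p$, respectively. Regrouping $A$ by the pair $(i,j)$, each term $\rho(x_i,y_j)^p$ is counted $|\{k : j(i,k)=j\}|$ times; since the second coordinate ranges over indices of $z$, this count is at most $|z| \leq m_1$. Only pairs $(i,j) \in W_1$ can appear, so $A \leq m_1 \sum_{(i,j) \in W_1} \rho(x_i,y_j)^p = m_1 \dtw_p(x,y)^p$. Symmetrically, regrouping $B$ by $(j,k)$ gives a count of at most $|x| \leq m_1$ (here the use of $m_1$ and not $m_2$ is the improvement over Lemire's bound, since $|y|=m_2$ never enters), yielding $B \leq m_1 \dtw_p(y,z)^p$. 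Plugging back in and using $\dtw_p(x,z) \leq \bigl(\sum_{(i,k) \in W} \rho(x_i,z_k)^p\bigr)^{1/p}$ gives the claim.

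The only mildly delicate point is the combinatorial counting in the bounds on $A$ and $B$: one must take care that the count is controlled by $|x|$ or $|z|$ (which are bounded by $m_1$) and not by $|y|=m_2$. Once the witness $j(i,k)$ is fixed, though, this is immediate since $j(i,k)$ is a function of $(i,k)$ so fixing $j$ and $i$ (resp.\ $j$ and $k$) leaves only $k$ (resp.\ $i$) free. Checking that $W$ is a valid warping is routine but worth writing out carefully.
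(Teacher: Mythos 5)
Your proof takes essentially the same approach as the paper's: compose the optimal warpings, apply the pointwise triangle inequality followed by Minkowski's inequality, and bound the multiplicity with which each pair from $W_1$ (resp.\ $W_2$) is counted by $|z| \leq m_1$ (resp.\ $|x| \leq m_1$), so the length $|y| = m_2$ never enters. One small imprecision, which the paper also glosses over: the composed set $W$ is generally \emph{not} itself a valid warping sequence (take $|y|=1$, $|x|=|z|=2$: then $W = \{1,2\}\times\{1,2\}$, which cannot be ordered as a monotone staircase), but it does contain a valid warping as a subset, and together with nonnegativity of the summands that is all the step $\dtw_p(x,z) \leq \bigl(\sum_{(i,k)\in W} \rho(x_i,z_k)^p\bigr)^{1/p}$ actually needs.
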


The following theorem uses the weak triangle inequality and 
provides an upper bound on the expected cost of the restricted $(p,1)$-\mean obtained by first sampling an input point sequence uniformly at random and then computing an $(\alpha,\ell)$-simplification of this point sequence. This theorem will be useful in the next section, where we design an approximation scheme for the \mean problem that relies on a first rough estimation of the cost.

\begin{restatable}{theorem}{lemsampleonesequence}\label{lem:sampleonesequence}
    Let $T=\{\tau_1,\ldots,\tau_n\}\subseteq X^{\leq m}$ be a set of point sequences and let $p \in [1,\infty)$. Let $\pi$ be a point sequence picked uniformly at random from $T$, and let $\tilde{\pi}$ be an $(\alpha,\ell)$-simplification of $\pi$ under $\dtw_p$, 
     where $\ell\leq m$. Then,  
   \begin{align*}
       \pexpected_{\pi} \left[\cost_p^1(T,\tilde{\pi})\right]\leq (2+\alpha)m^{1/p}\ell^{1/p} \cdot \mathrm{OPT}_{\ell},
   \end{align*} 
    where $\mathrm{OPT}_{\ell}$ denotes the cost of the optimal restricted $(p,1)$-\mean of $T$. 
\end{restatable}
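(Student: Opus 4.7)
The plan is to bound $\dtw_p(\tilde{\pi},\tau_i)$ for each input sequence $\tau_i$ using the weak triangle inequality twice: once to route through an optimal restricted mean $c^*\in X^{\leq \ell}$ (which brings in an $m^{1/p}$ factor because one endpoint has complexity up to $m$), and once to relate $\tilde{\pi}$ to $c^*$ via the ``pivot'' $\pi$ itself (which brings in only an $\ell^{1/p}$ factor because both endpoints have complexity at most $\ell$). The simplification guarantee will then let us replace $\dtw_p(\tilde{\pi},\pi)$ by $\alpha\cdot\dtw_p(\pi,c^*)$, so everything collapses into terms of the form $\dtw_p(\tau_j,c^*)$, which sum to $\mathrm{OPT}_\ell$ when averaged over $\pi$.

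More concretely, fix an optimal restricted $(p,1)$-mean $c^*$ of complexity at most $\ell$ so that $\sum_{i=1}^n \dtw_p(c^*,\tau_i)=\mathrm{OPT}_\ell$. For the outer triangle inequality, I apply \cref{lem:asymtriangineq2} to $x=\tilde{\pi}$, $z=\tau_i$, $y=c^*$ with $m_1=m$ (both $\tilde{\pi}$ and $\tau_i$ live in $X^{\leq m}$), obtaining
\[
\dtw_p(\tilde{\pi},\tau_i)\ \leq\ m^{1/p}\bigl(\dtw_p(\tilde{\pi},c^*)+\dtw_p(c^*,\tau_i)\bigr).
\]
For the inner triangle inequality, I apply \cref{lem:asymtriangineq2} to $x=\tilde{\pi}$, $z=c^*$, $y=\pi$ with $m_1=\ell$ (since both $\tilde{\pi}$ and $c^*$ have complexity at most $\ell$), obtaining
\[
\dtw_p(\tilde{\pi},c^*)\ \leq\ \ell^{1/p}\bigl(\dtw_p(\tilde{\pi},\pi)+\dtw_p(\pi,c^*)\bigr)\ \leq\ \ell^{1/p}(\alpha+1)\dtw_p(\pi,c^*),
\]
where the last step uses that $\tilde{\pi}$ is an $(\alpha,\ell)$-simplification of $\pi$ and $c^*\in X^{\leq \ell}$.

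Combining, summing over $i\in[n]$, and then averaging over the uniformly random choice of $\pi\in T$, I get
\[
\pexpected_\pi\!\left[\cost_p^1(T,\tilde{\pi})\right]\ \leq\ m^{1/p}\ell^{1/p}(\alpha+1)\cdot\frac{1}{n}\sum_{j=1}^n n\cdot\dtw_p(\tau_j,c^*)\ +\ m^{1/p}\cdot\mathrm{OPT}_\ell,
\]
which simplifies to $m^{1/p}\bigl(\ell^{1/p}(\alpha+1)+1\bigr)\mathrm{OPT}_\ell$. Since $\ell^{1/p}\geq 1$, the parenthesized factor is at most $\ell^{1/p}(\alpha+2)$, yielding the claimed bound $(2+\alpha)\,m^{1/p}\ell^{1/p}\,\mathrm{OPT}_\ell$.

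I do not expect any real obstacle: the two applications of \cref{lem:asymtriangineq2} are essentially forced by which endpoints are ``short'' (complexity $\leq \ell$) and which are ``long'' (complexity $\leq m$), and the averaging trick over a random input sequence is standard. The one detail worth verifying is that I apply the weak triangle inequality with the correct $m_1$ each time; choosing $m_1=m$ outside and $m_1=\ell$ inside is what avoids blowing up the bound to $m^{2/p}$.
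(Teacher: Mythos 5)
Your proof is correct and follows essentially the same route as the paper's: the same two applications of \cref{lem:asymtriangineq2} with the same choices of $m_1$ ($m$ for the outer inequality through $c^*$, $\ell$ for the inner one through $\pi$), the same use of the simplification guarantee, the same averaging over the random $\pi$, and the same final cleanup $\ell^{1/p}(\alpha+1)+1\leq \ell^{1/p}(\alpha+2)$. The only difference is cosmetic: you bound $\dtw_p(\tilde{\pi},\tau_i)$ termwise before summing, whereas the paper manipulates the whole expectation inline.
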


\begin{proof}
 Let $c$ be an optimal $(p,1)$-\mean of $T$ with cost $\mathrm{OPT}_{\ell}$. 
Then,
\begin{align}
\pexpected_{\pi}[\mathrm{cost}(T,\tilde{\pi})]&=
\pexpected_{\pi}\hspace{-0.2em}\Biggl[\sum_{i=1}^n \dtw_p(\tau_i,\tilde{\pi})\Biggr] \leq \pexpected_{\pi}\hspace{-0.2em}\Biggl[m^{\frac{1}{p}} \sum_{i=1}^n \left( \dtw_p(\tau_i,c) +\dtw_p(c,\tilde{\pi}) \right)\Biggr] \label{eq:trineq1}\\
&=m^{1/p}\cdot (\mathrm{OPT}_{\ell}+n\cdot \pexpected_{\pi}[\dtw_p(c,\tilde\pi)]) \nonumber\\
&\leq m^{1/p}\cdot (\mathrm{OPT}_{\ell}+n\cdot \ell^{1/p} \cdot \pexpected_{\pi}[\dtw_p(c,\pi) + \dtw_p(\pi,\tilde\pi)]) \label{eq:trineq2}\\ 
&\leq  m^{1/p}\cdot (\mathrm{OPT}_{\ell}+(1+\alpha)\cdot n\cdot \ell^{1/p} \cdot \pexpected_{\pi}[\dtw_p(c,\pi)]) \nonumber\\
&=  m^{1/p}\cdot \left(\mathrm{OPT}_{\ell}+(1+\alpha)\cdot n\cdot \ell^{1/p} \cdot \sum_{\pi \in T} \dtw_p(c,\pi)\cdot \frac{1}{n}\right)\nonumber  \\ 
&=  m^{1/p}\cdot (\mathrm{OPT}_{\ell}+(1+\alpha)\cdot \ell^{1/p} \cdot \mathrm{OPT}_{\ell}) \leq (2+\alpha) m^{1/p} \ell^{1/p} \cdot \mathrm{OPT}_{\ell},\nonumber
\end{align}
where in Step (\ref{eq:trineq1}) and Step (\ref{eq:trineq2}), we applied \cref{lem:asymtriangineq2}.
\end{proof}

\subsection{Approximation Scheme for Point Sequences in Euclidean Spaces}
\label{section:onepluseps}

Now we study the restricted $(p,1)$-\mean problem for point sequences in the Euclidean space. This is exactly the problem of computing one median point sequence of complexity at most~$\ell$, under the $p$-DTW distance. 
Formally, input point sequences belong to  $\left(\RR^d\right)^{\leq m}$ and we are interested in computing a median point sequence in $\left(\RR^d\right)^{\ell}$. The distance between any two points $x,y\in \RR^d$ is measured by the Euclidean distance $\|x-y\|$. For any $x\in \RR^d$, $r>0$, $B(x,r)$ denotes the Euclidean ball $\{y\in \RR^d\mid \|x-y\|\leq r \}$. We also use  Euclidean grids:

\begin{definition}[grid]
\label{def:grid}
Given a number $r \in \mathbb{R}_{+}$, for $x = (x_1, \dots, x_d) \in \mathbb{R}^d$ we define by $G(r, x) = (\lfloor x_1/r\rfloor \cdot r, \ldots , \lfloor x_d/r\rfloor \cdot r)$ the $r$-grid-point of $x$. Let $P \subseteq \mathbb{R}^d$ be a subset of $\mathbb{R}^d$. The grid of cell width $r$ that covers $P$ is the set $\GG(P, r) = \{G(r, x) \mid x \in P\}$.
\end{definition}

A grid partitions $\mathbb{R}^d$ into cubic regions. For any $r \in \mathbb{R}_+$, $x \in P$, we have $\lVert x - G(r, x) \rVert \leq r \sqrt{d}$.

\subsubsection{Algorithm}

We build upon ideas developed in Section~\ref{section:simpltrineq} and  
we design a $(1+\epsilon)$-approximation algorithm for the restricted $(p,1)$-\mean problem. The algorithm is randomized and succeeds with probability $1-\delta$, where $\delta$ is a user-defined parameter. 

The high-level idea is the following. Given a set $T$ of $n$ point sequences, we first compute a rough estimate of the optimal cost. To do so, we sub-sample a sufficiently large number of input sequences that we store in a set $S$, and we compute  a  $(2,\ell)$-simplification for each one of  them. We detect a sequence in $S$ whose simplification minimizes the restricted $(p,1)$-\mean cost; this cost is denoted by  $R$. The results of Section~\ref{section:simpltrineq} imply  that with good probability, $R$ is a $O((m\ell)^{1/p})$ approximation of the optimal cost.  
We can now use $R$ to ``guess'' a refined  estimate for the restricted $(p,1)$-\mean cost which is within a constant factor from the optimal, by enumerating multiples of $2$ in the interval  $[\Omega(R(m\ell)^{-1/p}), R]$.   Assuming that we have such an estimate, we can use it to fine-tune a grid, which is then intersected with balls centered at the points of sequences in $S$. We use the resulting grid points to compute a set of candidate solutions. 
The idea here is that with good probability one of the point sequences in $S$ is very close to the optimal solution, so one of the candidate solutions will be a good approximation. 
The complete  pseudocode can be found in \cref{alg:alpha_apprx_median}.

\begin{algorithm}
\caption{Restricted $(p,1)$-\mean $(1+\epsilon)$-Approximation \label{alg:alpha_apprx_median}}
    \begin{algorithmic}[1]
        \Procedure{Med-Appr}{$T = \{ \tau_1 = (\tau_{1,1}, \dots, \tau_{1,|\tau_1|}), \dots, \tau_n = (\tau_{n,1}, \dots, \tau_{n,|\tau_n|}) \}, \epsilon, p, \delta $}
            \State $S\gets$ sample $\lceil\log(2/\delta)\rceil$ point sequences from $T$ uniformly and independently
            
            \hspace{\algorithmicindent} at random with replacement
            \State $\RRR \gets \emptyset $, $C\gets \emptyset$
            \For{\textbf{each} $\tau_i \in S$}
            \State $\tau_i' \gets$ $(2,\ell)$-simplification of $\tau_i$ under $\dtw_p$
            \State $\RRR \gets \RRR \cup \{  \cost_p^1(T,\tau_i') \}$
            \EndFor
            \State $R \gets \min \RRR$
            \State $\beta \gets 2 \cdot \left( \frac{68m^{1/p} }{\epsilon} +5 \right)^d$
            \State $I_R \gets \left\{\frac{R\cdot 2^{-i}}{ n }  \mid i=0,\ldots,\lceil 3+ \log(m\ell)/p \rceil \right\} $
            \For{\textbf{each} $r \in I_R$}
            \State $\gamma \gets \frac{\epsilon \cdot r}{(2m)^{1/p}\sqrt{d}}$
            \For{\textbf{each} $\tau_i \in S$}
            \State $\mathcal{B}(\tau_{i},4r)\gets \bigcup_{j=1}^{\lvert \tau_i \rvert} B(\tau_{i,j},4r)$
            \State $N \gets \mathbb{G}(\mathcal{B}(\tau_{i},4r),\gamma)$
            \If{$|N|\leq \ell \cdot \beta $}
            \State $C\gets C \cup N^{\leq \ell}$
            \EndIf
            \EndFor
            \EndFor
            \State \Return an arbitrary element of $\argmin\limits_{c \in C} \cost_p^1(T,c)$.
        \EndProcedure
    \end{algorithmic}
\end{algorithm}

\subsubsection{Analysis}
Now we analyze the running time and correctness of \cref{alg:alpha_apprx_median}. Missing proofs can be found in \cref{appendix:onepluseps}. 
We begin with a bound on the probability that $R$ is a rough approximation of the optimal median cost. 
\begin{restatable}{lemma}{lemroughapprox}
\label{lemma:roughapprox}
Let $c$ be an optimal restricted $(p,1)$-\mean of $T$. 
With probability at least $1-\delta/2$, \[R\leq  8m^{1/p}\ell^{1/p} \cdot \cost(T,c).\] 
\end{restatable}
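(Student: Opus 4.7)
The plan is to combine \cref{lem:sampleonesequence} with a Markov-inequality argument and then boost the success probability via independent repetition. Since each $\tau_i'$ is a $(2,\ell)$-simplification of $\tau_i$ under $\dtw_p$ (as guaranteed by \cref{theorem:simplification}), \cref{lem:sampleonesequence} applies with $\alpha = 2$. Letting $\pi$ denote a single sequence drawn uniformly at random from $T$ and $\tilde{\pi}$ an associated $(2,\ell)$-simplification, this yields
\[
\pexpected_{\pi}\bigl[\cost_p^1(T,\tilde{\pi})\bigr] \;\leq\; 4\,m^{1/p}\ell^{1/p}\cdot \cost(T,c),
\]
where $c$ is an optimal restricted $(p,1)$-\mean.

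Next, I would apply Markov's inequality to the nonnegative random variable $\cost_p^1(T,\tilde{\pi})$: with probability at least $1/2$ over the choice of $\pi$, we have $\cost_p^1(T,\tilde{\pi}) \leq 8\,m^{1/p}\ell^{1/p}\cdot \cost(T,c)$. Call a sample \emph{good} if this inequality holds; each individual sample fails to be good with probability at most $1/2$.

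The sample $S$ consists of $k := \lceil \log(2/\delta) \rceil$ independent draws, and the computed value $R$ is the minimum of $\cost_p^1(T,\tau_i')$ over all $\tau_i \in S$. Hence $R$ exceeds $8\,m^{1/p}\ell^{1/p}\cdot \cost(T,c)$ only when \emph{every} sample is bad, an event of probability at most $2^{-k} \leq \delta/2$ by independence. Taking the complement yields the claimed bound with probability at least $1-\delta/2$.

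The argument is essentially a direct chaining of known facts, so no step is a serious obstacle; the only minor care needed is verifying that $R$ being the \emph{minimum} over $S$ makes the failure event exactly the intersection of the per-sample failure events, which is immediate from the definition of $R$ in the algorithm.
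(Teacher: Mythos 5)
Your proposal is correct and follows essentially the same three-step argument as the paper's proof: apply \cref{lem:sampleonesequence} with $\alpha=2$ to get the expectation bound, use Markov's inequality to bound the per-sample failure probability by $1/2$, and exploit the independence of the $\lceil\log(2/\delta)\rceil$ samples to drive the joint failure probability below $\delta/2$. The only difference is cosmetic phrasing; the structure and all constants agree.
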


Next, we bound the probability that a point sequence in the sample $S$ is conveniently close to the optimal median.

\begin{restatable}{lemma}{lemneartooptimal}\label{lemma:neartooptimal}
Let $c$ be an optimal restricted $(p,1)$-\mean of $T$. 
With probability at least $1-\delta/2$, there exists a $\tau_i \in S$ such that  $\dtw_p(\tau_i,c)<(2/n)\cdot {\cost(T,c)}$. 
\end{restatable}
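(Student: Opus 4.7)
The plan is to combine a Markov-type counting argument with the independence of the samples.

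First I would observe that for the $(p,1)$-\mean problem we have $\cost(T,c)=\sum_{i=1}^n \dtw_p(\tau_i,c)$, so the quantity $\dtw_p(\tau_i,c)$ is the contribution of $\tau_i$ to this sum. Call an input point sequence $\tau_i \in T$ \emph{good} if $\dtw_p(\tau_i,c) < (2/n) \cdot \cost(T,c)$, and \emph{bad} otherwise. If the number of bad point sequences were strictly more than $n/2$, then
\begin{equation*}
\cost(T,c) \;=\; \sum_{i=1}^n \dtw_p(\tau_i,c) \;>\; \frac{n}{2}\cdot \frac{2}{n}\cdot \cost(T,c) \;=\; \cost(T,c),
\end{equation*}
a contradiction. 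Hence at least $n/2$ of the point sequences in $T$ are good.

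Next I would use that $S$ is formed by $k := \lceil \log(2/\delta)\rceil$ independent uniform draws from $T$ with replacement. Since at least half of the elements of $T$ are good, each individual draw lands on a good element with probability at least $1/2$. By independence, the probability that \emph{every} draw in $S$ is bad is at most $(1/2)^k \leq 2^{-\log(2/\delta)} = \delta/2$. Equivalently, with probability at least $1-\delta/2$, at least one $\tau_i \in S$ is good, which is exactly the desired event $\dtw_p(\tau_i,c) < (2/n)\cdot \cost(T,c)$.

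There is essentially no difficult step here: the argument is a pigeonhole-style counting followed by an independence-based failure probability bound. The only point that requires minor care is ensuring the strict-versus-weak inequality is handled consistently (defining a sequence as good with a strict inequality $<$ is fine since the counting argument still forces more than half to satisfy it). No earlier results from the paper are needed beyond the definition of $\cost$ for the $(p,1)$-mean objective.
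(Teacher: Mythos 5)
Your proof is correct and is essentially the same as the paper's: the paper applies Markov's inequality to the uniformly random draw to get that a single sample is "good" with probability at least $1/2$, and then uses independence of the $\lceil\log(2/\delta)\rceil$ samples, which is exactly what your pigeonhole-style counting followed by the $(1/2)^{|S|}$ bound does. Your explicit counting argument is in fact slightly cleaner on the strict-versus-weak inequality point, since Markov's inequality as usually stated yields $\Pr[\dtw_p(\tau_i,c) \leq (2/n)\cost(T,c)] \geq 1/2$ whereas your argument directly establishes that at least half of $T$ satisfies the strict bound claimed in the lemma.
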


The set $I_R$ contains a value $r$ such that $nr$ is within a factor of $2$ from the optimal cost. 
\begin{restatable}{lemma}{lemrightscale}\label{lemma:rightscale}
Let $c$ be an optimal restricted $(p,1)$-\mean of $T$. If 
$ R\leq 8 m^{1/p}\ell^{1/p} \cost(T,c)$, 
then there exists $r\in I_R$ such that 
$\cost(T,c) \in [nr, 2nr]$.
\end{restatable}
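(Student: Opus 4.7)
The plan is to show that the discrete set $I_R$ of geometrically decreasing scales contains a value $r$ whose doubling interval $[nr, 2nr]$ traps $\cost(T,c)$, by sandwiching $\cost(T,c)$ between a trivial upper bound coming from optimality of $c$ and a lower bound coming from the hypothesis. First I would establish the upper bound $\cost(T,c)\leq R$. Each simplification $\tau_i'$ produced inside the first loop of \cref{alg:alpha_apprx_median} is a $(2,\ell)$-simplification, hence lies in $X^{\leq \ell}$ and is a feasible candidate for the restricted $(p,1)$-\mean problem. Because $c$ is optimal, $\cost(T,c)\leq \cost_p^1(T,\tau_i')$ for every sampled $\tau_i$, and taking the minimum over $S$ yields $\cost(T,c)\leq R$.

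Next I would reparameterize $I_R$ as $\{R/(2^i n) : i\in\{0,\ldots,K\}\}$ with $K = \lceil 3+\log(m\ell)/p\rceil$, so that for $r = R/(2^i n)$ the target interval becomes $[nr,2nr] = [R/2^i,\,R/2^{i-1}]$. As $i$ ranges over $\{0,1,\ldots,K\}$, these closed intervals share endpoints and their union is exactly $[R/2^K,\,2R]$.

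To conclude, I would combine the two sides. Rewriting the hypothesis $R \leq 8m^{1/p}\ell^{1/p}\cost(T,c)$ gives
\[
\cost(T,c) \;\geq\; \frac{R}{8(m\ell)^{1/p}} \;=\; \frac{R}{2^{\,3+\log(m\ell)/p}} \;\geq\; \frac{R}{2^K},
\]
where the last inequality uses $K\geq 3+\log(m\ell)/p$. Together with $\cost(T,c)\leq R\leq 2R$, this places $\cost(T,c)$ inside $[R/2^K,2R] = \bigcup_{i=0}^{K}[R/2^i,R/2^{i-1}]$, so some $i\in\{0,\ldots,K\}$ satisfies $\cost(T,c)\in[R/2^i,R/2^{i-1}]$, and the corresponding $r = R/(2^i n)\in I_R$ fulfils $\cost(T,c)\in[nr,2nr]$.

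I do not foresee any substantial obstacle. The argument is essentially a sandwich between the feasibility-based inequality $\cost(T,c)\leq R$ and the hypothesized inequality $\cost(T,c)\geq R/(8(m\ell)^{1/p})$. The only quantitative calibration to check is that the ceiling in $K=\lceil 3+\log(m\ell)/p\rceil$ is large enough to cover the full range of magnitudes that $\cost(T,c)/R$ can take, which is by construction the precise reason this value was chosen in the definition of $I_R$.
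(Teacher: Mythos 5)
Your proposal is correct and takes essentially the same route as the paper: establish $\cost(T,c)\leq R$ from the feasibility of the simplifications, use the hypothesis to lower-bound $\cost(T,c)$ by $R/(8(m\ell)^{1/p})$, and then locate $\cost(T,c)$ in one of the dyadic intervals $[R/2^{i},R/2^{i-1}]$ indexed by $I_R$. The only difference is that you spell out explicitly that the union of these intervals is $[R/2^K,2R]$ and that $K=\lceil 3+\log(m\ell)/p\rceil$ suffices to cover the lower end, whereas the paper's proof leaves this range check implicit.
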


The following is an upper bound on the number of grid cells needed to cover a Euclidean ball. Similar bounds appear often in the literature, but they are typically asymptotic and not sufficient for our needs. Therefore, we prove  an exact (non-asymptotic) upper bound. 
\begin{restatable}{lemma}{lemvolumebound}
\label{lemma:volumetricbound}
Let $x\in \mathbb{R}^d$, $r>0$, $\gamma>0$.  
\[
\left|\mathbb{G}\left( B(x,8 r),\gamma\right)\right| \leq 2 \cdot \left( \frac{34 r}{\gamma \sqrt{d}}+5 \right)^d.
\]
 
\end{restatable}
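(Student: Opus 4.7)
The plan is a standard volumetric covering argument. First I observe that every cell of the $\gamma$-grid is an axis-aligned cube of side $\gamma$, so it has diameter $\gamma\sqrt{d}$; hence any cell meeting $B(x,8r)$ lies inside the enlarged ball $B(x,8r+\gamma\sqrt{d})$. Since the cells are pairwise disjoint and each has volume $\gamma^d$, a volume-packing count gives
\[
\left|\mathbb{G}(B(x,8r),\gamma)\right| \;\leq\; \frac{V_d\,(8r+\gamma\sqrt{d})^d}{\gamma^d} \;=\; V_d\,d^{d/2}(8s+1)^d,
\]
where $V_d := \pi^{d/2}/\Gamma(d/2+1)$ denotes the volume of the unit Euclidean ball in $\mathbb{R}^d$ and $s := r/(\gamma\sqrt{d})$.

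The remaining task is to show $V_d\,d^{d/2}(8s+1)^d \leq 2(34s+5)^d$. For this I would invoke the sharp Stirling-type lower bound $\Gamma(d/2+1) > \sqrt{\pi d}\,(d/(2e))^{d/2}$, valid for every $d\geq 1$ (it follows from the Stirling series expansion of $\ln\Gamma$ because the truncation remainder is positive). This implies $V_d\,d^{d/2} < (2\pi e)^{d/2}/\sqrt{\pi d}$, so, after taking $d$-th roots and setting $\alpha := (2\sqrt{\pi d})^{1/d}$, the claim reduces to the affine inequality
\[
\sqrt{2\pi e}\,(8s+1) \;\leq\; \alpha\,(34s+5).
\]
Since $2\sqrt{\pi d}\geq 2\sqrt{\pi}>1$ for every $d\geq 1$, we have $\alpha>1$, so comparing the constant and $s$-slope terms separately---$\sqrt{2\pi e}\approx 4.133\leq 5\leq 5\alpha$ and $8\sqrt{2\pi e}\approx 33.06\leq 34\leq 34\alpha$---yields the required inequality, which I then raise back to the $d$-th power to conclude.

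The only real obstacle is the tightness of the numerical margin: the constants $34$ and $5$ were apparently chosen so that $8\sqrt{2\pi e}<34$ just barely holds, which forces the derivation to use Stirling in its sharp form and rules out any looser estimate of $V_d$ (for instance, the crude bound $V_d\leq \pi^{d/2}$ loses the decisive $1/\sqrt{\pi d}$ factor and breaks the inequality). Beyond checking this small numerical verification, the argument is elementary geometry and arithmetic.
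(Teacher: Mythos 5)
Your proposal is correct and follows essentially the same route as the paper: a volume-packing bound giving $\lvert\mathbb{G}(B(x,8r),\gamma)\rvert \leq \mathrm{vol}(B(x,8r+\gamma\sqrt d))/\gamma^d$, followed by a Stirling-type lower bound on $\Gamma(d/2+1)$ (the paper derives its version from Binet's second formula, which is the same sharp bound you invoke, just stated at $z=d/2+1$ rather than via $\Gamma(z+1)$), and a final numerical verification of the constants. Your reduction via the substitution $s = r/(\gamma\sqrt d)$ to the affine inequality $\sqrt{2\pi e}\,(8s+1)\leq\alpha(34s+5)$ is a tidier way to organize the last step than the paper's inequality chain, but the mathematical content is the same, and your observation that the crude bound $V_d\leq\pi^{d/2}$ would not suffice is accurate.
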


We now focus on the iteration of the algorithm with  $r\in I_R$, $\tau_i\in S$, such that $r$ satisfies the property guaranteed by Lemma~\ref{lemma:rightscale} and $\tau_i$ satisfies the property guaranteed by  Lemma~\ref{lemma:neartooptimal}. We claim that in that iteration, an $(1+\epsilon)$-approximate median is inserted to  $C$. 

\begin{restatable}{lemma}{lemcorrectnessapproxscheme}\label{lemma:correctnessapproxscheme}
Let $c$ be an optimal restricted $(p,1)$-\mean of $T$. Let $r^{\ast}$ be such that $\cost(T,c) \in [nr^{\ast}, 2nr^{\ast}]$
and let $\gamma^{\ast} = \frac{\epsilon r^{\ast}}{(2m)^{1/p} \sqrt{d}} $. If $\tau_i \in S$ is such that $\dtw_p(\tau_i,c)\leq (2/n) \cdot {\cost(T,c)}$ then 
\begin{enumerate}[label=\roman*)]
\item \label{boundednet}$|\mathbb{G}(\BBB(\tau_i,4r^{\ast}),\gamma^{\ast})| \leq \ell \cdot 2 \cdot \left( \frac{34 r^{\ast}}{\gamma^{\ast} \sqrt{d}}+5 \right)^d$ and 
\item \label{correctanswer} there exists $c' \in \mathbb{G}(\BBB(\tau_i,4r^{\ast}),\gamma^{\ast})^{\leq \ell}$ such that 
$\cost(T,c')\leq (1+\epsilon)\cdot \cost(T,c)$.
\end{enumerate}
\end{restatable}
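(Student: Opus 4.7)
My plan is to construct the candidate $c'$ by snapping each vertex of the optimal mean $c$ to its nearest grid point, i.e., setting $c' = (G(\gamma^*, c_1), \dots, G(\gamma^*, c_{|c|}))$. The starting observation for both parts is the chain $\dtw_p(\tau_i, c) \leq (2/n)\cost(T,c) \leq (2/n) \cdot 2nr^* = 4r^*$. Fixing an optimal $p$-warping $W$ between $c$ and $\tau_i$, every matched pair $(j,k) \in W$ therefore satisfies $\|c_j - \tau_{i,k}\| \leq \dtw_p(c, \tau_i) \leq 4r^*$, since a single summand is dominated by the whole $\dtw_p$-sum.

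For part \ref{boundednet}, this pointwise bound together with the triangle inequality in $\RR^d$ gives the containment $\BBB(\tau_i,4r^*) = \bigcup_{k} B(\tau_{i,k},4r^*) \subseteq \bigcup_{j} B(c_j, 8r^*)$, where $j$ ranges over the at most $\ell$ indices appearing as first coordinate in $W$. Snapping to the grid on both sides and summing the bound of \cref{lemma:volumetricbound} (applied with $r = r^*$ and $\gamma = \gamma^*$) over these at most $\ell$ balls yields exactly $\ell \cdot 2 \cdot \left(\frac{34 r^*}{\gamma^* \sqrt{d}} + 5\right)^d$.

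For part \ref{correctanswer}, the same pointwise bound gives $c_j \in B(\tau_{i,k}, 4r^*) \subseteq \BBB(\tau_i, 4r^*)$ for every $j$, and hence $c'_j = G(\gamma^*, c_j) \in \mathbb{G}(\BBB(\tau_i,4r^*),\gamma^*)$ by the definition of the grid of a set, placing $c'$ in the required candidate family $\mathbb{G}(\BBB(\tau_i,4r^*),\gamma^*)^{\leq \ell}$. To bound $\cost(T, c')$, I would reuse the optimal warping $W_k$ between $c$ and each input $\tau_k$ to upper-bound $\dtw_p(c', \tau_k)$. Minkowski's inequality in $\ell^p$ applied over the pairs of $W_k$, combined with the per-vertex snap error $\|c'_j - c_j\| \leq \gamma^* \sqrt{d}$ and the warping-length bound $|W_k| \leq |c| + |\tau_k| - 1 < 2m$, yields $\dtw_p(c', \tau_k) \leq \dtw_p(c, \tau_k) + (2m)^{1/p} \gamma^* \sqrt{d}$. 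Summing over $\tau_k \in T$ and plugging in $\gamma^* = \epsilon r^*/((2m)^{1/p} \sqrt{d})$ produces $\cost(T, c') \leq \cost(T, c) + n \epsilon r^*$; since $\cost(T,c) \geq n r^*$ by hypothesis, the additive error is at most $\epsilon \cdot \cost(T,c)$, closing the $(1+\epsilon)$-bound.

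The only mildly delicate step is the Minkowski piece, where one must take care to reuse the optimal warping $W_k$ for the $c$-to-$\tau_k$ part while paying for the snapping error only along that same warping (rather than invoking the non-metric $\dtw_p$ triangle inequality, which would lose extra $m^{1/p}$ factors). Everything else is a direct combination of the pointwise warping bound, the covering argument, and \cref{lemma:volumetricbound}.
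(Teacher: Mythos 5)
Your proof follows essentially the same route as the paper's: establish $\dtw_p(\tau_i, c) \leq 4r^{\ast}$, extract the pointwise bound $\|c_j - \tau_{i,k}\| \leq 4r^{\ast}$ along the optimal warping to get the containment $\BBB(\tau_i,4r^{\ast}) \subseteq \bigcup_j B(c_j, 8r^{\ast})$ for part i), snap the vertices of $c$ to the grid for part ii), and control the resulting cost increase via Minkowski's inequality applied along each fixed optimal warping $W_k$ together with $|W_k| \leq 2m$ and $\cost(T,c) \geq nr^{\ast}$. The argument is correct; you are somewhat more explicit than the paper about the step that a single summand is dominated by the full $\dtw_p$ sum, but there is no substantive difference.
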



The correctness of our algorithm follows by combining the above. 
\begin{restatable}{lemma}{lemoneplusepscorrectness}\label{lemma:oneplusepscorrectness}
    Given a set $T\subset\left( \RR^d\right)^{\leq m}$, $\epsilon>0$, $p\in [1,\infty)$, $\delta\in (0,1)$, \cref{alg:alpha_apprx_median} 
    returns a $(1+\epsilon)$-approximate restricted $(p,1)$-\mean with probability of success $1-\delta$.
\end{restatable}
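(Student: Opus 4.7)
The plan is to bolt together the four preceding lemmas (\cref{lemma:roughapprox}, \cref{lemma:neartooptimal}, \cref{lemma:rightscale}, \cref{lemma:correctnessapproxscheme}) into a single probabilistic argument, together with a short verification that the hard-coded threshold $\beta$ in \cref{alg:alpha_apprx_median} is large enough to accommodate the volume bound supplied by \cref{lemma:volumetricbound}.

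Let $c$ be an optimal restricted $(p,1)$-\mean of $T$. I would introduce two good events: $E_1$, the event that $R\leq 8m^{1/p}\ell^{1/p}\cdot \cost(T,c)$, and $E_2$, the event that some $\tau_i\in S$ satisfies $\dtw_p(\tau_i,c)\leq (2/n)\cdot \cost(T,c)$. By \cref{lemma:roughapprox} and \cref{lemma:neartooptimal}, each holds with probability at least $1-\delta/2$, so by a union bound $\Pr[E_1\cap E_2]\geq 1-\delta$. The remainder of the proof is then a deterministic statement conditional on this intersection.

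Under $E_1$, \cref{lemma:rightscale} furnishes some $r^*\in I_R$ with $\cost(T,c)\in[nr^*,2nr^*]$; under $E_2$ I fix a $\tau_i\in S$ that is close to $c$. I would then zoom in on the loop iteration processing the pair $(r^*,\tau_i)$, at which the algorithm sets $\gamma^*=\epsilon r^*/((2m)^{1/p}\sqrt{d})$ and forms $N=\mathbb{G}(\mathcal{B}(\tau_i,4r^*),\gamma^*)$. Part~(i) of \cref{lemma:correctnessapproxscheme} bounds $|N|$ by $\ell\cdot 2\bigl(34r^*/(\gamma^*\sqrt{d})+5\bigr)^d$, and part~(ii) provides a candidate $c'\in N^{\leq\ell}$ with $\cost(T,c')\leq (1+\epsilon)\cost(T,c)$. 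Substituting the definition of $\gamma^*$ into the size bound I get $34r^*/(\gamma^*\sqrt{d})=34(2m)^{1/p}/\epsilon\leq 68m^{1/p}/\epsilon$, using $2^{1/p}\leq 2$ for $p\geq 1$, so $|N|\leq \ell\beta$ and the test in the inner loop passes, meaning every element of $N^{\leq\ell}$ actually gets inserted into $C$. In particular $c'\in C$, and since the algorithm returns a $\cost$-minimizer over $C$, its output is a $(1+\epsilon)$-approximate restricted $(p,1)$-\mean.

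The only mildly delicate step I anticipate is the constant-matching at the end: checking that the fixed value of $\beta$ encoded in the algorithm dominates the volumetric bound of \cref{lemma:correctnessapproxscheme}~(i) for the specific $\gamma^*$ tied to $r^*$. This is precisely where the slack $2^{1/p}\leq 2$ has to be absorbed. Once that bookkeeping is in place, the argument reduces to a clean union bound plus a minimization over a candidate set that, on $E_1\cap E_2$, provably contains a $(1+\epsilon)$-approximate \mean{}.
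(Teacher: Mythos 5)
Your proof is correct and follows essentially the same route as the paper: a union bound over the events from Lemma~\ref{lemma:roughapprox} and Lemma~\ref{lemma:neartooptimal}, followed by a deterministic application of Lemma~\ref{lemma:rightscale} and both parts of Lemma~\ref{lemma:correctnessapproxscheme}. The one place you go beyond the paper's write-up is the explicit verification that $34r^*/(\gamma^*\sqrt{d}) = 34(2m)^{1/p}/\epsilon \leq 68m^{1/p}/\epsilon$ so that the inner-loop test $|N|\leq\ell\beta$ passes; the paper simply asserts $|N^*|\leq\ell\beta$, so your bookkeeping is a welcome sanity check rather than a departure.
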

\begin{proof}
Let $c$ be an optimal restricted $(p,1)$-\mean of $T$. 
 Applying a union bound over the events of \cref{lemma:roughapprox} and \cref{lemma:neartooptimal}, we conclude that with probability at least $1-\delta$, we have
$R\leq  8m^{1/p}\ell^{1/p} \cdot \cost(T,c)$,
   and there exists a $\tau_i \in S$ such that $ \dtw_p(\tau_i,c)<(2/n)\cdot {\cost(T,c)}$. 
   We show correctness assuming that the above two events hold. By \cref{lemma:rightscale} we know that there exists an $r^{\ast}\in I_R$ such that $\cost(T,c) \in [nr^{\ast}, 2nr^{\ast}]$.

We focus on the iteration where $r^{\ast}$ is considered.
Let $\gamma^{\ast}$ be the value of $\gamma$ in that  iteration 
and let $N^{\ast} $ be the set $N$ in that iteration. By \cref{lemma:correctnessapproxscheme} i), $|N^{\ast}|\leq \ell \beta$ and all point sequences of complexity at most $\ell$ defined by points in  ${N^{\ast}}$ will be considered as possible solutions. By \cref{lemma:correctnessapproxscheme} ii), there is a point sequence in  $\left(N^{\ast}\right)^{\leq \ell}$ which is a $(1+\epsilon)$-approximate solution.
\end{proof}

Finally, we bound the running time of \cref{alg:alpha_apprx_median}.
\begin{theorem}
\label{theorem:onepluseps}
     Given a set $T\subset\left( \RR^d\right)^{\leq m}$ of $n$ point sequences, $\epsilon\in (0,m^{1/p}]$, $p\in [1,\infty)$, $\delta\in (0,1)$, \cref{alg:alpha_apprx_median} returns a $(1+\epsilon)$-approximate restricted $(p,1)$-\mean with probability of success $1-\delta$ and has running time in   
     $O\left( \left(  m^4   + n m   \cdot \left(\frac{m^{1/p}}{\epsilon}  \right)^{d\ell} \cdot \frac{\log(m)}{p} \right)\cdot  \log\left(\frac{1}{\delta}\right)  \right)$.
\end{theorem}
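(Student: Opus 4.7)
The plan is to sum the running times of the individual steps of \cref{alg:alpha_apprx_median}; correctness has already been established in \cref{lemma:oneplusepscorrectness}, so only the time complexity remains to be verified. First I would bound the preprocessing cost. The sample $S$ has size $|S|=\lceil\log(2/\delta)\rceil$, and for each $\tau_i\in S$ the $(2,\ell)$-simplification $\tau_i'$ is obtained in time $O(m^4\ell)$ via \cref{theorem:simplification}. Evaluating $\cost_p^1(T,\tau_i')$ requires $n$ DTW computations between sequences of lengths $O(m)$ and $O(\ell)$, each taking $O(m\ell)$ time by the standard dynamic program. Since $\ell$ is a constant parameter, summing these costs over $S$ contributes $O((m^4+nm)\log(1/\delta))$ to the total.

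Second, I would analyze the main loop. The set $I_R$ has cardinality $O(1+\log(m\ell)/p)=O(\log(m)/p)$, and with $|S|=O(\log(1/\delta))$ this gives $O(\log(m)\log(1/\delta)/p)$ inner iterations. In each iteration I would apply \cref{lemma:volumetricbound} (rescaled to radius $4r$ using $\gamma=\epsilon r/((2m)^{1/p}\sqrt{d})$) to each of the at most $m$ balls making up $\BBB(\tau_i,4r)$, showing that every ball contributes $O(\beta)$ grid points, where $\beta=2(68m^{1/p}/\epsilon+5)^d$. The grid $N$ can therefore be generated ball-by-ball by iterating over the lattice points inside each ball's axis-aligned bounding box, short-circuiting as soon as the running count exceeds $\ell\beta$, so this step costs only $O(\ell\beta)$ per inner iteration.

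Third, whenever the guard $|N|\leq\ell\beta$ holds, the algorithm enumerates $N^{\leq\ell}$ of size $O((\ell\beta)^\ell)=O(\beta^\ell)$ (using that $\ell$ is constant), and for each candidate $c$ it evaluates $\cost_p^1(T,c)$ via $n$ DTW computations in $O(nm\ell)=O(nm)$ time. Multiplying the per-iteration candidate-evaluation cost $O(\beta^\ell\cdot nm)$ by the number of iterations yields a main-loop contribution of $O(nm\cdot\beta^\ell\cdot\log(m)\log(1/\delta)/p)$. Under the hypothesis $\epsilon\leq m^{1/p}$, the additive constant $5$ inside $\beta$ is absorbed into the leading term, so $\beta^\ell=O((m^{1/p}/\epsilon)^{d\ell})$. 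Combining this with the preprocessing bound produces the claimed running time $O((m^4+nm\cdot(m^{1/p}/\epsilon)^{d\ell}\cdot\log(m)/p)\log(1/\delta))$.

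The main potential obstacles are essentially bookkeeping. One must enforce the early-termination guard during grid generation so that no iteration pays more than $O(\ell\beta+\beta^\ell nm)$ total, rather than materialising an entire $O(m\beta)$-sized lattice that could dominate the stated bound in a degenerate case; iterating ball-by-ball with a running counter handles this cleanly. The only other delicate point is absorbing the additive $5$ inside $\beta$ under the assumption $\epsilon\leq m^{1/p}$, so that $\beta^\ell$ takes the clean form $(m^{1/p}/\epsilon)^{d\ell}$ up to constants depending on $d$ and $\ell$. All remaining steps are routine products of the counts above.
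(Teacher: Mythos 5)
Your proof is correct and follows essentially the same route as the paper: defer correctness to \cref{lemma:oneplusepscorrectness}, bound the preprocessing cost via \cref{theorem:simplification}, bound $|I_R|\cdot|S|$, and charge $O(\beta^\ell\cdot nm)$ per inner iteration for candidate enumeration and cost evaluation, absorbing the additive $5$ in $\beta$ via the hypothesis $\epsilon\leq m^{1/p}$. Your extra care about short-circuiting the grid generation to avoid materialising an $O(m\beta)$-sized lattice is a reasonable implementation note that the paper glosses over, though even without it that cost is dominated since $m\beta\le nm\beta^\ell$ for $\ell\ge 2$; so both accounts land on the same bound.
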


\begin{proof}
Correctness follows from \cref{lemma:oneplusepscorrectness}. It remains to bound the running time. 
 For each one of the point sequences in $S$, we compute its  $(2,\ell)$-simplification in $O(d m^4\ell)$ time  using \cref{theorem:simplification} and its median cost in $O(dnm\ell)$ time. Hence, the total time needed to compute $\mathcal{R}$ and then $R$ is $O((m^4+nm) \cdot d \ell  \log(1/\delta))$. The set $I_R$ has cardinality $|I_R|= O\left( \log (m\ell)/p\right)$. For each value $r\in I_R$, we add at most 
$     \sum_{i=1}^{\ell} |N|^{i}\cdot |S|\leq \ell
     \left(\ell \cdot 2\cdot\left(  68 m^{1/p}\epsilon^{-1} + 5\right)^d \right)^{\ell}\cdot|S|$
 candidates. For each candidate point sequence in $C$, we compute the cost in time $O(dnm\ell)$. Since $d$ and $\ell$ are considered constants, the total running time is
  $O\left( \left(  m^4  + n m   \cdot \left(m^{1/p}\epsilon^{-1}  \right)^{d\ell} \cdot \log(m)/p \right)\cdot  \log\left(1/\delta\right)  \right)$.
\end{proof}

\section{Application to Clustering}\label{sec:clustering}

We can apply the results of \cref{section:constantfactor,section:simpltrineq} to the problem of clustering of point sequences. The details can be found in Appendix~\ref{appendix:clustering}.  
\begin{definition}[$(k,\ell,p,q)$-clustering]
    The $(k,\ell,p,q)$-clustering problem is defined as follows, where $k\in \mathbb{N}$, $\ell \in \mathbb{N}_{>1}$ and $p, q \in [1, \infty)$ are fixed (constant) parameters of the problem: given a set $T = \{ \tau_1, \dots, \tau_n \} \subseteq X^{\leq m}$ of point sequences, compute a set $C \subseteq X^{\leq \ell}$ of $k$ point sequences, such that $ \cost_p^q(T,C) = \sum_{i=1}^n \min_{c \in C} \dtw_p(c, \tau_i)^q$ is minimal.
\end{definition}

Solving an instance of the  $(k,\ell,p,q)$-clustering problem is equivalent to solving an instance of the $k$-median problem, where the distance between any center $c$ and any other element $x$ is measured by $\dtw_p(x,c)^q$.  
To solve the $k$-median problem, we  apply a generic framework, as described in  \cite{BDS20}, which reduces the clustering problem to a problem of computing a set that contains medians for sufficiently large subsets of the input, with good probability.   
To solve this latter problem, we adapt our algorithms from Sections \ref{section:constantfactor} and \ref{section:simpltrineq}. \cref{alg:1_median_2_3} can be easily modified to return the set of candidates, instead of returning the best among them. We show that setting parameters appropriately yields an algorithm that satisfies the properties required by the above-mentioned framework and 
leads to a randomized algorithm for the $(k,\ell,p,p)$-clustering problem  with approximation factor in $O(2^p)$, probability of success $1-\delta$ and running time in $O\left(({2^p k m}  \ln\left({\ell}/{\delta}\right))^{\ell(k+2)}
\cdot  n  m   \right)$, assuming that the time needed to compute the distance between two points  is constant (see \cref{theorem:kmedian1} in Appendix~\ref{appendix:clustering}). Similarly, the random sampling method implied by \cref{lem:sampleonesequence} can be used to produce a sufficiently large sample of candidates, which leads to a randomized algorithm for the $(k,\ell,p,1)$-clustering problem with approximation factor in $O(m^{1/p}\ell^{1/p})$, probability of success $1-\delta$ and running time in $O\left((k \log(1/\delta))^{k+2}
\cdot  n  m  +
m^4(k \log(1/\delta))^{k+2} \right)$, assuming again constant time for distance computations of points (see \cref{theorem:kmedian2} in Appendix~\ref{appendix:clustering}).

\section{Conclusions}
We have studied \mean problems for point sequences under the $p$-DTW distance and devised exact and approximation algorithms for several relevant problem variants where the complexity of the \mean is restricted by a parameter $\ell$. Our exact algorithm runs in polynomial time for constant $\ell$ and $d$. 
The running times of our approximation algorithms depend only linearly on the number of input sequences. The dependency on the length of the input sequences, however, is high;  the dependency on the parameter $\ell$ is even exponential. 
We hope that the algorithmic techniques developed in this paper will inspire further work on the topic. 
In particular, we think the weak triangle inequality and the simplification algorithm could be of great use. 
In contrast, a proof of hardness of approximation for the central problem studied in this paper is not in sight. We leave this as an open problem.

\bibliography{bibliography}

\begin{thebibliography}{10}

\bibitem{DBLP:journals/bioinformatics/AachC01}
John Aach and George~M. Church.
\newblock Aligning gene expression time series with time warping algorithms.
\newblock {\em Bioinformatics}, 17(6):495--508, 2001.

\bibitem{DBLP:conf/focs/AbboudBW15}
Amir Abboud, Arturs Backurs, and Virginia~Vassilevska Williams.
\newblock Tight hardness results for {LCS} and other sequence similarity
  measures.
\newblock In Venkatesan Guruswami, editor, {\em {IEEE} 56th Annual Symposium on
  Foundations of Computer Science, {FOCS} 2015, Berkeley, CA, USA, 17-20
  October, 2015}, pages 59--78. {IEEE} Computer Society, 2015.

\bibitem{ACS03}
Waleed~H. Abdulla, David Chow, and Gary Sin.
\newblock Cross-words reference template for dtw-based speech recognition
  systems.
\newblock In {\em TENCON 2003. Conference on Convergent Technologies for
  Asia-Pacific Region}, volume~4, pages 1576--1579 Vol.4, 2003.

\bibitem{DBLP:conf/kdd/BerndtC94}
Donald~J. Berndt and James Clifford.
\newblock Using dynamic time warping to find patterns in time series.
\newblock In Usama~M. Fayyad and Ramasamy Uthurusamy, editors, {\em Knowledge
  Discovery in Databases: Papers from the 1994 {AAAI} Workshop, Seattle,
  Washington, USA, July 1994. Technical Report {WS-94-03}}, pages 359--370.
  {AAAI} Press, 1994.

\bibitem{DBLP:conf/gis/BrankovicBKNPW20}
Milutin Brankovic, Kevin Buchin, Koen Klaren, Andr{\'{e}} Nusser, Aleksandr
  Popov, and Sampson Wong.
\newblock {(k, l)-Medians Clustering of Trajectories Using Continuous Dynamic
  Time Warping}.
\newblock In Chang{-}Tien Lu, Fusheng Wang, Goce Trajcevski, Yan Huang,
  Shawn~D. Newsam, and Li~Xiong, editors, {\em {SIGSPATIAL} '20: 28th
  International Conference on Advances in Geographic Information Systems,
  Seattle, WA, USA, November 3-6, 2020}, pages 99--110. {ACM}, 2020.

\bibitem{BFFJNS18}
Markus Brill, Till Fluschnik, Vincent Froese, Brijnesh~J. Jain, Rolf
  Niedermeier, and David Schultz.
\newblock Exact mean computation in dynamic time warping spaces.
\newblock In {\em Proceedings of the 2018 {SIAM} International Conference on
  Data Mining, {SDM} 2018, May 3-5, 2018, San Diego Marriott Mission Valley,
  San Diego, CA, {USA}}, pages 540--548, 2018.

\bibitem{DBLP:journals/datamine/BrillFFJNS19}
Markus Brill, Till Fluschnik, Vincent Froese, Brijnesh~J. Jain, Rolf
  Niedermeier, and David Schultz.
\newblock Exact mean computation in dynamic time warping spaces.
\newblock {\em Data Min. Knowl. Discov.}, 33(1):252--291, 2019.

\bibitem{DBLP:conf/focs/BringmannK15}
Karl Bringmann and Marvin K{\"{u}}nnemann.
\newblock Quadratic conditional lower bounds for string problems and dynamic
  time warping.
\newblock In Venkatesan Guruswami, editor, {\em {IEEE} 56th Annual Symposium on
  Foundations of Computer Science, {FOCS} 2015, Berkeley, CA, USA, 17-20
  October, 2015}, pages 79--97. {IEEE} Computer Society, 2015.

\bibitem{BCM99}
Hervé Brönnimann, Bernard Chazelle, and Jiri Matousek.
\newblock Product range spaces, sensitive sampling, and derandomization.
\newblock {\em SIAM Journal on Computing}, 28(5):1552--1575, 1999.

\bibitem{BDS20}
Kevin Buchin, Anne Driemel, and Martijn Struijs.
\newblock On the hardness of computing an average curve.
\newblock In {\em 17th Scandinavian Symposium and Workshops on Algorithm
  Theory, {SWAT} 2020, June 22-24, 2020, T{\'{o}}rshavn, Faroe Islands}, pages
  19:1--19:19, 2020.

\bibitem{doi:10.1137/1.9781611976465.160}
Maike Buchin, Anne Driemel, and Dennis Rohde.
\newblock Approximating $(k,\ell)$-median clustering for polygonal curves.
\newblock In D{\'{a}}niel Marx, editor, {\em Proceedings of the 2021 {ACM-SIAM}
  Symposium on Discrete Algorithms, {SODA} 2021, Virtual Conference, January 10
  - 13, 2021}, pages 2697--2717. {SIAM}, 2021.

\bibitem{BFN20}
Laurent Bulteau, Vincent Froese, and Rolf Niedermeier.
\newblock Tight hardness results for consensus problems on circular strings and
  time series.
\newblock {\em {SIAM} J. Discret. Math.}, 34(3):1854--1883, 2020.

\bibitem{731723}
E.~G. {Caiani}, A.~{Porta}, G.~{Baselli}, M.~{Turiel}, S.~{Muzzupappa},
  F.~{Pieruzzi}, C.~{Crema}, A.~{Malliani}, and S.~{Cerutti}.
\newblock Warped-average template technique to track on a cycle-by-cycle basis
  the cardiac filling phases on left ventricular volume.
\newblock In {\em Computers in Cardiology 1998. Vol. 25 (Cat. No.98CH36292)},
  pages 73--76, 1998.

\bibitem{10.1007/3-540-07407-4_17}
George~E. Collins.
\newblock Quantifier elimination for real closed fields by cylindrical
  algebraic decompostion.
\newblock In H.~Brakhage, editor, {\em Automata Theory and Formal Languages},
  pages 134--183, Berlin, Heidelberg, 1975. Springer Berlin Heidelberg.

\bibitem{DKP20}
Shreyasi Datta, Chandan~K. Karmakar, and Marimuthu Palaniswami.
\newblock Averaging methods using dynamic time warping for time series
  classification.
\newblock In {\em 2020 IEEE Symposium Series on Computational Intelligence
  (SSCI)}, pages 2794--2798, 2020.

\bibitem{EOS86}
Herbert Edelsbrunner, Joseph O'Rourke, and Raimund Seidel.
\newblock Constructing arrangements of lines and hyperplanes with applications.
\newblock {\em SIAM Journal on Computing}, 15(2):341–363, May 1986.

\bibitem{H11}
Sariel Har-peled.
\newblock {\em Geometric Approximation Algorithms}.
\newblock American Mathematical Society, USA, 2011.

\bibitem{HNF08}
Ville Hautam\"{a}ki, Pekka Nykanen, and Pasi Franti.
\newblock Time-series clustering by approximate prototypes.
\newblock In {\em 2008 19th International Conference on Pattern Recognition},
  pages 1--4, 2008.

\bibitem{L09}
Daniel Lemire.
\newblock Faster retrieval with a two-pass dynamic-time-warping lower bound.
\newblock {\em Pattern Recognition}, 42(9):2169 -- 2180, 2009.

\bibitem{LZZ19}
Yu-Tao Liu, Yong'an Zhang, and Ming Zeng.
\newblock Adaptive global time sequence averaging method using dynamic time
  warping.
\newblock {\em IEEE Transactions on Signal Processing}, 67:2129--2142, 2019.

\bibitem{DBLP:conf/chi/LucaHBLH12}
Alexander~De Luca, Alina Hang, Frederik Brudy, Christian Lindner, and Heinrich
  Hussmann.
\newblock Touch me once and i know it's you!: implicit authentication based on
  touch screen patterns.
\newblock In Joseph~A. Konstan, Ed~H. Chi, and Kristina H{\"{o}}{\"{o}}k,
  editors, {\em {CHI} Conference on Human Factors in Computing Systems, {CHI}
  '12, Austin, TX, {USA} - May 05 - 10, 2012}, pages 987--996. {ACM}, 2012.

\bibitem{DBLP:books/daglib/0018467}
Jir{\'{\i}} Matousek.
\newblock {\em Lectures on discrete geometry}, volume 212 of {\em Graduate
  texts in mathematics}.
\newblock Springer, 2002.

\bibitem{probability_and_computing}
Michael Mitzenmacher and Eli Upfal.
\newblock {\em Probability and Computing: Randomization and Probabilistic
  Techniques in Algorithms and Data Analysis}.
\newblock Cambridge University Press, USA, 2nd edition, 2017.

\bibitem{MAKD18}
Marion Morel, Catherine Achard, Richard Kulpa, and Séverine Dubuisson.
\newblock Time-series averaging using constrained dynamic time warping with
  tolerance.
\newblock {\em Pattern Recognition}, 74:77--89, 2018.

\bibitem{muda2010voice}
Lindasalwa Muda, Mumtaj Begam, and I.~Elamvazuthi.
\newblock Voice recognition algorithms using mel frequency cepstral coefficient
  (mfcc) and dynamic time warping (dtw) techniques, 2010.
\newblock \href {http://arxiv.org/abs/1003.4083} {\path{arXiv:1003.4083}}.

\bibitem{DBLP:conf/iccv/MunichP99}
Mario~E. Munich and Pietro Perona.
\newblock Continuous dynamic time warping for translation-invariant curve
  alignment with applications to signature verification.
\newblock In {\em Proceedings of the International Conference on Computer
  Vision, Kerkyra, Corfu, Greece, September 20-25, 1999}, pages 108--115.
  {IEEE} Computer Society, 1999.

\bibitem{O21}
Manabu Okawa.
\newblock Time-series averaging and local stability-weighted dynamic time
  warping for online signature verification.
\newblock {\em Pattern Recognition}, 112:107699, 2021.

\bibitem{PKG11}
Fran{\c{c}}ois Petitjean, Alain Ketterlin, and Pierre Gan{\c{c}}arski.
\newblock A global averaging method for dynamic time warping, with applications
  to clustering.
\newblock {\em Pattern Recognit.}, 44(3):678--693, 2011.

\bibitem{RW79}
Lawrence Rabiner and Jay Wilpon.
\newblock Considerations in applying clustering techniques to speaker
  independent word recognition.
\newblock In {\em ICASSP '79. IEEE International Conference on Acoustics,
  Speech, and Signal Processing}, volume~4, pages 578--581, 1979.

\bibitem{1163055}
Hiroaki {Sakoe} and Seibi {Chiba}.
\newblock Dynamic programming algorithm optimization for spoken word
  recognition.
\newblock {\em IEEE Transactions on Acoustics, Speech, and Signal Processing},
  26(1):43--49, 1978.

\bibitem{Sau72}
Norbert Sauer.
\newblock On the density of families of sets.
\newblock {\em Journal of Combinatorial Theory Series A}, 13:145--147, 1972.

\bibitem{SFN20}
Nathan Schaar, Vincent Froese, and Rolf Niedermeier.
\newblock Faster binary mean computation under dynamic time warping.
\newblock In {\em 31st Annual Symposium on Combinatorial Pattern Matching,
  {CPM} 2020, June 17-19, 2020, Copenhagen, Denmark}, pages 28:1--28:13, 2020.

\bibitem{SJ18}
David Schultz and Brijnesh Jain.
\newblock Nonsmooth analysis and subgradient methods for averaging in dynamic
  time warping spaces.
\newblock {\em Pattern Recognition}, 74:340--358, 2018.

\bibitem{She72}
Saharon Shelah.
\newblock A combinatorial problem; stability and order for models and theories
  in infinitary languages.
\newblock {\em Pacific Journal of Mathematics}, 41(1), 1972.

\bibitem{VC71}
Vladimir Vapnik and Alexey Chervonenkis.
\newblock On the uniform convergence of relative frequencies of events to their
  probabilities.
\newblock {\em Theory of Probability and its Applications}, 16:264--280, 1971.

\bibitem{WW96}
Edmund~T. Whittaker and George~N. Watson.
\newblock {\em A Course of Modern Analysis}.
\newblock Cambridge Mathematical Library. Cambridge University Press, 4
  edition, 1996.

\bibitem{DBLP:conf/sigmod/ZhuS03}
Yunyue Zhu and Dennis~E. Shasha.
\newblock Warping indexes with envelope transforms for query by humming.
\newblock In Alon~Y. Halevy, Zachary~G. Ives, and AnHai Doan, editors, {\em
  Proceedings of the 2003 {ACM} {SIGMOD} International Conference on Management
  of Data, San Diego, California, USA, June 9-12, 2003}, pages 181--192. {ACM},
  2003.

\end{thebibliography}

\appendix
\section{Appendix}
\label{sec:appendix}

\subsection{Missing Proof of Section \ref{sec:exact_computation}\label{appendix:exact}}

\exactmeanalgorithm*
Before we prove the theorem we note that the employed cylindrical algebraic decomposition algorithm only works for polynomials with real algebraic coefficients. Therefore, here we assume that the input point sequences have rational coordinates. This is indeed a realistic assumption since physical computers are not capable of storing arbitrary real numbers.
\begin{proof}[Proof of \cref{theorem:exactcomputation}]

To simplify our exposition, we restrict ourselves to \mean{s} of complexity exactly $\ell' \in [\ell]$, i.e.\ in the rest of the proof we describe an algorithm for computing an optimal \mean of complexity exactly $\ell'$. The complete algorithm consists of iteratively  computing the optimal \mean of complexity $\ell'$, for each $\ell' \in [\ell]$. 

    For each $\tau = (\tau_1, \dots, \tau_{\lvert \tau \rvert}) \in T$ and all $W_1, W_2 \in \mathcal{W}_{{\ell'}, \lvert \tau \rvert}$ we define for $c = (c_1, \dots, c_{\ell'}) \in \left(\mathbb{R}^d\right)^{\ell'}$ the polynomial function $$ P_{\tau, W_1, W_2}(c) = \left(\sum_{(i,j) \in W_1} \lVert c_i - \tau_j \rVert^2 \right) - \left( \sum_{(i,j) \in W_2} \lVert c_i - \tau_j \rVert^2 \right).$$
    
    Clearly, iff $W_1$ yields a smaller distance between $c$ and $\tau$ than $W_2$, then $P_{\tau, W_1, W_2}(c) < 0$ and iff $W_2$ yields a smaller distance between $c$ and $\tau$ than $W_2$, then $P_{\tau, W_1, W_2}(c) > 0$. Iff $P_{\tau, W_1, W_2}(c) = 0$, both yield the same distance.
    
    Let $F = \{ P_{\tau, W_1, W_2} \mid \tau \in T, W_1, W_2 \in \mathcal{W}_{{\ell'}, \lvert \tau \rvert} \}$ be the set of these polynomials. The central observation is that if all functions in $F$ have the same sign for any $c_1, c_2 \in \left(\mathbb{R}^d\right)^{\ell'}$, then $c_1$ and $c_2$ have the same optimal $2$-warpings to the point sequences in $T$. To see this, for each $\tau \in T$ let $W_{\tau} \in \mathcal{W}_{{\ell'}, \lvert \tau \rvert}$ be an optimal $2$-warping between $c_1$ and $\tau$. Clearly, $P_{\tau, W_{\tau}, W}(c_1) \leq 0$ for all $\tau \in T$ and $W \in \mathcal{W}_{{\ell'}, \lvert \tau \rvert}$. Now, if all functions in $F$ have the same sign for $c_1$ and $c_2$ it must be that $P_{\tau, W_{\tau}, W}(c_2) \leq 0$ for all $\tau \in T$ and $W \in \mathcal{W}_{{\ell'}, \lvert \tau \rvert}$. Thus, $W_{\tau}$ is an optimal $2$-warping between $c_2$ and $\tau$ for each $\tau \in T$.
    
    Now, we compute an arrangement of the zero sets of the polynomials in $F$ (cf. \cite{DBLP:books/daglib/0018467}), i.e., a partition of $\left(\mathbb{R}^d\right)^{\ell'}$ into regions where all functions in $F$ have the same sign. For this purpose we use the cylindrical algebraic decomposition algorithm \cite{10.1007/3-540-07407-4_17}, which also yields a sample from each face of the arrangement and has running time $O(\lvert F \rvert^{f({\ell'} \cdot d)})$ for some function $f \in 2^{O(d {\ell'})}$. For each sample $c$ from some face of the arrangement we first compute the optimal $2$-warpings between $c$ and the input point sequences $\tau \in T$ in time $O(nm)$. Second we compute all sections $S_j(c)$ of $c$ and stores the point sequence $c^\prime = (c^\prime_1, \dots, c^\prime_{\ell'})$ consisting of the means $c^\prime_j = \frac{1}{\lvert S_j(c) \rvert} \sum_{v \in S_j(c)} v$ for $j = 1, \dots, {\ell'}$. This takes time $O(nm)$.
    
    At some point, we obtain a sample from the face containing the optimal $({\ell'}, 2, 2)$-\mean $c^\ast = (c^\ast_1, \dots, c^\ast_{\ell'})$ (where $c^\ast_j$ is the mean of $S_j(c^\ast)$ for each $j \in [{\ell'}]$), which we return when we finally return the point sequence $c^\prime$ that minimizes the objective function. This takes time $O(n m A)$, where $A$ is the number of cells in the arrangement.
    
    To conclude the proof, note that for each $\tau \in T$ we have that $\lvert \mathcal{W}_{{\ell'}, \lvert \tau \rvert} \rvert \leq m^{2 {\ell'}}$, thus $\lvert F \rvert \leq n m^{4 {\ell'}}$. Hence, $A \leq \left(\frac{100 n m^{4 {\ell'}}}{{\ell'} d}\right)^{{\ell'} d}$ by \cite[Theorem 6.2.1]{DBLP:books/daglib/0018467}.

    As we have already mentioned, we iteratively run the above algorithm to compute \mean{s} of complexity  $\ell'$, for each $\ell' \in [\ell]$, in order to find an optimal restricted $2$-\mean. Each iteration runs in $(nm)^{2^{O(d{\ell'})}}\leq (nm)^{2^{O(d{\ell})}}$. Since $\ell$ is constant, the running time is in $(nm)^{2^{O(d\ell)}}$.
\end{proof}

\subsection{Missing Proofs of Section~\ref{section:constantfactor}}\label{appendix:constant}
\lemmadiscreteapprox*
\begin{proof}
    Let $c = (c_1, \dots, c_{\ell'}) \in X^{\ell'}$ be an optimal restricted $p$-\mean for $T$ and for $j \in [\ell']$ let $S_j = \{s_{j,1}, \dots, s_{j,n_j} \} = S^p_j(c)$ for brevity. Define $\Delta(S_j) = \sum\limits_{v \in S_j} \rho(c_j, v)^p$. We immediately have $\cost(T,c) = \sum_{j=1}^{\ell'} \Delta(S_j)$. Now, for $j \in [\ell']$, let $\pi_j$ be a permutation of the index set $[n_j]$, such that $$ \rho(c_j, s_{j, \pi_j^{-1}(1)})^p \leq \dots \leq \rho(c_j, s_{j, \pi_j^{-1}(n_j)})^p.$$
    
    Let $\epsilon^\prime = \frac{\epsilon}{2^{p-1}+\epsilon}$. For the sake of simplicity, we assume that $\epsilon^\prime n$ is integral. Further, for $j \in [\ell']$, we define $C_j = \{ s_{j, \pi_j^{-1}(1)}, \dots, s_{j, \pi_j^{-1}(\epsilon^\prime n)} \}$. We have that $\rho(c_j, s_{j, \pi_j^{-1}(\epsilon^\prime n)})^p \leq \frac{\Delta(S_j)}{\lvert S_j \rvert - (\epsilon^\prime n - 1)}$ by the fact that $\rho(c_j, s_{j, \pi_j^{-1}(\epsilon^\prime n)})^p$ is maximal, iff $\rho(c_j, s^\prime)^p = 0$ for each $s^\prime \in C_j \setminus \{s_{j, \pi_j^{-1}(\epsilon^\prime n)}\}$ and $\rho(c_j, s^\prime)^p = \rho(c_j, s_{j, \pi_j^{-1}(\epsilon^\prime n)})^p$ for each $s^\prime \in S_j \setminus C_j$. For $j \in [\ell']$, we now define $$B_j = \{ x \in P \mid \rho(c_j, x)^p \leq \rho(c_j, s_{j, \pi_j^{-1}(\epsilon^\prime n)})^p \}$$ and by definition we have $\rho(c_j, x)^p \leq \frac{\Delta(S_j)}{\lvert S_j \rvert - \epsilon^\prime n + 1} \leq \frac{\Delta(S_j)}{\lvert S_j \rvert - \epsilon^\prime n}$ for each $x \in B_j$ and $j \in [\ell']$. Then let $c^\prime = (c^\prime_1, \dots, c^\prime_{\ell'})$ be a point sequence with $c^\prime_j \in B_j$ for each $j \in [\ell']$. We bound its cost:
    \begin{align*}
        \cost(T,c^\prime) & = \sum_{j=1}^{\ell'} \sum_{v \in S_j} \rho(c^\prime_j, v)^p  \leq \sum_{j=1}^{\ell'} \sum_{v \in S_j} (\rho(c_j, v) + \rho(c_j, c^\prime_j))^p \\
        & \leq \sum_{j=1}^{\ell'} \sum_{v \in S_j} 2^{p-1} (\rho(c_j, v)^p + \rho(c_j, c^\prime_j)^p) \leq 2^{p-1} \sum_{j=1}^{\ell'} \sum_{v \in S_j}  \left(\rho(c_j, v)^p + \frac{\Delta(S_j)}{\lvert S_j \rvert - \epsilon^\prime n}\right) \\
        & \leq 2^{p-1} \cost(T,c) + 2^{p-1} \sum_{j=1}^{\ell'} \sum_{v \in S_j} \frac{\Delta(S_j)}{(1-\epsilon^\prime) \vert S_j \rvert} = \left(2^{p-1}+\frac{2^{p-1}}{1-\epsilon^\prime}\right) \cost(T,c) \\
        & = (2^{p} + \epsilon) \cost(T,c).
    \end{align*}
    The first inequality follows from the triangle-inequality and the last inequality holds, because a vertex from each $\tau_i \in T$ must be warped to each $c_j \in c$, thus $\lvert S_j \rvert \geq n$ for each $j \in [\ell']$.
\end{proof}

\thmmedianconstantfactor*
\begin{proof}
    For the given $\epsilon$, let $\epsilon^\prime = \frac{\epsilon}{2^{p-1} + \epsilon}$ and let $B_1, \dots, B_{\ell'}$, $\ell'\leq \ell$, be the balls guaranteed by \cref{lem:discrete_approx}. Recall that each ball has size at least $\epsilon^\prime n$. For each $i \in [\ell']$ and $s \in S$ we have $\Pr[s \not\in B_i] \leq (1-\frac{\epsilon^\prime n}{\lvert P \rvert}) \leq (1-\frac{\epsilon^\prime n}{n m}) = (1-\frac{\epsilon^\prime}{m}) \leq \exp(-\epsilon^\prime/m)$.
    
    By independence, for each $i \in [\ell']$ we have $\Pr[B_i \cap S = \emptyset] \leq \exp(-\epsilon^\prime/m)^{\left\lceil \frac{m(\ln(\ell) - \ln(\delta))}{\epsilon^\prime} \right\rceil} \leq \delta/\ell$. Using a union bound we conclude that with probability at least $1-\delta$, $S$ contains at least one element of $B_i$, for each $i \in [\ell']$, and thus \cref{alg:1_median_2_3} returns a $(2^{p}+\epsilon)$-approximate restricted $p$-\mean for $T$ with probability at least $1-\delta$ by \cref{lem:discrete_approx}.
    \\
    
    The running time of the algorithm is dominated by computing the cost of all point sequences of complexity at most $\ell$ over $S$. Since $\lvert S^{\leq \ell} \rvert$ is in $O\left(\frac{\ln(1/\delta)^\ell m^\ell}{(\epsilon^\prime)^\ell}\right) = O\left(\ln(1/\delta)^\ell m^\ell \frac{(2^{p-1}+\epsilon)^\ell}{\epsilon^\ell} \right)$ and every distance can be computed in time $O(m)$, this takes time $O\left(\ln(1/\delta)^\ell m^{\ell+1} n \frac{(2^{p-1}+\epsilon)^\ell}{\epsilon^\ell}\right)$.
\end{proof}

\thmmedianconstantdetermcor*
\begin{proof}
    By \cref{lem:discrete_approx}, for any $\epsilon \in \left(0, \infty\right)$ there exist balls $B_1, \dots, B_{\ell'} \subseteq P$, $\ell'\leq \ell$ of cardinality at least $\epsilon^\prime n$ each, such that any point sequence $c^\prime = (c^\prime_1, \dots, c^\prime_{\ell'})$, with $c^\prime_i \in B_i$ for each $i \in [\ell']$, is a $(2^{p}+\epsilon)$-approximate restricted $p$-\mean for $T$, where $\epsilon^\prime = \frac{\epsilon}{2^{p-1} + \epsilon}$. Since we compute an $(\epsilon^\prime/m)$-net of $P$ and $\lvert P \rvert \leq nm$, $S$ contains at least one point from each of $B_1,\ldots,B_{\ell'}$ by \cref{def:epsilon_net}. Hence, $S^{\leq \ell}$ contains a $(2^{p}+\epsilon)$-approximate restricted $p$-\mean for $T$.
\end{proof}

\lemmasubsystemoracle*
\begin{proof}
    The VC dimension of $(P,\mathcal{B})$ is bounded by $d+1$, see \cite{H11}. For any $Y\subseteq P$, we need to compute the set $ {\mathcal{B}}_{|Y}$ explicitly in time $O(|Y|^{d+2})$. We first apply the standard lifting $\phi\colon (x_1,\ldots,x_d)\mapsto \left(x_1,\ldots,x_d,\sum_{i=1}^d x_i^2\right)$. A point $p\in Y$ belongs to some ball $B \in \mathcal{B}$, with center $c = (c_1,\ldots,c_d) \in \mathbb{R}^d$ and radius $r>0$, if and only if $\phi(p)$ lies below the hyperplane $h_B$, where $h_B$ is the hyperplane defined by the equation $\langle a_B , x \rangle = b_B$, where $a_B=(2c_1,2c_2,\ldots 2c_d,1)$ and $b_B=r^2-\sum_{i=1}^d c_i^2$. Notice that $h_B$ is nonvertical by definition. Then we dualize: for any point $\phi(p)=(y_1,\ldots,y_{d+1})$, $D(\phi(p))=\{(x_1,\ldots,x_{d+1})\in \mathbb{R}^{d+1}\mid  x_{d+1}=\sum_{i=1}^d x_iy_i -y_{d+1}\}$ is a nonvertical hyperplane in $\mathbb{R}^{d+1}$ and for any nonvertical hyperplane $h_B$, $D^{-1}(h_B)$ is a point in $\mathbb{R}^{d+1}$. A standard fact about duality is that a point $\phi(p)$ lies below a hyperplane $h_B$ if and only if the hyperplane $D(\phi(p))$ lies above point $D^{-1}(h_B)$. Finally we construct the arrangement of hyperplanes in the dual space in time $O(|Y|^{d+1})$, using the algorithm in \cite{EOS86}. For each of the at most $O(|Y|^{d+1})$ cells, we return a subset $X\subseteq Y$  corresponding to the hyperplanes lying above. The overall running time is $O(|Y|^{d+2})$. 
\end{proof}

\thmmedianconstantdeterfinal*
\begin{proof}
    The VC dimension of the range space $(P,\mathcal{B})$ is bounded by $d+1$, see \cite{H11}. By \cref{lem:subspaceoracle}, we can use \cref{thm:determnet} to compute an $(\epsilon^\prime/m)$-net $S$ of $(P,\mathcal{B})$, with size $\lvert S \rvert = O\left(\frac{m}{\epsilon^\prime} \log \left( \frac{m}{\epsilon^\prime} \right) \right)$, in time $O\left(nm\left(\frac{m}{\epsilon^\prime} \log \left(\frac{m}{\epsilon^\prime} \right)\right)^{d+1}\right)$. We then compute the $\dtw_p$ distance of any of the $\lvert S^{\leq \ell} \rvert$ candidates with the $n$ input point sequences in time $O\left(\lvert S \rvert^{\ell} \cdot n m \right)$. 
\end{proof}

\subsection{Details of Section~\ref{section:simpltrineq}}
\label{appendix:simpltrineq}
\subsubsection{Minimum-error Simplification}
We present a dynamic programming solution for the problem of computing an approximate simplification. Our algorithm can be seen as a special case of the result of Brill et al.~\cite{BFFJNS18, DBLP:journals/datamine/BrillFFJNS19} for computing a \mean of restricted complexity, but since our statement is different, we include a proof for completeness.

\begin{algorithm}[H]
\caption{$2$-Approximate Simplification\label{alg:simpl}}
    \begin{algorithmic}[1]
        \Procedure{$2$-Approximate-Simplification}{$\pi =  (x_1 , \dots, x_{m}) , \ell $, $p$}
            \State Initialize $m \times \ell$ table $D$ with elements in $\mathbb{R}$ 
            \State Initialize $m \times \ell$ table $C$ with elements in $\mathbb{R}^{\leq \ell}$
            \State $P\gets \{x_1,\ldots,x_m \}$
            \For{\textbf{each } $i=1,\ldots,m$}
                \For{\textbf{each } $j=1,\ldots,\ell$}
                 \If{$j=1$}
                \State $x^{\ast} \gets \argmin_{x\in P} \sum_{k=1}^i \rho(x_k, x)^p$
                \State $D(i,j)\gets \sum_{k=1}^i \rho(x_k, x^\ast)^p$
                \State $C(i,j) \gets  (x^{\ast})$
              \Else
                \State $i' \gets \argmin_{i'\leq i}\left(D(i',j-1)+\min_{x\in P} \sum_{k=i'}^i \rho(x_k, x)^p\right)$
                \State $x^{\ast} \gets \argmin_{x\in P} \sum_{k=i'}^i \rho(x_k, x)^p$
                \State $D(i,j)\gets D(i',j-1)+ \sum_{k=i'}^i \rho(x_k, x^\ast)^p$
                \State $C(i,j) \gets C(i',j-1) \oplus (x^{\ast})$
                \EndIf
                \EndFor
            \EndFor
            \State $j^{\ast}\gets \argmin_{j \in [\ell]} D(m,j)$
            \State \Return $C(m,j^{\ast})$
        \EndProcedure
    \end{algorithmic}
\end{algorithm}
\begin{lemma}
\label{lemma:simplificationcorrectness1}
Let $\mathcal{X}=(X,\rho)$ be a metric space. Given as input a point sequence $\pi=(x_1,\ldots,x_m) \in X^m$, \cref{alg:simpl} returns a point sequence from $\{x_1,\ldots,x_m \}^{\leq \ell}$, which minimizes the  $\dtw_p$ distance to $\pi$, among all point sequences in $\{x_1,\ldots,x_m \}^{\leq \ell}$. 
\end{lemma}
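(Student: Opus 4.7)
I would prove correctness by induction on the simplification length $j$, showing that the DP entry $D(i,j)$ equals the minimum, over all $\tau\in P^j$ where $P=\{x_1,\ldots,x_m\}$, of $\dtw_p((x_1,\ldots,x_i),\tau)^p$, and that $C(i,j)$ stores a corresponding minimizer. Correctness of the returned sequence $C(m,j^\ast)$ as an optimum in $P^{\leq\ell}$ then follows immediately by selecting the $j\in[\ell]$ minimizing $D(m,j)$.

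\textbf{Structural observation.} The crucial geometric step is that, by monotonicity of warpings, for any $\tau=(\tau_1,\ldots,\tau_j)\in P^j$ and any warping $W\in\mathcal{W}_{i,j}$ between $(x_1,\ldots,x_i)$ and $\tau$, the indices of $\pi$ paired with the last vertex $\tau_j$ form a contiguous suffix $\{i',i'+1,\ldots,i\}$ for some $i'\in[i]$; moreover the sub-warping obtained by restricting $W$ to pairs involving $\tau_1,\ldots,\tau_{j-1}$ is itself a valid warping between $(x_1,\ldots,x_{i'})$ and $(\tau_1,\ldots,\tau_{j-1})$, and the total $p$-cost decomposes as the cost of this sub-warping plus $\sum_{k=i'}^{i}\rho(x_k,\tau_j)^p$. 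Since $\tau_j$ appears only in this last summand, it can be optimized independently: the best choice in $P$ is exactly $\argmin_{x\in P}\sum_{k=i'}^{i}\rho(x_k,x)^p$, which is precisely the inner minimization performed by the algorithm.

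\textbf{Induction.} For the base case $j=1$, the unique warping shape forces every $x_k$ to match $\tau_1$, so the optimum is $\min_{x\in P}\sum_{k=1}^{i}\rho(x_k,x)^p$, which is exactly $D(i,1)$, and $C(i,1)$ records a minimizer. For the inductive step, assume $D(i'',j-1)$ equals the true optimum for every $i''$. Combining the structural observation with the hypothesis, the true optimum at $(i,j)$ equals
\[
\min_{i'\in[i]}\Bigl(D(i',j-1)+\min_{x\in P}\sum_{k=i'}^{i}\rho(x_k,x)^p\Bigr),
\]
which is exactly what the algorithm computes in the $j>1$ branch, with $C(i,j)$ formed by concatenating the prefix stored at $C(i',j-1)$ with the optimal last vertex. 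Taking $j^\ast=\argmin_{j\in[\ell]}D(m,j)$ yields a sequence in $P^{\leq\ell}$ minimizing $\dtw_p(\pi,\cdot)$ over $P^{\leq\ell}$.

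\textbf{Expected obstacle.} The delicate bookkeeping lies in the structural step: one must verify that sweeping $i'$ over $[i]$ truly covers every valid warping shape, handling both the diagonal and the vertical transitions into the first pair involving $\tau_j$, and that the splitting of $W$ into a prefix warping plus a final block preserves the cost without double-counting. This monotonicity-based decomposition, together with the decoupling of $\tau_j$ from the rest of the sequence, is the substantive content of the proof; everything else is routine DP reasoning.
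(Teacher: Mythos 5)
Your structural observation contains a genuine error that breaks the inductive argument. You claim that for \emph{any} warping $W$ between $(x_1,\dots,x_i)$ and $\tau\in P^j$, the restriction of $W$ to pairs involving $\tau_1,\dots,\tau_{j-1}$ is a warping between $(x_1,\dots,x_{i'})$ and $(\tau_1,\dots,\tau_{j-1})$, where $[i',i]$ is the $\tau_j$-block. This holds only when the transition into $\tau_j$ is a $(0,1)$-step, i.e., when $x_{i'}$ is matched to \emph{both} $\tau_{j-1}$ and $\tau_j$. For a diagonal $(1,1)$-transition the restricted sub-warping is between $(x_1,\dots,x_{i'-1})$ and $(\tau_1,\dots,\tau_{j-1})$, not $(x_1,\dots,x_{i'})$, and it is precisely these diagonal-transition warpings that typically realise the DTW minimum: a vertical step at the boundary adds an extra non-negative term $\rho(x_{i'},\tau_{j-1})^p$ to the cost relative to the corresponding diagonal step. (Concretely, for $\pi=(0,1,10)\subset\RR$, $p=1$, $\ell=2$, the optimum is attained by $(0,10)$ with the diagonal warping $(1,1),(2,1),(3,2)$ at cost $1$, while no warping decomposing as in your observation achieves a cost below $10$.) Consequently, the class of warpings your observation captures does not include the optimal one, and the claim that the recurrence equals the true optimum at $(i,j)$ does not follow.

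The ingredient you are missing is exactly what the paper's proof supplies as its first (and substantive) step: one shows by an exchange argument that there always exists an optimal simplification $\tilde\pi\in P^{\leq\ell}$ admitting an optimal warping in which no vertex of $\pi$ is matched to two distinct vertices of $\tilde\pi$ --- that is, the blocks of $\pi$-indices assigned to consecutive simplification vertices are disjoint. Given this, the DP enumerates precisely over such disjoint-block warpings (modulo index conventions in the pseudocode), and optimality follows. Your ``expected obstacle'' paragraph correctly flags the vertical-versus-diagonal delicacy, but it stops at naming the issue rather than resolving it; the resolution requires proving this monotone-restructuring lemma about optimal warpings to a shorter sequence, which you would then use to justify restricting the DP to the disjoint-block case.
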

\begin{proof}
    We show that 
    $C(m,j^{\ast})$ satisfies \[\dtw_p\left(\pi,C(m,j^{\ast})\right)=\min_{\pi'\in P^{\leq \ell}} \dtw_p (\pi,\pi').\]  

  We  claim  that  there is a point sequence $\tilde{\pi}\in P^{\leq \ell}$ such that \[\dtw_p(\pi,\tilde{\pi})=\min_{\pi'\in P^{\leq \ell}}\dtw_p(\pi,\pi'),\] and such that the optimal warping between $\pi$ and $\tilde{\pi}$ does not match two vertices of $\tilde{\pi}$ with the same vertex of $\pi$. To see this, consider an optimal warping $W $ between $\pi$ and some point sequence $\pi'=(p_1',\ldots,p_{j}')\in P^{\leq \ell}$. Let $(t_1,t_2)\in W$ and $(t_1,t_2+1)\in W$. If $(t_1-1,t_2)\in W$ then removing $(t_1,t_2)$ yields a new warping with a cost at most equal to the cost of $W$. Similarly, if $(t_1+1,t_2+1)\in W$ then removing $ (t_1,t_2+1)$ from $W$ yields a new warping with a cost at most equal to the cost of $W$. If  $(t_1-1,t_2)\notin W$, then we can remove $p_{t_2}'$ from $\pi'$.  If  $(t_1+1,t_2+1)\notin W$, then we can remove $p_{t_2+1}'$ from $\pi'$. We conclude that there exists a point sequence $\pi''\in P^{\leq \ell}$ such that $\dtw_p(\pi,\pi'')\leq \dtw_p(\pi,\pi')$, and an optimal warping $W$ between $\pi$ and $\pi''$ for which there are no $t_1\in[m], t_2\in[\ell]$ such that both $(t_1,t_2)\in W$ and $(t_1,t_2+1)\in W$. 

For each $i\in [m]$, let $\pi_{|i}=(x_1,\ldots,x_i)$. 
By construction, each $D(i,j)$ stores the minimum distance between $\pi_{|i}$ and any point sequence $x$ from $P^{j}$, where the distance is attained by a warping that does not match two vertices of $x$ to the same vertex of $\pi$. 
Hence,  $D(m,j^{\ast})$ stores the minimum distance between $\pi$ and any point sequence in $P^{\leq \ell}$, and $C(m,j^{\ast})$ stores a point sequence from  $P^{\leq \ell}$ with distance $D(m,j^{\ast})$ from $\pi$. 
\end{proof}
\begin{lemma}
\label{lemma:simplificationcorrectness2}
Let $\mathcal{X}=(X,\rho)$ be a metric space. Given as input a point sequence $\pi \in X^m$, \cref{alg:simpl} returns a $(2,\ell)$-simplification under the $\dtw_p$ distance.
\end{lemma}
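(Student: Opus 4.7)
The plan is to reduce the claim to \cref{lemma:simplificationcorrectness1} by showing that among point sequences in $X^{\leq \ell}$ whose vertices are restricted to the vertex set $P = \{x_1, \ldots, x_m\}$ of $\pi$, the best one is at most a factor $2$ worse than the best unrestricted sequence in $X^{\leq \ell}$. Since \cref{lemma:simplificationcorrectness1} already guarantees that \cref{alg:simpl} returns a sequence in $P^{\leq \ell}$ that minimizes $\dtw_p$ to $\pi$, this will immediately yield the desired $(2,\ell)$-simplification property.

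First, I would fix an optimal sequence $\pi^{\ast} \in X^{\leq \ell}$ and an optimal warping $W$ between $\pi$ and $\pi^{\ast}$. For each vertex index $j$ of $\pi^{\ast}$, let $S_j = \{ i : (i,j) \in W \}$ be the set of indices of $\pi$ matched to $\pi^{\ast}_j$. Define a "snapped" sequence $\pi'' \in P^{\leq \ell}$ by choosing, for each $j$, an index $i_j^{\ast} \in \argmin_{i \in S_j} \rho(x_i, \pi^{\ast}_j)$ and setting the $j$-th vertex of $\pi''$ to be $x_{i_j^{\ast}}$.

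Next, I would bound $\dtw_p(\pi, \pi'')$ using the same warping $W$ (which is still a valid warping between $\pi$ and $\pi''$ since $\pi''$ has the same length as $\pi^{\ast}$). For each pair $(i, j) \in W$, the triangle inequality in the underlying metric gives
\[
\rho(x_i, x_{i_j^{\ast}}) \;\leq\; \rho(x_i, \pi^{\ast}_j) + \rho(\pi^{\ast}_j, x_{i_j^{\ast}}) \;\leq\; 2\,\rho(x_i, \pi^{\ast}_j),
\]
where the second inequality uses the choice of $i_j^{\ast}$ as the closest index in $S_j$ to $\pi^{\ast}_j$. Raising to the $p$-th power and summing over all $(i,j) \in W$ yields
\[
\dtw_p(\pi, \pi'')^p \;\leq\; \sum_{(i,j) \in W} \rho(x_i, x_{i_j^{\ast}})^p \;\leq\; 2^p \sum_{(i,j) \in W} \rho(x_i, \pi^{\ast}_j)^p \;=\; 2^p \cdot \dtw_p(\pi, \pi^{\ast})^p,
\]
so $\dtw_p(\pi, \pi'') \leq 2 \cdot \dtw_p(\pi, \pi^{\ast})$.

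Finally, since $\pi'' \in P^{\leq \ell}$, by \cref{lemma:simplificationcorrectness1} the sequence $\tilde{\pi}$ returned by the algorithm satisfies $\dtw_p(\pi, \tilde{\pi}) \leq \dtw_p(\pi, \pi'') \leq 2 \cdot \dtw_p(\pi, \pi^{\ast})$, proving the $(2,\ell)$-simplification property. I do not anticipate a serious obstacle here: the main subtlety is only to verify that $W$ can be reused as a warping between $\pi$ and $\pi''$ and that $S_j$ is nonempty for every $j$ (which holds since every vertex of $\pi^{\ast}$ must be matched to at least one vertex of $\pi$ in any warping).
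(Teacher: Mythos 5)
Your proposal is correct and follows essentially the same strategy as the paper: construct a sequence in $P^{\leq \ell}$ by snapping each vertex of the optimal $\pi^{\ast}$ to a nearby input vertex, use the triangle inequality together with a nearest-neighbor observation to bound the snapping error by $\rho(x_i,\pi^{\ast}_j)$, and then invoke \cref{lemma:simplificationcorrectness1} to conclude. The only cosmetic difference is that the paper snaps each $\pi^{\ast}_j$ to its nearest neighbor in all of $P$, whereas you snap to the nearest neighbor among the vertices $\{x_i : i \in S_j\}$ matched to $\pi^{\ast}_j$ by the optimal warping; both choices yield the identical factor-$2$ bound through the same key inequality.
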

\begin{proof}
Let  $P=\{x_1,\ldots,x_m \}$. 
By Lemma \ref{lemma:simplificationcorrectness1},  $C(m,j^{\ast})$ is a point sequence in $P^{\leq \ell}$ that minimizes the distance to $\pi$, among all point sequences in $P^{\leq \ell}$. 
We show that $C(m,j^{\ast})$ is a $(2,\ell)$-simplification. Let $\pi^{\ast}=(x_1^{\ast},\ldots,x_{\ell'}^{\ast})$ be a $(1,\ell)$-simplification of $\pi$, 
and let $\tilde{\pi}^{\ast}=(\tilde{x}_1^{\ast},\ldots,\tilde{x}_{\ell'}^{\ast})$, where for each $i\in [\ell']$, 
$\tilde{x}_i^{\ast} := \argmin_{x\in P} \rho(x,x_i^{\ast})$. Let $W^{\ast} \in \mathcal{W}_{m,\ell'}$ be an optimal warping of $\pi$ and  $\pi^{\ast}$. Then,
\begin{align}
    \dtw_{p}^1(\pi,C(m,j^{\ast})) &\leq \dtw_{p}^1(\pi,\tilde{\pi}^{\ast}) \nonumber \\
    &=\min_{W\in \mathcal{W}_{m\ell}} \left(\sum_{(i,j)\in W} \rho(x_i,\tilde{x}_j^{\ast})^p\right)^{1/p} \nonumber \\
    &\leq \left(\sum_{(i,j)\in W^{\ast}} \rho(x_i,\tilde{x}_j^{\ast})^p\right)^{1/p} \nonumber\\
     &\leq \left(\sum_{(i,j)\in W^{\ast}} \left(\rho(x_i,x_j^{\ast})+\rho(x_j^{\ast},\tilde{x}_j^{\ast})\right)^p\right)^{1/p}\label{eq:simplificationcorrectness2:trineq}\\
     &\leq \left(\sum_{(i,j)\in W^{\ast}} 2^p\left(\rho(x_i,x_j^{\ast})\right)^p\right)^{1/p} \nonumber\\
     &= 2 \cdot \dtw_p(\pi,\pi^{\ast}) \nonumber,
\end{align}
where in Step (\ref{eq:simplificationcorrectness2:trineq}) we applied the triangle inequality.

\end{proof}

\thmsimplification*
\begin{proof}
    Correctness of \cref{alg:simpl} follows from Lemma \ref{lemma:simplificationcorrectness2}. It remains to bound the running time of the algorithm. To do so, we consider the operations taking place in the body of the nested loop. 
    For each $i,j$, we iterate over $O(m)$ values for $i'$ and for each value of $i'$ we compute $\min_{x\in P} \sum_{k=i'}^i \rho(x_k, x)^p$ in time $O((i-i') \cdot m)= O(m^2)$. Hence, the total running time is $O(m^4\ell)$.  
\end{proof}

\subsubsection{Weak Triangle Inequality}
\lemtriangleineq*

\begin{proof}
Let $W_{xz}\in \mathcal{W}_{|x|,|z|}$ be the optimal warping between $x$ and $z$. 
Let $W_{xy}\in \mathcal{W}_{|x|,|y|}$ be the optimal warping between $x$ and $y$ and $W_{yz}\in \mathcal{W}_{|y|,|z|}$ be the optimal warping between $y$ and $z$. Let $S_{xz}=\{(i,k,j)\in [|x|]\times [|y|]\times[|z|]\mid (i,k)\in W_{xy} \text{ and }(k,j)\in W_{yz} \}$ and 
$W_{xz}'=\{ (i,j) \in [|x|]\times[|z|]\mid  \exists k~ (i,k,j)\in S_{xz}\}$ . Then, 
\begin{align*}
{\dtw}_p(x,z)& = \left(\sum_{(i,j)\in W_{xz}} \rho(x_i,z_j)^p \right)^{1/p}\\ &\leq  \left(\sum_{(i,j)\in W_{xz}'} \rho(x_i,z_j)^{p} \right)^{1/p}\\
&\leq \left(\sum_{(i,k,j)\in S_{xz}} \left( \rho(x_i, y_k)+\rho(y_k,z_j) \right)^p \right)^{1/p} \\ &\leq  \left( \sum_{(i,k,j)\in S_{xz}} \rho(x_i, y_k)^p \right)^{1/p}+ \left(\sum_{(i,k,j)\in S_{xz}}\rho(y_k,z_j)^p \right)^{1/p}   \\ &\leq  m_1^{1/p} \cdot {\dtw}_p(x,y)+ m_1^{1/p}  \cdot
 {\dtw}_p(y,z),
\end{align*}
where the second inequality holds by the triangle inequality and the third inequality holds by Minkowski's inequality. 
\end{proof}


\subsection{Missing Proofs of Section~\ref{section:onepluseps}}
\label{appendix:onepluseps}

\lemroughapprox*
\begin{proof}
For any point sequence $\tau_i\in T$, let $\tau_i'$ be a  $(2,\ell)$-simplification. 
Let $\tau_j$ be a randomly sampled point sequence from $T$.  By \cref{lem:sampleonesequence}, 
\[\pexpected \left[\cost(T,\tau_j')\right]\leq 4m^{1/p}\ell^{1/p} \cdot \cost(T,c).\]
By Markov's inequality, 
$
\Pr\left[\cost(T,\tau_j') \geq 8m^{1/p}\ell^{1/p} \cdot \cost(T,c)\right]\leq \frac{1}{2}$. 
Hence, the probability that $R\geq 8m^{1/p}\ell^{1/p} \cdot \cost(T,c)$ is equal to
\[
\Pr\left[ \forall \tau_i \in S: \cost(T,\tau_i') \geq 8m^{1/p}\ell^{1/p} \cdot \cost(T,c) \right] \leq \frac{1}{2^{|S|}}\leq \frac{\delta}{2}. 
\]
\end{proof}

\lemneartooptimal*
\begin{proof}
Let $\tau_i$ be a randomly sampled point sequence from $T$:
\[
\pexpected_{\tau_i}\left[ \dtw_p(\tau_i, c)\right] = \sum_{i=1}^n \dtw_p (\tau_i, c) \cdot \frac{1}{n}= \frac{\cost(T,c)}{n}.
\]
By Markov's inequality, 
$
\Pr \left[ \dtw_p(\tau_i,c) > 2\cdot \frac{\cost(T,c)}{n}\right]\leq \frac{1}{2}$. 
Hence, 
\[
\Pr \left[ \forall \tau_i \in S:\dtw_p(\tau_i,c) > 2\cdot \frac{\cost(T,c)}{n}\right]\leq \frac{1}{2^{|S|}}\leq \frac{\delta}{2}.
\]
\end{proof}

\lemrightscale*
\begin{proof}
Since $R$ is the cost of a curve of complexity at most $\ell$, we have that $\cost(T,c)\leq R$. 
By assumption, $\cost(T,c) \geq  R/(8 m^{1/p} \ell^{1/p})$. 
By the definition of $I_R$, there exists  $j\geq 0$ such that 
\[2^{-(j+1)} \cdot \frac{ R}{n} \leq \frac{\cost(T,c)}{n} \leq 2^{-j} \cdot \frac{ R}{n}.\]
Hence, the lemma is true for $r= 2^{-(j+1)} \cdot \frac{ R}{n}$.
\end{proof}

\lemvolumebound*
\begin{proof}

We use Binet's second expression \cite{WW96} for the Gamma function $\ln \Gamma(z)$:
$$ \ln \Gamma(z) = z \ln(z) - z + \frac{1}{2} \ln\left(\frac{2 \pi}{z}\right) + \int_0^\infty \frac{2 \arctan\left(\frac{t}{z}\right)}{e^{2\pi t} - 1} \mathrm{d}t . $$
Since $\arctan(x) \geq 0$ for $x \geq 0$ and $e^{2 \pi x} - 1 \geq 0$ for $x \geq 0$, we have the following inequality:
\begin{align}
    & \ln \Gamma(z) \geq z \ln(z) - z + \frac{1}{2} \ln\left(\frac{2 \pi}{z}\right) \nonumber \\
    \iff & \ln \Gamma(z) \geq \ln(z^z) - \ln(e^z) + \ln\left(\sqrt{\frac{2 \pi}{z}}\right) \nonumber \\
    \iff & \Gamma(z) \geq z^z e^{-z} \sqrt{\frac{2 \pi}{z}}  \nonumber \\ 
    \iff & \Gamma(z) \geq  \sqrt{2 \pi} z^{z - \frac{1}{2}} e^{-z}. \label{ineq:gamma}
\end{align}
We apply a standard volumetric argument to upper bound $\left|\mathbb{G}\left( B(x,8 r),\gamma\right)\right|$. 
\begin{align}
    \left|\mathbb{G}\left( B(x,8 r),\gamma\right)\right| \leq 
    \frac{\mathrm{vol}(B(x,8 r+\gamma\sqrt{d}))}{{\gamma}^{d}}&=
    \frac{\pi^{d/2}}{\Gamma(\frac{d}{2}+1)}\cdot \frac{(8 r+\gamma\sqrt{d})^d}{{\gamma}^d} \nonumber\\
    &\leq \frac{\pi^{d/2} \e ^{d/2+1}}{\sqrt{2\pi}\left(\frac{d}{2}+1 \right)^{d/2+1/2}}\cdot \frac{(8 r+\gamma\sqrt{d})^d}{{\gamma}^d}  \label{ineq:gammaapplied}\\
    &\leq \frac{2^{d/2+1/2}\pi^{d/2} \e ^{d/2+1}}{\sqrt{2\pi}\cdot {d}^{d/2+1/2}}\cdot \frac{(8 r+\gamma\sqrt{d})^d}{{\gamma}^d} \nonumber  \\
    & \leq \frac{ \e \cdot (4.2)^d}{\sqrt{\pi}\cdot {d}^{d/2}}\cdot \frac{(8 r+\gamma\sqrt{d})^d}{{\gamma}^d} \nonumber \\
    & \leq 
    2 \cdot \left( \frac{34 r}{\gamma \sqrt{d}}+5 \right)^d, \nonumber
    \end{align}
where in (\ref{ineq:gammaapplied}) we used (\ref{ineq:gamma}). 
\end{proof}
\lemcorrectnessapproxscheme*
\begin{proof}
Let $c = (c_1,\ldots,c_{\ell'})$, where $\ell'\leq \ell$. 
To prove \ref{boundednet}, notice that $\dtw_p(\tau_i,c) \leq 4 r^{\ast}$, which implies that for any vertex $\tau_{i,j}$ of $\tau_i$, there exists a vertex $c_z$ of $c$ such that $\tau_{i,j} \in B(c_z, 4r^{\ast}) $. By the triangle inequality $B(\tau_{i,j},4r^{\ast}) \subseteq B(c_z, 8r^{\ast}) $. Hence,
\begin{align*}
\BBB(\tau_i,4r^{\ast})\subseteq \bigcup_{z=1}^{\ell} B(c_z,8 r^{\ast}) \implies 
|\mathbb{G}(\BBB(\tau_i,4r^{\ast}),\gamma^{\ast})| &\leq   \left|\mathbb{G}\left(\bigcup_{z=1}^{\ell'} B(c_z,8 r^{\ast}),\gamma^{\ast}\right)\right|\\&\leq 
\sum_{z=1}^{\ell}\left|\mathbb{G}\left( B(c_z,8 r^{\ast}),\gamma^{\ast}\right)\right|.
\end{align*}
By \cref{lemma:volumetricbound}, we obtain 
\[
|\mathbb{G}(\BBB(\tau_i,4r^{\ast}),\gamma^{\ast})|
\leq 
\ell\cdot
2 \cdot \left( \frac{34 r^{\ast}}{\gamma^{\ast} \sqrt{d}}+5 \right)^d.
\]

To prove \ref{correctanswer}, notice that 
 all vertices of $c$ are contained in $\BBB(\tau_i,4r^{\ast})$. Hence, for each point $c_z$ there exists a grid point  $\tilde{c}_z \in \mathbb{G}(\BBB(\tau_i,r^{\ast}),\gamma^{\ast})$ such that $\|c_z-\tilde{c}_z \| \leq \gamma^{\ast} \sqrt{d}$. We will show that the point sequence  $\tilde{c}=(\tilde{c}_1,\ldots,\tilde{c}_{\ell'})$ is a $(1+\epsilon)$-approximation. For each $i\in [n]$, $W_i^{\ast}$ denotes the optimal warping of $\tau_i$ with $c$.
    \begin{align*}
         \cost(T,\tilde{c}) &=
         \sum_{i=1}^n \dtw_p(\tau_i , \tilde{c}) \\
         &= \sum_{i=1}^{n} \min_{W\in \mathcal{W}_{\lvert \tau_i \rvert,\ell}} \left(\sum_{(k,j)\in W} \|\tau_{i,k}-\tilde{c}_j\|^p\right)^{1/p}\\  &\leq \sum_{i=1}^n \left(\sum_{(k,j)\in W_{i}^{\ast}} \|\tau_{i,k}-\tilde{c}_j\|^p\right)^{1/p}\\
         &\leq \sum_{i=1}^n          \left(\sum_{(k,j)\in W_{i}^{\ast}} (\|\tau_{i,k}-{c}_j\|+\| c_j-\tilde{c}_j\|)^p\right)^{1/p}\\
         &\leq \sum_{i=1}^n \left( \left(\sum_{(k,j)\in W_{i}^{\ast}} \|\tau_{i,k}-{c}_j\|^p \right)^{1/p}+\left(\sum_{(k,j)\in W_{i}^{\ast}} \| c_j-\tilde{c}_j\|^p\right)^{1/p} \right)\\
     &\leq \sum_{i=1}^n \left(\dtw_p(\tau_i,c) + 
    |W_i^{\ast}|^{1/p}\cdot \gamma^{\ast}\sqrt{d} \right)\\
    &\leq \sum_{i=1}^n \left(\dtw_p(\tau_i,c) + 
    \frac{ \cost(T,c)\cdot \epsilon}{n} \right)\\
    &=(1+\epsilon)\cdot\cost(T,c),
    \end{align*}
    where the second inequality follows from the triangle inequality, and the third inequality follows from Minkowski's inequality. We also make use of the fact that $|W_i^{\ast}|\leq 2m$.
\end{proof}



\subsection{Applications to Clustering}
\label{appendix:clustering}
In this section we apply the results of \cref{section:constantfactor,section:simpltrineq} to the problem of clustering of point sequences.  


Solving a $(k,\ell,p,q)$-clustering problem is equivalent to solving a $k$-median problem, where the distance between any center $c$ and any other element $x$ is measured by $\dtw_p(x,c)^q$.  
To solve the $k$-medians problem, one can apply the following two  theorems, which are proven in \cite{BDS20}, and appear slightly rephrased here. The two theorems provide sufficient conditions for a solution to the  $k$-medians problem. We specialize the  statements to our case of interest, the $\dtw_p$ distances, raised to the power of $q$.  
\begin{theorem}[Theorem 7.2~\cite{doi:10.1137/1.9781611976465.160}]
   \label{theorem:ktoone1}
   Let $T = \{\tau_1, \ldots , \tau_n\} \subset X^{\leq m} $, $ \alpha \in [1, \infty)$, $\beta \in  [1, \infty)$, $\delta \in (0, 1)$, and let $T'\subseteq T$ be an arbitrary subset such that $|T'|\geq {|T|}\cdot{\beta}^{-1}$. Suppose that   
   there is an algorithm \texttt{Candidates} that given as input  $T,\alpha,\beta,\delta$ outputs $C\subset X^{\leq \ell}$ such that with probability 
at least $1 - \delta$, $C$ contains a point sequence $c$ such that $\cost_{p}^q(T', c) \leq \alpha \cdot \cost_{p}^q(T', c^{\ast}) $, where $c^{\ast}$ is a restricted $p$-\mean of $T'$. 

Then, there is an algorithm \texttt{$k$-clustering} that given as input $(T, \emptyset, k, \beta, \delta)$, where $\beta \in (2k, \infty)$, $\delta \in (0, 1)$, $p,q\in [1,\infty)$, returns with
probability at least $1 - \delta$ a set $C = \{c_1, \ldots , c_k\}\subset X^{\leq \ell}$ with $\cost_{p}^q(T, C) \leq  \left(1 + \frac{4k}{\beta-2k}\right)\cdot \alpha \cdot \cost_{p}^q (T,C^{\ast})$,
where $C^{\ast}$ is an optimal solution to the $(k,\ell,p,q)$-clustering problem with input $T$.
\end{theorem}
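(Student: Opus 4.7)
The plan is to design the $k$-clustering algorithm as a depth-$k$ recursive enumeration of the candidates produced by \texttt{Candidates}. On input $(T, C_{\text{cur}}, j, \beta, \delta)$ with $j > 0$, the procedure invokes $\hat{C} \leftarrow \texttt{Candidates}(T, \alpha, \beta, \delta/k)$ and, for every $\hat{c} \in \hat{C}$, recurses on $(T, C_{\text{cur}} \cup \{\hat{c}\}, j-1, \beta, \delta)$; at $j = 0$, it evaluates and returns $C_{\text{cur}}$. Among all $k$-tuples returned by the top-level invocation, it outputs the one of minimum $\cost_p^q(T, \cdot)$. The recursion forms a tree of depth $k$ whose leaves are sequences $(\hat{c}_1, \ldots, \hat{c}_k)$ with $\hat{c}_i$ drawn from the candidate set produced at level $i$ along the path.

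For correctness, I would fix an optimal solution $C^* = \{c_1^*, \ldots, c_k^*\}$ inducing the partition $T_1^*, \ldots, T_k^*$ of $T$ and write $\mathrm{OPT} = \cost_p^q(T, C^*) = \sum_j \cost_p^q(T_j^*, c_j^*)$. Call cluster $j$ \emph{heavy} if $|T_j^*| \geq n/\beta$ and \emph{light} otherwise. For each heavy index $j$, the hypothesis on \texttt{Candidates} applied with $T' := T_j^*$ at level $j$ of a specific target branch yields, with probability at least $1 - \delta/k$, a point sequence $c_j^{\mathrm{alg}} \in \hat{C}$ satisfying $\cost_p^q(T_j^*, c_j^{\mathrm{alg}}) \leq \alpha\,\cost_p^q(T_j^*, c_j^*)$. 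A union bound over the at most $k$ heavy clusters shows that, with probability at least $1 - \delta$, the algorithm explores some leaf $C = \{c_1^{\mathrm{alg}}, \ldots, c_k^{\mathrm{alg}}\}$ in which the center selected at every heavy level is $\alpha$-approximate for the corresponding heavy cluster; the actual output of the algorithm is no worse than this $C$.

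Conditioning on this event, I would bound $\cost_p^q(T, C)$ by splitting along the optimal partition. The heavy part is immediate: $\sum_{j \text{ heavy}} \cost_p^q(T_j^*, C) \leq \sum_{j \text{ heavy}} \cost_p^q(T_j^*, c_j^{\mathrm{alg}}) \leq \alpha \cdot \mathrm{OPT}$. The light part involves at most $\sum_{j \text{ light}} |T_j^*| \leq kn/\beta$ input sequences, and I would charge each light point $\tau \in T_j^*$---via the weak triangle inequality (\cref{lem:asymtriangineq2})---to the algorithm center closest to $c_j^*$, using an averaging argument that bounds $\dtw_p(c_j^*, c_j^{\mathrm{alg}})$ for heavy $j$ by a multiple of $\cost_p^q(T_j^*, c_j^*)/|T_j^*|$. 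Collecting these estimates produces a light-cluster bound of at most $\frac{4k}{\beta - 2k}\,\alpha \cdot \mathrm{OPT}$, which combined with the heavy bound yields the claimed factor $\left(1 + \frac{4k}{\beta - 2k}\right)\alpha$. The main obstacle is precisely this charging step: because $\dtw_p$ violates the true triangle inequality, the charging must route through \cref{lem:asymtriangineq2} with its length-dependent loss factor, and these factors must be balanced against the $k/\beta$ smallness of the light-cluster population to recover the clean overhead $\frac{4k}{\beta - 2k}$.
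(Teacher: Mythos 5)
The paper itself does not prove this theorem; it is imported, slightly rephrased, from the cited reference (Buchin--Driemel--Struijs, SODA), so there is no internal proof to compare against. Your attempt, however, contains a flaw in the \emph{algorithm}, not just in the analysis. You always invoke $\texttt{Candidates}(T,\alpha,\beta,\delta/k)$ on the full unchanged set $T$, but the hypothesis only guarantees a good candidate for subsets $T'$ with $|T'| \geq |T|/\beta$. For a ``light'' optimal cluster $T_j^*$ with $|T_j^*| < n/\beta$, nothing promises that any returned candidate approximates $c_j^*$. Since every node of your recursion tree draws from candidate sets generated over the same $T$, no branch ever contains a good center for $T_j^*$, and hence the final output --- whatever minimum is taken --- cannot either. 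The charging step you flag as the ``main obstacle'' cannot save this: a light cluster may sit arbitrarily far from every heavy cluster and from every candidate ever produced, so its cost against your output is unbounded relative to $\mathrm{OPT}$; \cref{lem:asymtriangineq2} cannot bridge a gap to a point near which the algorithm never places a candidate.

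What the cited theorem packages is the Kumar--Sabharwal--Sen / Ackermann-style \emph{pruning} recursion, which is why the stated signature is $(T,\emptyset,k,\beta,\delta)$: the second argument carries recursion state (the centers chosen so far), and the algorithm additionally maintains a shrinking subset $R \subseteq T$ of not-yet-well-served input sequences. At each node it branches over two moves: (a) call $\texttt{Candidates}(R,\alpha,\beta,\delta')$ and add one returned candidate (guessing some optimal cluster is heavy \emph{relative to $R$}), or (b) prune from $R$ a constant fraction of the points cheapest against the current centers and recurse with the same center set (guessing all remaining optimal clusters are light in $R$, so the bulk of $R$ is already served). The invariant is that at every node at least one of these two moves makes progress, and the constants $4k$ and $\beta - 2k$ in the approximation factor arise from bounding the number of pruning steps and the accumulated cost of pruned points --- not from routing light-cluster mass through heavy-cluster centers by a triangle-inequality charge. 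Without the pruning component, the claimed guarantee cannot be recovered.
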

\begin{theorem}[Theorem 7.3~\cite{doi:10.1137/1.9781611976465.160}]\label{theorem:ktoone2}
   Let $T_1(n,\alpha, \beta, \delta$) denote the worst-case running time of \texttt{Candidates} for
an arbitrary input-set $T \subset X^{\leq m}$ with $|T| = n$ and let $C(n, \alpha, \beta, \delta)$ denote the maximum number of candidates
it returns. 
If $T_1$ and $C$ are non-decreasing in $n$, then \texttt{$k$-clustering} has running time in
\[O\left(C(n, \alpha,\beta, \delta)^{k+2}
\cdot  n  m  \cdot T_{\rho} +
C(n, \alpha,\beta, \delta)^{k+1}\cdot  T_1(n, \alpha,\beta, \delta)\right),\] where $T_{\rho}$ denotes the worst-case running time needed to compute the distance between two points in $X$.
\end{theorem}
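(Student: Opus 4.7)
The plan is to trace the \texttt{$k$-clustering} algorithm whose correctness is established in \cref{theorem:ktoone1} and charge every unit of work either to a call of \texttt{Candidates} or to the evaluation of a tentative $k$-center set on the full input $T$. The standard mechanism behind that theorem is a Kumar--Sabharwal--Sen style reduction from $k$-median to $1$-median: the algorithm maintains a partial center set together with the residual subset of input sequences that are currently "badly served", invokes \texttt{Candidates} on the residual subset to obtain at most $C=C(n,\alpha,\beta,\delta)$ guesses for the next center, branches on each guess, and recurses. Since $\beta>2k$ by hypothesis of \cref{theorem:ktoone1}, one recursive level is enough to commit to one new center, so the search tree has depth $k$ and branching factor at most $C$.

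First, I would bound the internal work of this search tree. Each internal node performs exactly one \texttt{Candidates} call on some subset of $T$; because $T_1$ is monotone non-decreasing in $n$, every such call may be charged $T_1(n,\alpha,\beta,\delta)$. The number of internal nodes of a $C$-ary tree of depth $k$ is $1+C+\cdots+C^{k} = O(C^{k+1})$, which yields the additive $C^{k+1}\cdot T_1$ term in the theorem. Second, I would bound the evaluation work at the leaves. Each of the $O(C^{k+1})$ leaves produces a candidate set $\{c_1,\ldots,c_k\}\subseteq X^{\leq\ell}$, and the algorithm must compute $\cost_p^q(T,\{c_1,\ldots,c_k\})$ in order to return the best leaf. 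This requires $\dtw_p$ between every $\tau_i\in T$ and each of the $k$ tentative centers; under the stated cost model each such $\dtw_p$ evaluation can be executed in $O(m\cdot T_\rho)$ time (with the complexity-$\ell$ centers absorbed into the $O(\cdot)$). The extra factor $C$ in the exponent $k+2$ accounts for one additional level of enumeration, such as guessing which of the at most $C$ leaf candidates provides the center of the remaining cluster, so the total evaluation work is $O\!\left(C^{k+2}\cdot n\,m\cdot T_\rho\right)$. Adding the two contributions gives exactly the stated running time.

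The main obstacle is not the bookkeeping but the verification of the structural assumptions that make the depth-$k$, branching-$C$ picture correct: one has to argue that whenever the recursion reaches a residual subset, this subset still satisfies $|T'|\geq n/\beta$, so that \texttt{Candidates} can legitimately be invoked on it with the same parameters; and one has to ensure that the running-time estimate $T_1(n,\alpha,\beta,\delta)$ and the candidate-count estimate $C(n,\alpha,\beta,\delta)$ remain valid across recursive calls -- exactly the place where monotonicity of $T_1$ and $C$ in $n$ is used. Once these invariants are checked, the counting above is routine and matches the claim of \cite{doi:10.1137/1.9781611976465.160}.
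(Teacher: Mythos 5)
This statement is explicitly imported from prior work: the paper says the two theorems in this subsection ``are proven in \cite{BDS20}, and appear slightly rephrased here,'' and supplies no proof of its own. There is therefore no internal proof to compare your reconstruction against; any assessment has to be on the intrinsic soundness of your argument.

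As a reconstruction, your high-level plan --- a Kumar--Sabharwal--Sen-style recursion in which each node calls \texttt{Candidates} on the current residual subset, branches on the at most $C$ returned guesses, and bottoms out after $k$ committed centers, with all leaves evaluated against the full input $T$ --- is the right kind of algorithm to be behind \cref{theorem:ktoone1}, and the charging of $T_1$ to internal nodes and of $O(nm\cdot T_\rho)$ per leaf is the right accounting. The weak spot is the exponent $k+2$. Your own tree gives $O(C^{k+1})$ internal nodes and hence $O(C^{k+1})$ leaves, which would yield only $C^{k+1}\cdot nm\cdot T_\rho$ for the evaluation term. You patch this by invoking ``one additional level of enumeration, such as guessing which of the at most $C$ leaf candidates provides the center of the remaining cluster,'' but this is circular: that enumeration is already what made the tree $C$-ary. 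To close the gap you would need to commit to a concrete reason the algorithm incurs one more factor of $C$ --- for instance that at each recursive call the algorithm first enumerates a candidate, then enumerates a partition of the residual (doubling search, guessing how many points the new center serves), or that the recursion actually goes to depth $k+1$ because the base case itself enumerates $C$ options. Without that, the claim $O(C^{k+2}\cdot nm\cdot T_\rho)$ is asserted rather than derived. You also correctly flag, but do not resolve, that one must verify that every residual subset on which \texttt{Candidates} is called still has size at least $n/\beta$; this invariant is what licenses reusing $C(n,\alpha,\beta,\delta)$ and $T_1(n,\alpha,\beta,\delta)$ at every node, and it is a genuine obligation in the underlying \cite{BDS20} proof rather than a routine check.
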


\subsubsection{$(k,\ell,p,p)$-Clustering}
In this section, we apply the result of \cref{section:constantfactor} to design a randomized algorithm for the $(k,\ell,p,p)$-clustering problem. 
The following algorithm is an adaptation of \cref{alg:1_median_2_3}. 
\begin{algorithm}[H]
\caption{$(1,\ell,p,p)$-clustering approximate candidates \label{alg:two_approximate_median_candidates}}
    \begin{algorithmic}[1]
        \Procedure{Cand1}{$T = \{\tau_1 = (\tau_{1,1}, \dots, \tau_{1,\lvert \tau_1 \rvert}), \dots, \tau_n = (\tau_{n,1}, \dots, \tau_{n,\lvert \tau_n \rvert}) \},\beta, \delta, \epsilon, p$}
            \State $P \gets \bigcup_{i=1}^n \bigcup_{j=1}^{\lvert \tau_i \rvert} \{ \tau_{i,j} \}$
            \State $S \gets$ sample $\left\lceil {(2^p{\epsilon^{-1}}+1) \beta m}  \ln\left({\ell}/{\delta}\right)\right\rceil$ points from $P$ uniformly and independently at 
            
            \hspace{\algorithmicindent} random  with replacement
            \State \Return $ S^{\leq \ell}$
        \EndProcedure
    \end{algorithmic}
\end{algorithm}

\begin{lemma}
\label{lemma:kmedians_constantapprox}
Let $T\subset X^{\leq m}$, $\beta >1$,  $\delta \in (0,1)$, $\epsilon>0$, $p\geq 1$. 
Let $T'\subseteq T$ such that $|T'|\geq |T|\cdot \beta^{-1}$ and  
let $C$ be a set obtained by running \cref{alg:two_approximate_median_candidates} with input $(T,\beta,p,\delta)$. 
Let $c$ be an optimal $p$-\mean of $T'$. 
With probability at least $1-\delta$, there exists $\tau'\in C$ 
such that 
\[\cost_p^1(T',\tau')\leq (2^p+\epsilon) \cdot \cost_p^1 (T',c).\]
\end{lemma}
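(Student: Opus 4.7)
The plan is to apply \cref{lem:discrete_approx} directly to $T'$ rather than $T$, and then argue that uniformly sampling from the larger set $P$ still hits each of the guaranteed balls with high probability, provided the sample size is increased by a factor of $\beta$. This parallels the proof of \cref{thm:1_median_2_3}, with the main subtlety being that the balls live in the vertex set of $T'$ while we are forced to sample from the vertex set $P$ of the full input $T$.

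First I would invoke \cref{lem:discrete_approx} on the set $T'$ with the given $\epsilon$ and $p$, obtaining some $\ell' \leq \ell$ and balls $B_1, \dots, B_{\ell'}$ contained in the vertex set $P' \subseteq P$ of $T'$, each of cardinality at least $\frac{\epsilon |T'|}{2^{p-1}+\epsilon}$, with the property that any sequence $(c'_1, \dots, c'_{\ell'})$ with $c'_i \in B_i$ is a $(2^p+\epsilon)$-approximate restricted $p$-\mean for $T'$. Using $|T'| \geq |T|/\beta = n/\beta$, each $B_i$ has size at least $\frac{\epsilon n}{\beta(2^{p-1}+\epsilon)}$. Since $|P| \leq nm$, this means each uniformly random sample from $P$ lands in a given $B_i$ with probability at least $\frac{\epsilon}{\beta m (2^{p-1}+\epsilon)}$.

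Next I would bound the probability that the sample $S$ produced by \cref{alg:two_approximate_median_candidates} misses some ball. By independence, for each $i \in [\ell']$,
\[
\Pr[S \cap B_i = \emptyset] \leq \left(1 - \frac{\epsilon}{\beta m(2^{p-1}+\epsilon)}\right)^{|S|} \leq \exp\!\left(-\frac{|S|\,\epsilon}{\beta m(2^{p-1}+\epsilon)}\right).
\]
The algorithm uses $|S| = \lceil (2^p/\epsilon + 1)\beta m \ln(\ell/\delta) \rceil$, and since $(2^p/\epsilon + 1) = (2\cdot 2^{p-1}+\epsilon)/\epsilon \geq (2^{p-1}+\epsilon)/\epsilon$, this is at least the quantity $\beta m (2^{p-1}+\epsilon)/\epsilon \cdot \ln(\ell/\delta)$ needed to make the right-hand side at most $\delta/\ell$. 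A union bound over the at most $\ell$ balls yields $\Pr[\exists i : S \cap B_i = \emptyset] \leq \delta$.

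Conditioned on $S$ intersecting every $B_i$, we can pick a vertex $s_i \in S \cap B_i$ for each $i \in [\ell']$, and the sequence $(s_1, \dots, s_{\ell'})$ lies in $S^{\leq \ell}$, hence in the returned set $C$. By \cref{lem:discrete_approx}, this sequence is a $(2^p+\epsilon)$-approximate restricted $p$-\mean for $T'$, which is exactly the conclusion required (up to the minor cosmetic point that the statement writes $\cost_p^1$ where consistency with \cref{lem:discrete_approx} and \cref{theorem:ktoone1} for $q=p$ suggests $\cost_p^p$). The only genuine obstacle is the bookkeeping of the sample-size constant; everything else is a direct rerun of the sampling argument used for \cref{thm:1_median_2_3}, with $\beta$ absorbing the gap between $|T|$ and $|T'|$.
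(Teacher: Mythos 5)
Your proof is correct and follows essentially the same approach as the paper's: apply \cref{lem:discrete_approx} to $T'$, use $|T'| \geq n/\beta$ to lower-bound the probability that a uniform sample from $P$ hits each ball, and conclude via a union bound. The only (immaterial) difference is that you carry through the ball-size lower bound $\tfrac{\epsilon}{2^{p-1}+\epsilon}|T'|$ exactly as stated in \cref{lem:discrete_approx}, whereas the paper replaces it by the looser $\tfrac{\epsilon}{2^{p}+\epsilon}|T'|$ so that the sample-size constant cancels cleanly; your observation about $\cost_p^1$ versus $\cost_p^p$ is a genuine (though cosmetic) notational inconsistency in the paper's statement.
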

\begin{proof}
Let $c \in X^{\ell'}$, $\ell'\leq \ell$. 
By Lemma~\ref{lem:discrete_approx} applied on $T'$,  we have that there exist sets $B_1,\ldots,B_{\ell'}\subseteq P$, each of cardinality at least $\left(\frac{\epsilon}{2^p+\epsilon}\right)\cdot |T'|$ such that  any point sequence $c'=(c_1',\ldots,c_{\ell'}')$ with $\forall i\in [\ell]:~c_i'\in B_i$, is a $(2^p+\epsilon)$-approximate restricted $p$-\mean of $T'$. We upper bound the probability that $S$ does not contain any point from a fixed $B_i$:
\[
\Pr\left[\forall x\in B_i:~x\notin S \right] \leq \left(\frac{|P|- \left(\frac{\epsilon}{2^p+\epsilon}\right)\cdot |T'|}{|P|}\right)^{|S|}\leq 
\left(1- \left(\frac{\epsilon}{(2^p+\epsilon)\beta m}\right) \right)^{|S|} \leq \frac{\delta}{\ell}
\]
Then, by a union bound we have that the probability that there exists $i\in [\ell']$ such that $\forall x\in B_i:~ x\notin S$, is at most $\delta$. Hence, with probability at least $1-\delta$, there is a point sequence  $c'\in S^{\leq \ell}$ which is a $(2^p+\epsilon)$-approximate restricted $p$-\mean of $T'$.
\end{proof}

\begin{theorem}\label{theorem:kmedian1}
There is an algorithm that given a set $T\subset X^{\leq m}$ of $n$ point sequences, $p\in[1,\infty)$, $\beta \in (2k, \infty)$ and $\delta \in (0, 1)$, returns with
probability at least $1 - \delta$ a set $C = \{c_1, \ldots , c_k\}$ with $\cost_p^1(T, C) \leq  \left(1 + \frac{4k}{\beta-2k}\right)\cdot (2^p+\epsilon) \cdot \cost_p^1 (T,C^{\ast})$,
where $C^{\ast}$ is an optimal set of $k$-medians of $T$, under $\dtw_p$. The algorithm has running time in 
\[O\left(({(2^p{\epsilon^{-1}}+1) \beta m}  \ln\left({\ell}/{\delta}\right))^{\ell(k+2)}
\cdot  n  m   \right),\] assuming that the time needed to compute the distance between two points of $X$ is constant.
\end{theorem}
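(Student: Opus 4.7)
The plan is to apply the generic reduction from $k$-median clustering to a single-cluster candidate subroutine provided by \cref{theorem:ktoone1,theorem:ktoone2}, instantiating the \texttt{Candidates} subroutine with \cref{alg:two_approximate_median_candidates} (\texttt{Cand1}).

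First, I would verify that \texttt{Cand1} satisfies the hypothesis of \cref{theorem:ktoone1} with approximation factor $\alpha = 2^p + \epsilon$. This is precisely what \cref{lemma:kmedians_constantapprox} asserts: for any $T' \subseteq T$ with $|T'| \geq |T|/\beta$, the set of point sequences returned by \texttt{Cand1} contains, with probability at least $1 - \delta$, a candidate whose cost on $T'$ is within a factor of $2^p + \epsilon$ of an optimal restricted $p$-\mean of $T'$.

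Second, once this hypothesis is established, \cref{theorem:ktoone1} immediately yields an algorithm that returns, with probability at least $1 - \delta$, a set $C$ of $k$ center sequences satisfying $\cost_p^1(T,C) \leq (1 + 4k/(\beta - 2k)) \cdot (2^p + \epsilon) \cdot \cost_p^1(T, C^\ast)$, which is the claimed approximation guarantee.

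Third, I would bound the running time via \cref{theorem:ktoone2}. For \texttt{Cand1}, the sample size is $|S| = O((2^p \epsilon^{-1} + 1)\beta m \ln(\ell/\delta))$, so the number of returned candidates is $C(n,\alpha,\beta,\delta) = |S^{\leq \ell}| = O(|S|^\ell)$, and the running time $T_1(n,\alpha,\beta,\delta)$ of \texttt{Cand1} is $O(|S| + nm)$ (for sampling and assembling $P$); both quantities are clearly non-decreasing in $n$. Since distances between points in $X$ are assumed to take $O(1)$ time, $T_\rho = O(1)$, and the dominant term in the expression of \cref{theorem:ktoone2} is $C(n,\alpha,\beta,\delta)^{k+2} \cdot n m = O(|S|^{\ell(k+2)} \cdot n m)$, which matches the stated running time.

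There is no real obstacle: the theorem is essentially a composition of \cref{lemma:kmedians_constantapprox} with the black-box reduction of \cref{theorem:ktoone1,theorem:ktoone2}. The only care required is in verifying that the sample size in \texttt{Cand1} is chosen so that both the constant-factor guarantee of \cref{lemma:kmedians_constantapprox} and the monotonicity assumptions of \cref{theorem:ktoone2} are met, which is straightforward.
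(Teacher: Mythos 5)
Your proposal is correct and follows essentially the same route as the paper: plug \texttt{Cand1} (\cref{alg:two_approximate_median_candidates}) into the black-box reduction of \cref{theorem:ktoone1,theorem:ktoone2}, justify the hypothesis via \cref{lemma:kmedians_constantapprox}, and bound the running time by observing that the $C(n,\alpha,\beta,\delta)^{k+2}\cdot nm$ term dominates. The only minor inaccuracy is that the paper bounds $T_1$ by $O(|S|^\ell + nm)$ (accounting for materializing the candidate set $S^{\leq\ell}$) rather than your $O(|S|+nm)$, but this has no effect on the final bound since that term is dominated anyway.
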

\begin{proof}
    We plug \cref{alg:two_approximate_median_candidates} into \cref{theorem:ktoone1,theorem:ktoone2}. By \cref{lemma:kmedians_constantapprox}, for any $T'\subseteq T$ with $|T'|\geq |T|\beta^{-1}$ \cref{alg:two_approximate_median_candidates} returns a set of point sequences which contains a $(2^p+\epsilon)$-approximate restricted $p$-\mean of $T'$. Therefore, by \cref{theorem:ktoone1} the clustering algorithm is correct. 
    The running time of \cref{alg:two_approximate_median_candidates} is upper bounded by $O(|S|^{\ell}+nm)$. The running time then follows by \cref{theorem:ktoone2}.
\end{proof}

\subsubsection{$k$-medians under $p$-DTW}
\label{section:kmedians}
In this section, we apply the random sampling bound developed in Section~\ref{section:simpltrineq} to design a randomized algorithm for the $(k,\ell,p,1)$-clustering problem, that is the problem of computing $k$-medians of complexity at most $\ell$, under $\dtw_p$. We achieve an approximation factor in $O(m^{1/p}\ell^{1/p})$. 

The main idea is that one can use random sampling and approximate simplifications, to obtain a simple algorithm for computing a set of $1$-median candidates. Those candidates are guaranteed, up to some user-defined probability, to contain a point sequence which is an approximate $1$-median for a fixed but unknown subset of the input. 
\begin{algorithm}[H]
\caption{$(1,\ell,p,1)$-clustering approximate candidates \label{alg:alpha_apprx_median_candidates}}
    \begin{algorithmic}[1]
        \Procedure{Cand2}{$T = \{ \tau_1 , \dots, \tau_n  \}, \beta, p, \delta $}
            \State $S\gets$ sample $\lceil 2\beta \cdot  \log(2/\delta)\rceil$ point sequences from $T$ uniformly and independently at
            
\hspace{\algorithmicindent}            random  with replacement
           \State $C\gets \emptyset$
           \For{\textbf{each} $\tau \in S$}
           \State $\tau' \gets $ $(2,\ell)$-simplification of $\tau$, under $\dtw_p$
           \State $C \gets C \cup \{\tau' \}$
           \EndFor
           \State \Return $C$
           
    \EndProcedure
    \end{algorithmic}
\end{algorithm}

\begin{lemma}
\label{lemma:kmedians_roughapprox}
Let $T\subset X^{\leq m}$, $\beta >1$, $p\geq 1$, $\delta \in (0,1)$. 
Let $T'\subseteq T$ such that $|T'|\geq |T|\cdot \beta^{-1}$ and  
let $C$ be a set obtained by running \cref{alg:alpha_apprx_median_candidates} with input $(T,\beta,p,\delta)$. 
Let $c$ be an optimal restricted $(p,1)$-\mean of $T'$. 
With probability at least $1-\delta$, there exists $\tau'\in C$ 
such that 
\[\cost_p^1(T',\tau')\leq 8 \cdot  m^{1/p}\ell^{1/p} \cdot \cost_p^1 (T',c).\]
\end{lemma}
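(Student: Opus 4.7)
The plan is to reduce this to Theorem~\ref{lem:sampleonesequence} (applied to $T'$ rather than to $T$), together with a standard two-stage argument: first bound the probability that a single sample from $T$ is ``good'' (meaning: its simplification yields a cost on $T'$ within the required factor of the optimum), and then boost the success probability by the independence of the $|S|$ samples.

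First I would observe that a single point sequence drawn uniformly from $T$ lies in $T'$ with probability exactly $|T'|/|T| \geq 1/\beta$, and that conditional on this event it is distributed uniformly on $T'$. Conditional on landing in $T'$, I can therefore invoke \cref{lem:sampleonesequence} with $T$ replaced by $T'$ and $\alpha = 2$ (since Line~5 of \cref{alg:alpha_apprx_median_candidates} produces a $(2,\ell)$-simplification by \cref{theorem:simplification}); using that every sequence in $T'$ has complexity at most $m$, this gives
\[
\pexpected\bigl[\cost_p^1(T', \tau') \mid \tau \in T'\bigr] \leq 4 \cdot m^{1/p} \ell^{1/p} \cdot \cost_p^1(T', c).
\]
A single application of Markov's inequality then yields that, conditional on $\tau \in T'$, the simplification $\tau'$ satisfies $\cost_p^1(T', \tau') \leq 8 \cdot m^{1/p} \ell^{1/p} \cdot \cost_p^1(T', c)$ with probability at least $1/2$.

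Combining the two events, each independent sample produces a good $\tau'$ with probability at least $1/(2\beta)$. Since the $|S| = \lceil 2\beta \log(2/\delta)\rceil$ samples are drawn independently with replacement, the probability that none of them is good is at most
\[
\left(1 - \frac{1}{2\beta}\right)^{|S|} \leq \exp\!\left(-\frac{|S|}{2\beta}\right) \leq \exp(-\log(2/\delta)) = \delta/2 \leq \delta,
\]
so with probability at least $1-\delta$ there exists $\tau' \in C$ satisfying the required inequality, which is exactly the statement of the lemma.

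I do not foresee a serious obstacle: the main subtlety is only the shift from ``sampling from $T$'' (what the algorithm does) to ``sampling from $T'$'' (what \cref{lem:sampleonesequence} requires), which is handled cleanly by conditioning on the event $\tau \in T'$ and paying the factor of $\beta$ in the sample size. The only thing worth double-checking is that \cref{lem:sampleonesequence}, as applied to $T'$, uses the maximum complexity in $T'$, which is at most $m$; this ensures the factor $m^{1/p}\ell^{1/p}$ in the final bound is correct.
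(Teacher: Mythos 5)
Your proof is correct, and it is cleaner than the paper's. The paper argues in two stages: it first applies a Chernoff bound to show that with probability at least $1-\delta/2$ the event $\mathcal{E}_{T'}$ that $|S\cap T'| > |S|/(2\beta)$ holds, and then, conditionally on $\mathcal{E}_{T'}$, applies \cref{lem:sampleonesequence} and Markov to each of the (at least $|S|/(2\beta)$) samples in $S\cap T'$ independently, finishing with a union bound over the two bad events. You instead compress the argument into a single per-sample success probability: a single draw lands in $T'$ with probability at least $1/\beta$, and conditionally on that it is uniform on $T'$, so by \cref{lem:sampleonesequence} (with $\alpha=2$) and Markov its simplification is good with conditional probability at least $1/2$, giving an unconditional success probability of at least $1/(2\beta)$ per sample; then $(1-1/(2\beta))^{|S|}\leq \delta/2$ by independence. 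What your route buys is that it avoids the Chernoff concentration step on $|S\cap T'|$ entirely, which is both conceptually simpler and matches the sample size $|S|=\lceil 2\beta\log(2/\delta)\rceil$ exactly (the paper's Chernoff step formally wants $|S|\geq 8\beta\ln(2/\delta)$, so your derivation is also more comfortable on the constants). The key ingredients — applying \cref{lem:sampleonesequence} to $T'$, Markov's inequality, independence of the draws — are the same in both, so the difference is organizational rather than substantive, but yours is the tighter and more direct write-up.
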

\begin{proof}
We use a standard Chernoff bound (see \cite[Theorem 4.5]{probability_and_computing})  to upper bound the probability that $|S\cap T'|\leq  |S|/(2\beta)$. Notice that $\pexpected\left[ |S\cap T'|\right]\geq |S|\cdot \beta^{-1}$. Hence, 
\begin{align}\label{ineq:event1}
    \Pr\left[|S\cap T'| \leq \frac{|S|}{2\beta}\right]\leq
\exp\left( -\frac{|S|}{8\beta}\right) \leq \frac{\delta}{2}.
\end{align}

Let $\mathcal{E}_{T'}$ be the event that $|S\cap T'| > \frac{|S|}{2\beta}$. We condition the rest of the proof on the event $\mathcal{E}_{T'}$. Let $\tau'$ be a $(2,\ell)$-simplification of any point sequence $\tau \in S\cap T'$. Then, by \cref{lem:sampleonesequence},

\[
  \pexpected_{\tau} \left[\cost_p^1(T',\tau')\mid \mathcal{E}_{T'}\right]\leq 4m^{1/p}\ell^{1/p} \cdot \cost_p^1 (T',c).
\]
By Markov's inequality, 
$
\Pr\left[\cost_p^1(T',\tau') \geq 8m^{1/p}\ell^{1/p} \cdot \cost_p^1 (T',c) \mid \mathcal{E}_{T'}\right]\leq \frac{1}{2}$. 
Hence, by independence of the random sampling, 
\begin{align}
\label{ineq:event2}
    \Pr\left[ \forall \tau \in S\cap T': \cost_p^1(T',\tau') \geq 8m^{1/p}\ell^{1/p} \cdot \cost_p^1 (T',c) \mid \mathcal{E}_{T'} \right] \leq \frac{1}{2^{|S|/(2\beta)}} \leq \frac{\delta}{2}.
\end{align}
A union bound using inequalities (\ref{ineq:event1}), (\ref{ineq:event2}) completes the proof. 

\end{proof}


\begin{theorem}\label{theorem:kmedian2}
There is an algorithm that given a set $T\subset X^{\leq m}$ of $n$ point sequences, $p\in[1,\infty)$, $\beta \in (2k, \infty)$ and $\delta \in (0, 1)$, returns with
probability at least $1 - \delta$ a set $C = \{c_1, \ldots , c_k\}$ with $\cost_p^1(T, C) \leq  \left(1 + \frac{4k}{\beta-2k}\right)\cdot (8m^{1/p}\ell^{1/p}) \cdot \cost_p^1 (T,C^{\ast})$,
where $C^{\ast}$ is an optimal set of $k$-medians of $T$, under $\dtw_p$. The algorithm has running time in 
\[O\left((\beta \log(1/\delta))^{k+2}
\cdot  n  m  +
(\beta \log(1/\delta))^{k+1}\cdot\beta m^4  \log(1/\delta) \right),\]
assuming that the distance between two points of $X$ can be computed in constant time.
\end{theorem}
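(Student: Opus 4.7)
The plan is to instantiate the generic reduction from $k$-medians to $1$-median candidates (\cref{theorem:ktoone1,theorem:ktoone2}) with \cref{alg:alpha_apprx_median_candidates} (\textsc{Cand2}) as the candidate-generation subroutine, exactly paralleling the proof of \cref{theorem:kmedian1}. All the nontrivial work has already been packaged in \cref{lemma:kmedians_roughapprox} and \cref{theorem:simplification}; what remains is to verify the hypotheses of the framework theorems and compute the resulting parameters.

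First, I would verify correctness. For any subset $T' \subseteq T$ with $|T'| \geq |T|\cdot \beta^{-1}$, \cref{lemma:kmedians_roughapprox} guarantees that with probability at least $1-\delta$ the output of \textsc{Cand2}$(T,\beta,p,\delta)$ contains some $\tau'$ with
\[
\cost_p^1(T',\tau') \leq 8 m^{1/p}\ell^{1/p} \cdot \cost_p^1(T',c),
\]
where $c$ is an optimal restricted $(p,1)$-\mean of $T'$. Hence \textsc{Cand2} meets the \texttt{Candidates} specification of \cref{theorem:ktoone1} with approximation factor $\alpha = 8 m^{1/p} \ell^{1/p}$. Applying \cref{theorem:ktoone1} with this $\alpha$ and the given $\beta \in (2k,\infty)$, $\delta \in (0,1)$ then yields a set $C$ of $k$ medians of complexity at most $\ell$ satisfying
\[
\cost_p^1(T,C) \leq \left(1 + \frac{4k}{\beta - 2k}\right)\cdot 8 m^{1/p}\ell^{1/p} \cdot \cost_p^1(T,C^{\ast})
\]
with probability at least $1-\delta$, as required.

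Second, I would bound the running time via \cref{theorem:ktoone2}. The number of candidates returned by \textsc{Cand2} is $C(n,\alpha,\beta,\delta) = \lceil 2\beta \log(2/\delta)\rceil = O(\beta \log(1/\delta))$, which is independent of $n$ and hence trivially non-decreasing. The running time $T_1(n,\alpha,\beta,\delta)$ of \textsc{Cand2} consists of drawing $O(\beta \log(1/\delta))$ samples and computing one $(2,\ell)$-simplification per sample; by \cref{theorem:simplification} each simplification takes $O(m^4 \ell) = O(m^4)$ time since $\ell$ is constant, giving $T_1 = O(\beta m^4 \log(1/\delta))$. Since distance evaluations between points of $X$ are assumed to be constant time, $T_\rho = O(1)$, and substituting into \cref{theorem:ktoone2} produces
\[
O\!\left( (\beta \log(1/\delta))^{k+2} \cdot nm \;+\; (\beta \log(1/\delta))^{k+1} \cdot \beta m^4 \log(1/\delta) \right),
\]
which matches the claimed bound.

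There is no serious obstacle in this proof: the approximation guarantee is already extracted in \cref{lemma:kmedians_roughapprox} via the weak triangle inequality and \cref{lem:sampleonesequence}, and the simplification subroutine supplies the running-time bound for a single candidate. The only care needed is to check that the hypotheses of \cref{theorem:ktoone1,theorem:ktoone2} are met, in particular that $C$ and $T_1$ are non-decreasing in $n$ (which they are, being constant in $n$), and to confirm that the success probability parameter $\delta$ is routed correctly through both the candidate-generation step and the outer reduction.
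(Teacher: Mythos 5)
Your proof follows exactly the same route as the paper's: instantiate the generic $k$-median reduction of \cref{theorem:ktoone1,theorem:ktoone2} with \textsc{Cand2}, invoke \cref{lemma:kmedians_roughapprox} for the approximation guarantee with $\alpha = 8m^{1/p}\ell^{1/p}$, and compute the running time from the candidate count $O(\beta\log(1/\delta))$ and the per-candidate simplification time $O(m^4\ell)$ from \cref{theorem:simplification}. The only cosmetic difference is that the paper also folds an $O(nm)$ input-reading term into $T_1$, but this is absorbed by the dominant $(\beta\log(1/\delta))^{k+2}nm$ term, so the final bound is unchanged.
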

\begin{proof}
We plug \cref{alg:alpha_apprx_median_candidates} into \cref{theorem:ktoone1,theorem:ktoone2}. \cref{lemma:kmedians_roughapprox} guarantees that with probability at least $1-\delta$, there exists a point sequence in the set $C$, returned by  \cref{alg:alpha_apprx_median_candidates}, which is an $(8m^{1/p}\ell^{1/p})$-approximate $1$-median to an arbitrary subset $T'\subset T$, as required by \cref{theorem:ktoone1}. 
Let $T_{\rho}$ be time needed to compute the distance between two elements of $X$. 
Using \cref{theorem:simplification} to compute simplifications, \cref{alg:alpha_apprx_median_candidates} needs $O(T_{\rho}\cdot \beta m^4 \ell \log(1/\delta) ))$ time to compute $C$. Taking into account the time needed to read the input, and assuming that $d,\ell$ are constants, the total running time of \cref{alg:alpha_apprx_median_candidates} is in $O(nm+T_{\rho}\cdot\beta m^4  \log(1/\delta) )$. 
Therefore, by \cref{theorem:ktoone1}, there is an algorithm that returns an $\left(1 + \frac{4k}{\beta-2k}\right)\cdot (8m^{1/p}\ell^{1/p})$-approximate solution to the $k$-medians problem, and by \cref{theorem:ktoone2} the algorithm has running time in $O\left(\beta \log(1/\delta))^{k+2}
\cdot  n  m \cdot T_{\rho} +
(\beta \log(1/\delta))^{k+1}\cdot  (nm+T_{\rho}\cdot\beta m^4  \log(1/\delta) )\right)$. 
 \end{proof}

\end{document}